\newcommand{\natur}{\mathbb{N}}
\newcommand{\menge}[1]{\left\{#1\right\}}
\newcommand{\menges}[1]{\{#1\}}
\newcommand{\betrs}[1]{|#1|}
\newcommand{\klammer}[1]{\left(#1\right)}
\newcommand{\klammers}[1]{(#1)}
\newcommand{\intervLr}[1]{\left[ #1 \right)}
\newcommand{\eps}{\varepsilon}
\newcommand{\beps}{\bar{\eps}}
\newcommand{\freps}{\frac{1}{\eps}}
\newcommand{\frepse}[1]{\frac{1}{\eps^{#1}}}
\newcommand{\unterks}[1]{\lfloor #1 \rfloor}
\newcommand{\oberks}[1]{\lceil #1 \rceil}
\newcommand{\OPT}{\ensuremath{\mathrm{OPT}}}
\newcommand{\OPTvvs}[2]{\OPT\klammers{#1, #2}}
\newcommand{\OPTvs}[1]{\OPTvvs{#1}{v}}
\newcommand{\OPTvv}[2]{\OPT\klammer{#1, #2}}
\newcommand{\OPTv}[1]{\OPTvv{#1}{v}}
\newcommand{\OPTkkvvs}[3]{\OPT_{\leq #1}\klammers{#2, #3}}
\newcommand{\OPTkkvs}[2]{\OPTkkvvs{#1}{#2}{v}}
\newcommand{\OPTkvs}[1]{\OPTkkvs{k}{#1}}
\newcommand{\OPTkkvv}[3]{\OPT_{\leq #1}\klammer{#2,#3}}
\newcommand{\OPTkkv}[2]{\OPTkkvv{#1}{#2}{v}}
\newcommand{\OPTSv}[1]{\OPT_{\mathrm{St}}\klammer{#1,v}}
\newcommand{\OPTSvs}[1]{\OPT_{\mathrm{St}}\klammers{#1,v}}
\newcommand{\sumn}{\sum_{j=1}^n}
\newcommand{\Oh}[1]{O\klammer{#1}}
\newcommand{\Ohs}[1]{O(#1)}
\newcommand{\fraczk}{\frac{1}{2^{\kappa-1}}}
\newcommand{\fraclogeps}{\frac{1}{\log_2(\frac{2}{\eps}) + 1} }
\newcommand{\efraclogeps}[1]{\frac{#1}{\log_2(\frac{2}{\eps}) + 1}}
\newcommand{\zke}[1]{2^{#1}}
\newcommand{\ta}{\tilde{a}}
\newcommand{\ba}{\bar{a}}
\newcommand{\bas}{\ba'}
\newcommand{\bp}{\bar{p}}
\newcommand{\bs}{\bar{s}}
\newcommand{\tp}{\tilde{p}}
\newcommand{\ts}{\tilde{s}}
\newcommand{\ame}{a_{\mathrm{meff}}}
\newcommand{\aeff}{a_{\mathrm{eff}}}
\newcommand{\aeffc}{a_{\mathrm{eff-c}}}
\newcommand{\akg}{a^{(k)}_\gamma}
\newcommand{\aege}[2]{a^{(#1)}_{#2}}
\newcommand{\takg}{\ta^{(k)}_\gamma}
\newcommand{\taege}[2]{\ta^{(#1)}_{#2}}
\newcommand{\taeg}[1]{\taege{#1}{\gamma}}
\newcommand{\tI}{G}
\newcommand{\ttI}{H}
\newcommand{\ILred}{I_{L,\mathrm{red}}}
\newcommand{\tIk}{\tI^{(k)}}
\newcommand{\tIe}[1]{\tI^{(#1)}}
\newcommand{\ttIe}[1]{\ttI^{(#1)}}
\newcommand{\Ik}{I^{(k)}}
\newcommand{\Ie}[1]{I^{(#1)}}
\newcommand{\Lk}{L^{(k)}}
\newcommand{\Le}[1]{L^{(#1)}}
\newcommand{\Lkg}{\Lk_{\gamma}}
\newcommand{\Leg}[1]{\Le{#1}_\gamma}
\newcommand{\Lege}[2]{\Le{#1}_{#2}}
\newcommand{\tLkx}{\tilde{L}^{(\kappa-2)}_{\xi}}
\newcommand{\tLkxe}[1]{\tilde{L}^{(\kappa-2)}_{#1}}
\newcommand{\Fk}{F^{(k)}}
\newcommand{\Fke}[1]{F^{(#1)}}
\newcommand{\Dk}{D^{(k)}}
\newcommand{\Dke}[1]{D^{(#1)}}
\newcommand{\tDk}{\tilde{D}^{(k)}}
\newcommand{\backtrack}{\mathrm{Backtrack}}
\newtheorem{lemma}{Lemma}
\theoremstyle{definition} \newtheorem{remark}[lemma]{Remark}
\theoremstyle{plain} \newtheorem{theorem}[lemma]{Theorem}
\newtheorem{corollary}[lemma]{Corollary}
\newtheorem{definition}[lemma]{Definition}
\newtheorem{assumption}{Assumption}
\newcommand{\Citewa}[1]{\Citeauthor{#1}~\cite{#1}}
\title{A Faster FPTAS for the Unbounded Knapsack Problem\thanks{Research supported by DFG project JA612/14-2, ``Entwicklung und Analyse von effizienten polynomiellen Approximationsschemata f\"ur Scheduling- und verwandte Optimierungsprobleme''}}
\author{Klaus Jansen \qquad Stefan E.\@ J.\@ Kraft\\
Department of Computer Science, Kiel University, 24098 Kiel, Germany\\
{\ttfamily\{kj,stkr\}@informatik.uni-kiel.de}}
\date{}
\begin{document}
\maketitle

\begin{abstract}
The Unbounded Knapsack Problem (UKP) is a well-known variant of the famous 0-1 Knapsack Problem (0-1 KP). In contrast to 0-1 KP, an arbitrary number of copies of every item can be taken in UKP. 
%Its study is also driven by approximation algorithms for optimization problems like Bin Packing and Strip Packing: these algorithms must often optimally or approximately solve UKP instances as subproblems. Therefore, 
Since UKP is NP-hard, fully polynomial time approximation schemes (FPTAS) are of great interest. Such algorithms find a solution arbitrarily close to the optimum $\OPT(I)$, i.e.\@ of value at least $(1-\varepsilon) \OPT(I)$ for $\varepsilon > 0$, and have a running time polynomial in the input length and $\frac{1}{\varepsilon}$.
%: an AFPTAS  $A_\eps$ has for every $\eps > 0$ and KP instance $I$ an approximation guarantee $A_\eps(I) \geq (1-\eps) \OPT(I)$. The running time is polynomial in $n = |I|$ and $\freps$. 
For over thirty years, the best FPTAS was due to Lawler with a running time in $\Ohs{n + \frac{1}{\varepsilon^3}}$ and a space complexity in $\Ohs{n + \frac{1}{\varepsilon^2}}$, where $n$ is the number of knapsack items. 
We present an improved FPTAS with a running time in $\Ohs{n + \frac{1}{\varepsilon^2} \log^3 \frac{1}{\varepsilon}}$ and a space bound in $\Ohs{n + \frac{1}{\varepsilon} \log^2 \frac{1}{\varepsilon}}$. This directly improves the running time of the fastest known approximation schemes for Bin Packing and Strip Packing, which have to approximately solve UKP instances as subproblems. 
\end{abstract}
\section{Introduction}
An instance $I$ of the Knapsack Problem (KP) consists of a list of $n$ items $a_1, \ldots, a_n$, $n \in \natur$, where every item has a profit $p_j \in (0,1]$ and a size $s_j \in (0,1]$. Moreover, we have the knapsack size $c = 1$. In the 0-1 Knapsack Problem (0-1 KP), a subset $V \subset \menge{a_1, \ldots, a_n}$ has to be chosen such that the total profit of $V$ is maximized and the total size of the items in $V$ is at most $c$. Mathematically, the problem is defined by $\max\menges{\sumn p_j x_j | \sumn s_j x_j \leq c; x_j \in \menges{0,1} \ \forall j}$. 
%In the bounded variant (BKP), up to $d_j \in \natur$ copies of each item $a_j$ may be taken (i.e.\ $x_j \in \menges{0,\ldots,d_j}$). 
In this paper, we focus on the unbounded variant (UKP) where an arbitrary number of copies of every item is allowed, i.e.\ we want to determine $\max\menges{\sumn p_j x_j | \sumn s_j x_j \leq c; x_j \in \natur \ \forall j}$.
\subsection{Known Results} \label{subsec:known_results}
The 0-1 Knapsack Problem and other variants of KP are well-known NP-hard problems \cite{Garey1979}. They can be optimally solved in pseudo-polynomial time by dynamic programming \cite{Bellman1957,Kellerer2004}. Furthermore, fully polynomial time approximation schemes (FPTAS) are known for different variants of KP. An FPTAS is a family of algorithms $(A_\eps)_{\eps > 0}$, where for every $\eps > 0$ the algorithm $A_\eps$ finds for a given instance $I$
a solution of profit $A_\eps(I) \geq (1-\eps) \OPT(I)$. The value $\OPT(I)$ denotes the optimal value for $I$. FPTAS have a running time polynomial in $\freps$ and the input length.

The first FPTAS for 0-1 KP was presented by Ibarra and Kim \cite{Ibarra1975} with a running time in $\Ohs{n \log n + \frepse{2} \cdot \min\menges{\frepse{2} \log (\freps),n}}$ and a space complexity in $\Ohs{n + \frepse{3}}$. 
\Citewa{Lawler1979} improved the running time to $\Ohs{\frepse{4} + \log(\freps)n}$. In 1981, \Citewa{Magazine1981} presented a method to decrease the space complexity of the dynamic program so that their FPTAS runs in time $\Ohs{n^2 \log(n) \frepse{2}}$ and needs space in $\Ohs{\frac{n}{\eps}}$. (The paper focuses on the improved space complexity without a partitioning and reduction of the items as done e.g.\ by Lawler. Without it, Lawler's basic algorithm has in fact a time and space complexity in $\Ohs{\frac{n^2}{\eps}}$.) The currently fastest known algorithm is due to Kellerer and Pferschy \cites{Kellerer1999}{Kellerer2004a}[pp.~166--183]{Kellerer2004} with a space bound in $\Ohs{n + \frepse{2}}$ and a running time in $\Ohs{n \min\menges{\log n, \log \freps} + \frepse{2}\log(\freps)\cdot \min\menges{n, \freps \log(\freps)}}$. Assuming that $n \in \Omega(\freps \log \freps)$, this is in $\Ohs{n \log(\freps) + \frepse{3} \log^2(\freps)}$.

%For BKP, a basic FPTAS is presented by Kellerer et al.\@ in their book \cite[pp.~207--209]{Kellerer2004} and a more sophisticated by Plotkin et al.\@ \cite[pp.~295--297]{Plotkin1995}, where the latter is derived from Lawler's 0-1 KP algorithm \cite{Lawler1979} and runs in $\Ohs{\min\menges{n \frepse{2}, n \log (\freps) + \frepse{4}} }$.

For UKP, Ibarra and Kim \cite{Ibarra1975} presented the first FPTAS by extending their 0-1 KP algorithm. Their UKP algorithm has a running time in $\Ohs{n + \frepse{4} \log \freps}$ and a space complexity in $\Ohs{n + \frepse{3}}$.  Kellerer et al.\@ \cite[pp.~232--234]{Kellerer2004} have moreover described an FPTAS with a running time in $\Ohs{n \log (n) + \frepse{2} (n + \log \freps)}$ and a space bound in $\Ohs{n + \frepse{2}}$. In 1979, Lawler \cite{Lawler1979} presented his FPTAS with a running time in $\Ohs{n + \frepse{3}}$ and a space complexity in $\Ohs{n + \frepse{2}}$. For $n \in \Omega(\freps)$, this is still the best known FPTAS.

The study of KP is not only interesting in itself, it is moreover motivated by column generation for optimization problems like the famous Bin Packing Problem and Strip Packing Problem. In the former problem, a set $J$ of $n$ items of size in $(0,1]$ has to be packed in as few unit-sized bins as possible. In the latter problem, a set $J$ of $n$ rectangles of width $(0,1]$ and height $(0,1]$ has to be packed in a strip of unit width such that the height of the packing is minimized. Many algorithms for optimization problems like Bin Packing have to solve linear programs (LPs), but enumerating all columns of the linear programs would take too much time. One way to avoid this is the consideration of the dual of the LP and to (approximately or exactly) solve a separation problem, e.g.\ KP, to find violated inequalities of the dual. These inequalities correspond to columns in the primal LP: the columns needed for solving the LP are therefore generated and added dynamically. Examples can be found in \mbox{\cite{Gilmore1961,Karmarkar1982}}.
%Instead, the algorithms solve instances of problems like KP whose solutions correspond to the columns needed. 

Since Bin Packing and Strip Packing are NP-complete \cite{Garey1979}, several approximation algorithms have been found for both problems. However, no efficient (i.e.\ polynomal-time) algorithm $A$ for BP or SP can achieve $A(J) \leq c \cdot \OPT(J)$ for $c < \frac{3}{2}$ and all problem instances $J$ unless P = NP \cite{Garey1979}: we have $c \geq \frac{3}{2}$ for the absolute approximation ratio $c$. The bound $\frac{3}{2}$ is due to the fact that a polynomial algorithm could otherwise distinguish between the optimum of 2 or 3 for BP instances and therefore solve the NP-complete Partition Problem in polynomial time \cite{Garey1979}. Since only such small instances prevent an absolute ratio better than $\frac{3}{2}$, larger instances may allow for a better approximation ratio.

So-called asymptotic fully polynomial-time approximation schemes (AFPTAS) $(A_\eps)_{\eps > 0}$ are therefore especially interesting. They find for every $\eps > 0$ and instance $J$ a solution of value at most $(1 + \eps)\OPT(J) + f(\freps)$, and have a running time polynomial in the input length and $\freps$. Roughly speaking, AFPTAS achieve an approximation ratio of $c = (1+\eps)$ for large problem instances.%the asymptotic approximation ratio for an AFPTAS can be seen as the approximation ratio achieved for large problem instances.

For Bin Packing, the first AFPTAS was presented by Karmarkar and Karp \cite{Karmarkar1982} with $f(\freps) = \Ohs{\frepse{2}}$. In 1991, Plotkin et al.\ \cite{Plotkin1995} described an improved algorithm with a smaller additive term $f(\freps) = \Ohs{\frac{1}{\eps} \log(\frac{1}{\eps})}$ and running time in $\Ohs{\frepse{6} \log^6 (\freps) + \log(\frac{1}{\eps}) n}$. 
%Their BKP algorithm was motivated for the column generation routine of the algorithm.
% and with better running time $\oh(\frepse{6} \log^6 (\freps) + \log(\frac{1}{\eps}) n)$. 
The AFPTAS by Shachnai and Yehezkely \cite{Shachnai2007} has the same additive term and a running time in $\Ohs{\frepse{6} \log^3(\freps) + \log(\frac{1}{\eps}) n}$ for general instances. % further to $\oh(\frepse{4} \log^3(\freps) \cdot \min\menges{\frepse{2},\frepse{0.62} \log^{0.62}(\freps) N} + \log(\frac{1}{\eps}) n)$, where $N$ is the longest binary representation of any input element. For general BP instances, the algorithm therefore needs time in $\oh(\frepse{6} \log^3(\freps) + \log(\frac{1}{\eps}) n)$.
Currently, the AFPTAS in \cite{Jansen2012} has the smallest additive term $f(\freps) = \Ohs{\log^2 \freps}$ and the fastest running time in $\Ohs{ \frepse{6} \log \freps + \log\klammers{\frac{1}{\eps}} n }$.

The first AFPTAS for Strip Packing was presented by Kenyon and R{\'e}mila \cite{Kenyon2000} with $f(\freps) = \Ohs{\frepse{2}}$. Bougeret et al.\ \cite{Bougeret2011} and Sviridenko \cite{Sviridenko2012} independently improved the additive term to $f(\freps) = \Ohs{\freps \log \freps}$. The algorithm in \cite{Bougeret2011} needs time in $\Ohs{\frepse{6} \log (\freps) + n \log n}$, which is the currently fastest known AFPTAS. 

Both algorithms in \cite{Bougeret2011,Jansen2012} solve UKP instances for column generation. A faster FPTAS for UKP therefore directly yields faster AFPTAS for Bin Packing and Strip Packing.
\subsection{Our Result} \label{subsec:our_result}
We have derived an improved FPTAS for UKP that is faster and needs less space than Lawler's algorithm.
\begin{theorem}
There is an FPTAS for UKP with a running time in $\Ohs{n + \frepse{2} \log^3(\freps)}$ and a space complexity in $\Ohs{n + \freps \log^2(\freps)}$.
\end{theorem}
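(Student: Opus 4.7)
The plan is to follow Lawler's FPTAS framework but sharpen each of its components and then add a space-saving backtracking technique. Lawler's FPTAS (i) computes a $2$-approximation $\bP$ of $\OPT(I)$ in linear time by an efficiency-sorted greedy heuristic; (ii) discards items of profit below $\eps \bP$, handling them by a final greedy sweep; (iii) reduces the remaining ``large'' items to $\Oh{\freps}$ representatives by keeping only the best item per size/profit bucket; (iv) rounds profits to multiples of a constant times $\eps \bP$; and (v) runs a dynamic program of size $\Oh{\freps}$ items $\times$ $\Oh{\frepse{2}}$ profit values in time $\Oh{\frepse{3}}$. Our target of $\Oh{\frepse{2} \log^3 \freps}$ calls for shaving almost a full factor $\freps$ off Lawler's DP stage.

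First, I would sharpen the item reduction in $\Oh{n + \freps \log \freps}$ time: partition sizes geometrically into $\Oh{\log \freps}$ coarse groups and, within each, keep only a small number of efficient representatives. Combined with a matched profit rounding, this yields a reduced item list with $\Oh{\freps \log \freps}$ entries and $\Oh{\freps}$ distinct profit values. This preprocessing already accounts for the $\Oh{n}$ term in both complexity bounds and separates $\freps$ from $n$ throughout the remainder.

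Second, I would prove a structural lemma: there is a $(1 - \Oh{\eps})$-approximate packing of the reduced instance that decomposes into $\Oh{\log \freps}$ \emph{blocks}, each block consisting of several copies of a single item drawn from one coarse size group. The argument is an exchange within each coarse group, replacing the used items by copies of the group's most efficient representative and losing only an $\Oh{\eps}$ multiplicative factor because the group's sizes lie in a $(1+\eps)$-ratio interval. A volume-packing bound then shows that at most $\Oh{\log \freps}$ groups can be active at a profit scale larger than $\eps \OPT(I)$. In this block form the DP needs to process only $\Oh{\log \freps}$ stages rather than $\Oh{\freps}$ individual items.

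Third, the DP maintains the minimum size achievable for each of $\Oh{\freps}$ rounded profit values and updates one block at a time. Each block update is a structured $(\min,+)$-convolution with a ``ladder'' sequence of length $\Oh{\freps}$, which a monotonic-queue scan computes in $\Oh{\freps \log^2 \freps}$ time per block; summing over the $\Oh{\log \freps}$ blocks gives the claimed running time. A Magazine-style backtracking scheme finally stores only $\Oh{\log \freps}$ DP slices at a time and recomputes the rest on demand, bringing the space bound down to $\Oh{n + \freps \log^2 \freps}$. The main obstacle will be the structural lemma: its block count must be extracted \emph{constructively} from some near-optimal integer packing, and the $\eps$-losses across the $\Oh{\log \freps}$ coarse groups must fit within a single $\eps \OPT(I)$ budget; this calls for careful accounting, plausibly by allocating an $\eps / \log \freps$ slack per group and then aggregating.
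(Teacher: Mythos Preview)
Your plan has the right target---reduce the dynamic program to $O(\log\frac{1}{\eps})$ stages---but the structural lemma you rely on does not go through as stated, and the missing piece is exactly the paper's main new idea.

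The concrete gap is your size-based partition. After discarding items of profit below $T=\Theta(\eps P_0)$, profits lie in $[T,2P_0]$, but \emph{sizes have no lower bound}: a large-profit item can have arbitrarily small size, so a geometric partition of sizes cannot be confined to $O(\log\frac{1}{\eps})$ groups. Switching to profits does not rescue the argument either: for an exchange that loses only an $O(\eps)$ factor you need groups of ratio $1+\Theta(\eps)$, which produces $\Theta(\frac{1}{\eps}\log\frac{1}{\eps})$ groups, not $O(\log\frac{1}{\eps})$. With coarse factor-$2$ profit groups you do get $O(\log\frac{1}{\eps})$ of them, but then the within-group exchange can lose a constant factor. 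Your ``volume-packing bound'' does not close this gap: an optimal solution may use $\Theta(1/\eps)$ large items all sitting in the lowest profit layer, each from a different fine bucket, so after the usual Lawler-style reduction you still face $\Theta(\frac{1}{\eps})$ distinct items and no $O(\log\frac{1}{\eps})$-block decomposition.

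The paper obtains the structure you want by a \emph{constructive} preprocessing it calls gluing. It partitions profits into the $O(\log\frac{1}{\eps})$ coarse layers $L^{(k)}=[2^kT,2^{k+1}T)$ and, within each, into $O(\frac{1}{\eps}\log\frac{1}{\eps})$ fine buckets. Starting from $k=0$, it forms all pairwise sums $a_1\oplus a_2$ of layer-$k$ items (this is legal precisely because the problem is unbounded), keeps only the smallest representative per fine bucket of layer $k+1$, and iterates. The resulting virtual item set $\tilde I$ provably admits near-optimal solutions that use at most one item from each coarse layer, with a multiplicative loss of $(1-\frac{\eps}{4(\kappa+1)})$ per gluing step, aggregating to $1-O(\eps)$ over the $\kappa=O(\log\frac{1}{\eps})$ steps---essentially the $\eps/\log\frac{1}{\eps}$ slack-per-group accounting you anticipated, but realised through this gluing rather than through an exchange on the original items. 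A further trick (packaging enough copies of the most efficient small item into a single large item $\aeffc$) guarantees the partial profit is always at least $\frac{1}{4}P_0$, so the approximate DP can use a single uniform grid of $O(\frac{1}{\eps}\log\frac{1}{\eps})$ profit buckets. The DP is then $0$--$1$ over $O(\log\frac{1}{\eps})$ stages, each stage combining $O(\frac{1}{\eps}\log\frac{1}{\eps})$ items with $O(\frac{1}{\eps}\log\frac{1}{\eps})$ buckets, giving the stated time; storing all $O(\log\frac{1}{\eps})$ slices already meets the space bound without any Magazine-style recomputation.
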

Not only the improved running time, but also the improved space complexity is interesting because ``for higher values of $\freps$ the space requirement is usually considered to be a more serious bottleneck for practical applications than the running time'' \cite[p.~168]{Kellerer2004}. Nevertheless, the improved time complexity has direct practical consequences. 
Let $KP(d, \eps)$ be the running time to find a $(1-\eps)$ approximate solution to a UKP instance with $d$ items. The Bin Packing algorithm in \cite{Jansen2012} has the running time $\Ohs{ KP\klammers{d, \frac{\beps}{6}} \cdot \frepse{3} \log \freps + \log\klammers{\frac{1}{\eps}} n }$ if we assume that $KP\klammers{d, \frac{\beps}{6}} \in \Omega(\frepse{2})$ (where $\beps \in \Theta(\eps)$ and $d \in \Ohs{\freps \log \freps}$). By using the new FPTAS for UKP, we get the following result:
\begin{corollary}
There is an AFPTAS $(A_\eps)_{\eps > 0}$ for Bin Packing that finds for $\eps \in (0,\frac{1}{2}]$ a packing of $J$ in $A_\eps(J) \leq (1 + \eps) \OPT(J) + \Ohs{\log^2(\frac{1}{\eps})}$ bins. Its running time is in
	\[\Oh{  \frepse{5} \log^4 \freps + \log\klammer{\frac{1}{\eps}} n }\enspace.
\]
\end{corollary}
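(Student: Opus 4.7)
The plan is a direct substitution into the running-time formula for the AFPTAS of~\cite{Jansen2012} quoted immediately before the corollary statement. That formula reads
\[
\Oh{ KP\klammer{d, \tfrac{\beps}{6}} \cdot \frepse{3} \log \freps + \log\klammer{\tfrac{1}{\eps}} n }
\]
under the hypothesis $KP(d,\beps/6) \in \Omega(\frepse{2})$, where $d \in \Ohs{\freps \log \freps}$ is the size of the UKP instances arising from column generation and $\beps \in \Theta(\eps)$. The approximation ratio and the additive term $\Ohs{\log^2 \freps}$ are properties of \cite{Jansen2012} that are inherited unchanged, since our contribution only replaces the UKP subroutine; nothing else in the AFPTAS needs to be modified.

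First I would plug in the new bound $KP(d,\eps) \in \Ohs{d + \frepse{2} \log^3 \freps}$ from the Theorem. With $d \in \Ohs{\freps \log \freps}$ and $\beps \in \Theta(\eps)$, the linear term $d$ is absorbed by $\frepse{2} \log^3 \freps$, so $KP(d, \beps/6) \in \Ohs{\frepse{2} \log^3 \freps}$. In particular this is in $\Omega(\frepse{2})$, so the precondition of the formula from \cite{Jansen2012} is satisfied for all $\eps \in (0, \tfrac{1}{2}]$ and the substitution is legitimate.

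Then I would multiply out the dominant term:
\[
\Ohs{\frepse{2} \log^3 \freps} \cdot \Ohs{\frepse{3} \log \freps} \;=\; \Oh{\frepse{5} \log^4 \freps},
\]
so the total running time becomes $\Ohs{\frepse{5} \log^4 \freps + \log(\freps) n}$, exactly as stated. There is no real obstacle; the only point requiring attention is verifying that the $\Omega(\frepse{2})$ hypothesis still holds after replacing Lawler's bound with ours, which the computation above confirms.
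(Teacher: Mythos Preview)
Your proposal is correct and matches the paper's own reasoning essentially verbatim: the paper does not give a formal proof of this corollary but simply states the running-time formula from \cite{Jansen2012} immediately before the corollary and asserts that plugging in the new UKP bound yields the result. Your write-up is, if anything, slightly more explicit than the paper in checking the $\Omega(\frepse{2})$ hypothesis and noting that the approximation guarantee is inherited unchanged.
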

Similarly, the Strip Packing algorithm in \cite{Bougeret2011} (see also \cite{Jansen2006}) has a running time in $\Ohs{d (\frac{1}{\eps^2} + \ln d) \max \menges{KP(d, \frac{\beps}{6}), d \ln \ln (\frac{d}{\eps})} + n \log n}$ where again $d \in \Ohs{\freps \log \freps}$ and $\beps \in \Theta(\eps)$. The new FPTAS yields the following improved AFPTAS:
\begin{corollary}
There is an AFPTAS $(A_\eps)_{\eps > 0}$ for Strip Packing that finds a packing for $J$ of total height $A_\eps(J) \leq (1 + \eps) \OPT(J) + \Ohs{\freps \log(\frac{1}{\eps})}$. Its running time is in
	\[\Oh{  \frepse{5} \log^4 \freps + \log\klammer{\frac{1}{\eps}} n }\enspace.
\]
\end{corollary}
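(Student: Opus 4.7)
The plan is to plug the running time from Theorem~1 into the complexity bound of the Strip Packing algorithm of Bougeret et al.\ cited above and simplify. With $d \in \Ohs{\freps \log \freps}$ and $\beps \in \Theta(\eps)$, Theorem~1 yields
\[
 KP\klammer{d, \tfrac{\beps}{6}} \in \Oh{d + \frepse{2} \log^3 \freps} = \Oh{\frepse{2} \log^3 \freps}\enspace,
\]
since the $d = \Ohs{\freps \log \freps}$ term is dominated. Hence the $\max$ in the Bougeret et al.\ bound is determined by $KP\klammer{d, \tfrac{\beps}{6}}$: indeed $d \ln \ln(d/\eps) \in \Ohs{\freps \log \freps \cdot \log \log \freps}$, which is strictly smaller.

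Next, I would bound the prefactor $d(\tfrac{1}{\eps^2} + \ln d)$. With $\ln d \in \Ohs{\log \freps}$ and $d \in \Ohs{\freps \log \freps}$, this becomes
\[
 d\klammer{\tfrac{1}{\eps^2} + \ln d} \in \Oh{\freps \log \freps \cdot \tfrac{1}{\eps^2}} = \Oh{\frepse{3} \log \freps}\enspace.
\]
Multiplying with the $\max$-term then gives
\[
 \Oh{\frepse{3} \log \freps \cdot \frepse{2} \log^3 \freps} = \Oh{\frepse{5} \log^4 \freps}\enspace,
\]
which matches the first summand of the stated running time.

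For the $+ \log(\freps) n$ summand I would appeal to the fact that the initial reduction of the $n$ input rectangles (grouping, linear rounding, and selecting the $\Ohs{\freps \log \freps}$ relevant sizes) can be performed in time $\Ohs{\log(\freps) n}$ by using a median-based selection instead of a full sort of all $n$ items; the remainder of the algorithm then works only on the reduced instance of $d$ items and contributes the bound derived above. The additive approximation error $\Ohs{\freps \log \freps}$ is inherited unchanged from the Bougeret et al.\ algorithm, since the new UKP routine only replaces a subroutine whose approximation guarantee (a $(1-\beps/6)$-approximation of the configuration LP via column generation) is the same as before.

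The only real subtlety is verifying that the dominant-term analysis in the $\max\{KP(\cdot), d \ln\ln(d/\eps)\}$ expression stays valid for all $\eps \in (0, \tfrac{1}{2}]$, i.e.\ that we do not accidentally lose the $\log \log$ factor; this follows because $\frepse{2} \log^3 \freps$ grows polynomially in $\freps$ while $\freps \log \freps \cdot \log \log \freps$ grows only quasi-linearly, so the gap is uniform in the relevant range of $\eps$. No other step requires more than the substitution of Theorem~1 into a known formula, so the corollary follows.
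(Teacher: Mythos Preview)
Your approach is exactly what the paper intends: substitute the UKP bound of Theorem~1 into the running-time expression for the Bougeret et al.\ Strip Packing AFPTAS quoted just before the corollary, and simplify. The paper gives no separate proof of this corollary, so your computation is the natural elaboration of the setup in the surrounding text, and your handling of the $\max$ and of the prefactor $d(\frac{1}{\eps^2}+\ln d)$ is correct.

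One point worth flagging: the formula the paper quotes carries an additive $n\log n$ term, whereas the corollary states $\log(\frac{1}{\eps})\,n$. So the corollary implicitly claims a sharper preprocessing of the $n$ input rectangles than a straight substitution would give. Your appeal to a median-based selection/grouping in place of a full sort is a plausible way to close this gap and goes slightly beyond what the paper makes explicit; the paper itself presumably has in mind the same linear-time grouping technique that yields the $\log(\frac{1}{\eps})\,n$ term in the Bin Packing AFPTAS of~\cite{Jansen2012}, but it does not spell this out.
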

The result in this paper was first presented at IWOCA 2015 \cite{Jansen2015_16}. The final publication will be available at \url{link.springer.com}.

For readers acquainted with column generation or linear programs, it should be noted that the LP solved has the form $\min\menges{c^T x \ | \ A x \geq b, x \geq 0 }$. It is indeed a fractional covering problem where the columns of $A$ represent configurations: a configuration assigns item slots to one bin (for Bin Packing) or to one shelf of the strip (for Strip Packing) such that the slots fit into the bin or the strip. The primal LP is then approximately solved with a method by Grigoriadis et al.\ \cite{Grigoriadis2001} (see also \cite{Jansen2006}). The columns (i.e.\ configurations) are generated by solving so-called block problems, which are UKP instances in this case. When the LP has been solved, each item is placed in a slot that has at least the size of the item. As a feasible solution to the LP has been found, there are enough slots for all items. Because of the unboundedness, some configurations may indeed assign more item slots of a certain size to the strip or to one or several bins than there are items in the considered Strip or Bin Packing instance. This does not represent a problem because the supernumerary item slots are simply left empty in the final solution. For comparison, Plotkin et al.\@ \cite{Plotkin1995} solve the LP with a decomposition method where the block problem has additional constraints on the knapsack variables: it is a Bounded Knapsack Problem where a limited number $d_j \in \natur$ of copies for every item $a_j$ may be taken.

\subsection{Techniques}
Most algorithms for UKP \cite{Ibarra1975,Lawler1979,Kellerer2004} rely on 0-1 KP algorithms. The 0-1 KP algorithms determine a first lower bound $P_0$ for $\OPT(I)$. Based on a threshold $T$ depending on $P_0$, the items are partitioned into large(-profit) items with $p_j \geq T$ and small(-profit) items with $p_j < T$. A subset of the large items is taken, which is sufficient for an approximate solution. Its profits are then scaled and the well-known dynamic programming by profits applied to the subset. All combinations of large items (packed by the dynamic program) and small items (which are greedily added) are checked and the best one returned. For UKP, copies of the items in the reduced large item set are taken to transform the UKP instance into a 0-1 KP instance.

Our algorithm also first reduces the number of large items. However, we further preprocess the remaining large items by taking advantage of the unboundedness: large items of similar profit $[2^k T, \zke{k+1} T)$ are iteratively combined (``glued'') together to larger items. Apart from two special cases that can be easily solved, we prove for this new set $\tI$ a structure property: there are approximate solutions where at most one large item from every interval $[\zke{k} T, \zke{k+1} T)$ is used, i.e.\ only $\Ohs{\log \freps}$ items in total. As a next step, a large item $\aeffc$ that consists of several copies of the most efficient small item $\aeff$ is introduced. We prove that there are now approximate solutions to the large items $\tI \cup \menges{\aeffc}$ of cardinality $\Ohs{\log \freps}$ and that additionally use at least one item of profit at least $\frac{1}{4} P_0$. Instead of exact dynamic programming, we use approximate dynamic programming: the profits in $[\frac{1}{4} P_0,2 P_0]$ are divided into intervals of equal length. During the execution of the dynamic program, we eliminate dominated solutions and store for each interval at most one solution of smallest size. The combination of approximate dynamic programming with the structure properties yields the considerable improvement in the running time and the space complexity. The algorithm then returns the best combination of large items (packed by the dynamic program) and copies of the small item $\aeff$ (added greedily).

%We use a different method to process the reduced set of large items.  Apart from two special cases, there are approximate solutions for this modified set $\tI$ that have a special structure:  Moreover, a large item $\aeffc$ that consists of several copies of the most efficient small item is taken. An optimum solution for $\tI \cup \menges{\aeffc}$ then consists of $\Ohs{\log \freps}$ items and uses at least one item of minimum profit $\frac{1}{4} P_0$. This allows to use a different dynamic programming that implicitly scales the profits much stronger than in the older approaches and therefore has a better running time. Finally, the best combination of large and small items is returned.
\section{Preliminaries}
We introduce some useful notation. The profit of an item $a$ is denoted by $p(a)$ and its size by $s(a)$. If $a = a_j$, we also write $p(a_j) = p_j$ and $s(a_j) = s_j$. Let $V = \menges{x_a : a \ | \ a \in I, x_a \in \natur}$ be a multiset of items, i.e.\@ a subset of items in $I$ with their multiplicities. We naturally define the total profit $p(V) := \sum_{x_a > 0} p(a) x_a$ and the total size $s(V) := \sum_{x_a > 0} s(a) x_a$.

Let $v \leq c=1$ be a part of the knapsack. The corresponding optimum profit for the volume $v$ is denoted by $\OPTv{I} = \max\{\sum_{a \in I} p(a) x_a \ | \ \sum_{a \in I} s(a) x_a \leq v; \: a \in I; \: x_a \in \natur\}$. Obviously, $\OPT(I) = \OPTvv{I}{c}$ holds.

We assume throughout the paper that basic arithmetic operations as well as computing the logarithm can be performed in $\Oh{1}$.

Finally, we have a remark about the use of ``item'' and ``item copy'' when we consider a solution to a UKP instance.
{\newcommand{\tII}{\tilde{I}}
\begin{remark}
Let $I, \tII$ be two sets of knapsack items with $\tII \subseteq I$. In the 0-1 Knapsack Problem, a sentence like ``the solution to $I$ uses at most one item in $\tII$'' is obvious: if the solution uses one item in $\tII$, all other items of the solution are in $I \setminus \tII$.

Consider now UKP. When we talk about solutions, we would formally have to distinguish between an item $a' \in I$ in the instance and the item copies of $a'$ that a solution $V = \menges{x_a : a \ | \ a \in I, x_a \in \natur}$ uses. In this paper, we however use the expressions ``item'' and ``item copy'' interchangeably when talking about solutions. As an example, let us consider the sentence ``the solution to $I$ uses at most one item in $\tII$.'' It means that the solution contains item copies of items in $I$, but at most one item copy whose corresponding item is in $\tII$. To be more precise, the multiset $V$ uses only one item $a \in \tII$ with a multiplicity $x_a > 0$. We have $x_a \leq 1$, but $x_{a''} = 0$ for all other $a'' \in \tII$, i.e.\ $\sum_{a' \in \tII} x_{a'} \leq 1$.
Similarly, ``the solution $V$ uses at most two items in $\tII$'' means that there are only two item copies whose corresponding item(s) are in $\tII$: we have $\sum_{a' \in \tII} x_{a'} \leq 2$. 

The interchangeable use of ``item'' and ``item copy'' allows for shorter sentences. Moreover, it is based upon 0-1 KP where ``item'' and ``item copy'' are in fact identical.
\end{remark}
}
\subsection{A First Approximation}
We present a simple approximation algorithm for $\OPT(I)$. Take the most efficient item $\ame := \mathrm{arg} \max_{a \in I} \frac{p(a)}{s(a)}$. Fill the knapsack with as many copies of $\ame$ as possible, i.e.\ take $\unterks{\frac{c}{s(\ame)}} \stackrel{c = 1}{=} \unterks{\frac{1}{s(\ame)}}$ copies of $\ame$. Then the following holds:
\begin{theorem} \label{thm:1-2_approximation_P_0}
We have $P_0 := p(\ame) \cdot \unterks{\frac{c}{s(\ame)}} \geq \frac{1}{2} \OPT(I)$. The value $P_0$ can be found in time $\Oh{n}$ and space $\Oh{1}$.
\end{theorem}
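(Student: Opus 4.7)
The plan is to bound $\OPT(I)$ from above by a quantity involving the best efficiency, and then to show that taking as many copies of $\ame$ as possible recovers at least half of this bound. Write $e^* := p(\ame)/s(\ame)$ for the maximal efficiency.

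First I would prove the upper bound $\OPT(I) \leq c \cdot e^*$. For any feasible multiset $V = \{x_a : a\}$ with $\sum_a s(a)\,x_a \leq c$, each summand in the profit can be written as
\[
p(a)\,x_a \;=\; \frac{p(a)}{s(a)}\,s(a)\,x_a \;\leq\; e^*\,s(a)\,x_a,
\]
so summing over $a$ yields $p(V) \leq e^* \sum_a s(a)\,x_a \leq e^* c$. Taking the maximum over feasible $V$ gives $\OPT(I) \leq e^* c$.

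Next I would lower-bound $P_0 = p(\ame)\,\unterks{c/s(\ame)}$. Since every item has size in $(0,1]$ and $c=1$, we have $c/s(\ame) \geq 1$, so $n := \unterks{c/s(\ame)} \geq 1$. Combined with the trivial bound $c/s(\ame) < n+1$, the assumption $n \geq 1$ gives $n + 1 \leq 2n$, hence $c/s(\ame) < 2n$, i.e.\ $n \geq c/(2\,s(\ame))$. Multiplying by $p(\ame)$ yields
\[
P_0 \;=\; p(\ame)\,n \;\geq\; \frac{p(\ame)}{s(\ame)}\cdot\frac{c}{2} \;=\; \frac{e^* c}{2} \;\geq\; \tfrac{1}{2}\OPT(I),
\]
which is the desired approximation ratio.

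Finally, for the complexity claim, finding $\ame$ only requires one linear scan through the $n$ items, keeping the current-best efficiency $p(a)/s(a)$ together with the corresponding size and profit in $O(1)$ cells; the remaining operations ($\unterks{c/s(\ame)}$ and one multiplication) are constant-time under the arithmetic model assumed in Section~2. This yields the claimed $\Oh{n}$ time and $\Oh{1}$ space. The only mildly delicate point is the rounding step (ensuring that the floor does not cost more than a factor of two), but as shown above this is taken care of by the simple observation $s(\ame) \leq c$, which guarantees $n \geq 1$ and hence $n + 1 \leq 2n$.
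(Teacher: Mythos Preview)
Your proof is correct. It takes a slightly different route from the paper's argument: you make the LP-relaxation bound $\OPT(I)\le e^{*}c$ explicit and then use the elementary floor inequality $n+1\le 2n$ (valid since $s(\ame)\le c$ forces $n\ge 1$) to get $P_0\ge e^{*}c/2$ directly. The paper instead uses the classical ``greedy plus one extra copy'' observation $P_0 + p(\ame)\ge \OPT(I)$ (which implicitly rests on the same efficiency bound) and then does a case split on whether $p(\ame)\le \tfrac12\OPT(I)$. Both are standard; your version has the minor advantage of avoiding the case distinction, while the paper's version makes the greedy intuition more visible. The complexity argument is identical in both.
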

\begin{proof}
Suppose first that $\ame$ can greedily fill the knapsack completely. Then $p(\ame) \cdot \unterks{\frac{c}{s(\ame)}} = \OPT(I)$. Otherwise, one additional item $\ame$ exceeds the capacity $c$. Then $p(\ame) \cdot \unterks{\frac{c}{s(\ame)}} + p(\ame) \geq \OPT(I)$. If $p(\ame) \leq \frac{1}{2} \OPT(I)$, then $p(\ame) \cdot \unterks{\frac{c}{s(\ame)}} \geq \OPT(I)  - p(\ame) \geq \frac{1}{2} \OPT(I)$, and the theorem follows. Otherwise $p(\ame) \cdot \unterks{\frac{c}{s(\ame)}} \geq p(\ame) \geq \frac{1}{2} \OPT(I)$, which also proves the theorem.

To determine $P_0$, we only have to check all items (which can be done in $\Oh{n}$) and to save the most efficient item (which only needs time in $\Oh{1}$). 

(The proof is taken from \cites[p.\ 232]{Kellerer2004}{Lawler1979})
\end{proof}
\begin{assumption}
From now on, we assume without loss of generality that $\eps \leq \frac{1}{4}$ and $\eps = \frac{1}{2^{\kappa-1}}$ for $\kappa \in \natur$. Otherwise, we replace $\eps$ by the corresponding $\frac{1}{2^{\kappa-1}}$ such that $\frac{1}{2^{\kappa-1}} \leq \eps < \frac{1}{2^{\kappa-2}}$. Note that $\log_2 (\frac{2}{\eps}) = \kappa$ holds.
\end{assumption}
Similar to Lawler \cite{Lawler1979}, we introduce the threshold $T$ and a constant $K$:
\begin{equation}
T:= \frac{1}{2} \eps P_0 = \frac{1}{2} \frac{1}{2^{\kappa-1}} P_0 
\label{eq:definition_T}
\end{equation}
and
\begin{equation}
K := \frac{1}{4} \frac{1}{\log_2(\frac{2}{\eps}) +1} \eps T = \frac{1}{4} \frac{1}{\kappa + 1} \fraczk T = \frac{1}{8} \frac{1}{\kappa + 1} \klammer{\fraczk}^2 P_0\enspace.
\label{eq:definition_K}
\end{equation}
We will see later that these values are indeed the right choice for the algorithm. (A derivation of these values is presented in \cite[Subsection 5.8.1]{Kraft2015}.)
%\begin{remark}
%\emph{Overview what happens next}
%\end{remark}
\section{Reducing the Items} \label{sec:reducing_items}
We first partition the items into large(-profit) and small(-profit) items, and only keep the most efficient small item:
	\[I_L := \menge{a \in I \ | \ p(a) \geq T }, \ I_S := I \setminus I_L, \ \text{ and } \aeff := \mathrm{arg}\max \menge{\frac{p(a)}{s(a)} \ \Big| \ p(a) < T} \enspace.
\]
\begin{theorem}\label{thm:construction_I-L_and_a-eff}
The sets $I_L, I_S$ and the item $\aeff$ can be found in time $\Oh{n}$ and space $\Oh{n}$. This is also the space needed to save $I_L$.
\end{theorem}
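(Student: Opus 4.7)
The plan is essentially to observe that everything can be done in a single pass over the input. First I would compute $P_0$ by Theorem~\ref{thm:1-2_approximation_P_0}, which takes $\Oh{n}$ time and only $\Oh{1}$ extra space. From $P_0$ and $\eps$ the threshold $T = \frac{1}{2}\eps P_0$ is obtained in $\Oh{1}$ time by a single arithmetic operation (using that basic arithmetic is constant-time by our assumption), so $T$ is available before the main pass begins.

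Next I would iterate once through the list $a_1, \ldots, a_n$. For each item $a_j$ the algorithm compares $p_j$ with $T$: if $p_j \geq T$, it appends $a_j$ to a list representing $I_L$; otherwise it compares the efficiency $p_j/s_j$ with the efficiency of the currently stored candidate for $\aeff$ and overwrites the candidate if the new item is strictly more efficient. At the end of the pass, $I_L$ contains exactly the items of profit at least $T$, $I_S = I \setminus I_L$ is determined implicitly (we do not need to store it explicitly for what follows), and the stored candidate equals the unique $\mathrm{arg}\max$ of $p(a)/s(a)$ over items with $p(a) < T$ (ties broken arbitrarily). Each item contributes $\Oh{1}$ work, giving total time $\Oh{n}$.

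For the space bound I would note that $I_L$ has cardinality at most $n$, so storing it (for instance as a list of indices or pointers into the input) requires $\Oh{n}$ space, and this is simultaneously the space used during construction. Tracking the running most-efficient small item and the value $T$ needs only $\Oh{1}$ additional cells. Hence the stated $\Oh{n}$ time and space bounds hold, and $\Oh{n}$ is also exactly the space needed to keep $I_L$ available for the later stages of the algorithm.

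There is no real obstacle here; the only points worth being explicit about are that $P_0$ (and hence $T$) is computable before the partitioning pass, so a single sweep suffices, and that we never need to materialize $I_S$ in full — only the representative $\aeff$ — which keeps the space bound tight at $\Oh{n}$ even when most items happen to be small.
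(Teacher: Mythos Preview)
Your proposal is correct and matches the paper's approach; the paper simply writes ``Obvious'' here, and your single-pass argument is exactly the intended justification spelled out in full.
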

\begin{proof}
Obvious.
\end{proof}
Similar to Lawler, we now reduce the item set $I_L$. Note that we have $\OPT(I) \leq 2 P_0$ according to Theorem \ref{thm:1-2_approximation_P_0}, and one item cannot have a profit larger than $\OPT(I) \leq 2 P_0$. Hence, the large item profits are in the interval $[T, 2 P_0]$. We partition this interval into
\begin{equation} \label{eq:def:Lk}
	\Lk := [2^k T, 2^{k+1} T) \ \text{ for } k \in \menge{ 0, \ldots, \kappa + 1}\enspace.
\end{equation}
Note that 
\begin{IEEEeqnarray*}{rCl}
\Le{\kappa} & = & \intervLr{2^{\kappa} T, \; 2^{\kappa+1} T} = \intervLr{2^\kappa \frac{1}{2} \fraczk P_0, \; 2^{\kappa+1} \frac{1}{2} \fraczk P_0} = \intervLr{P_0, 2 P_0}\enspace.
\end{IEEEeqnarray*}
For convenience, we directly set $\Le{\kappa+1} := \menges{2 P_0}$.

We further split the $\Lk$ into disjoint sub-intervals, each of length $2^k K$:
\begin{equation}
\Lkg := \intervLr{ 2^k T + \gamma \cdot 2^k K, \; 2^k T + (\gamma + 1) 2^k K } \ \text{ for } \ \gamma \in \menge{ 0, \ldots, 2^{\kappa+1} (\kappa + 1) - 1}\enspace.
\label{eq:def:Lkg}
\end{equation}
Note that indeed $\Lk = \bigcup_\gamma \Lkg$ holds because
\begin{IEEEeqnarray*}{rCl}
 2^k T + (\gamma + 1) 2^k K|_{\gamma = 2^{\kappa + 1}(\kappa + 1) - 1} & = & 2^k T + 2^{\kappa+1} (\kappa + 1)2^k K\\
& \stackrel{\eqref{eq:definition_K}}{=} & 2^k T + 2^{\kappa+1} (\kappa + 1) 2^k \frac{1}{4} \frac{1}{\kappa + 1} \fraczk T \\
& = & 2^k T + 2^k T = 2^{k+1} T\enspace.
\end{IEEEeqnarray*}
Similar to above, we set $\Lege{\kappa+1}{0} := \menges{2 P_0}$.

The idea is to keep only the smallest item $a$ for every profit interval $\Lkg$. We will see that these items are sufficient to determine an approximate solution.
\begin{definition}
For an item $a$ with $p(a) \geq T$, let $k(a) \in \natur$ be the interval such that $p(a) \in \Le{k(a)}$ and $\gamma(a) \in \natur$ be the sub-interval such that $p(a) \in \Lege{k(a)}{\gamma(a)}$. Let $\akg$ be the smallest item for the profit interval $\Lkg$, i.e.\
	\[\akg := \mathrm{arg}\min \menge{ s(a) \ | \ a \in I_L \textrm{ and } p(a) \in \Lkg } \textrm{ for all } k \textrm{ and } \gamma\enspace.
\]
\end{definition}
Algorithm \ref{alg:find_a-k-g} shows the algorithm to determine the $\akg$. They form the reduced set of large items 
	\[\ILred := \bigcup_k \bigcup_\gamma \menges{\akg}\enspace.
\]
\begin{algorithm}

\For{$k = 0, \ldots, \kappa$}{
 \For{$\gamma = 0, \ldots, 2^{\kappa+1}(\kappa+1) - 1$}{
		$\akg := \emptyset$\;}
		}
$\aege{\kappa+1}{0} := \emptyset$\;		
\For{$a \in I_L$}{
  Determine $(k(a),\gamma(a))$\;
	\If{$s(\aege{k(a)}{\gamma(a)}) > s(a)$ or $\aege{k(a)}{\gamma(a)} = \emptyset$}{$\aege{k(a)}{\gamma(a)} := a$\;}
	}
\KwOut{$\ILred := \bigcup_k \bigcup_\gamma \menges{\akg}$}
\caption{The algorithm to determine the $\akg$.}%
\label{alg:find_a-k-g}%
\end{algorithm}
As in \cite{Lawler1979}, we now prove that $\ILred$ is sufficient for an approximation.
\begin{lemma}\label{lemma:solution_quality_I-L_a-eff}
Let $0 \leq v \leq c=1$. Then 
	\[\OPTvv{\menges{\aeff}}{c-v} \geq \OPTvv{I_S}{c-v} - T
\]
 and 
	\begin{align*}
	\OPTv{\ILred} %&\geq \klammer{1-\frac{1}{4}\frac{1}{\kappa+1}\fraczk} \OPTv{I_L} \\
	&\geq \klammer{1-\frac{\eps}{4} \fraclogeps } \OPTv{I_L}\enspace.
\end{align*}
\end{lemma}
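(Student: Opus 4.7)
For the first inequality, the plan is to exploit the fact that $\aeff$ is the most efficient small item. I would first bound $\OPTvv{I_S}{c-v}$ from above by the linear relaxation value: for any feasible solution over $I_S$ with total size $\leq c-v$, its profit equals $\sum \frac{p(a)}{s(a)} s(a) x_a \leq \frac{p(\aeff)}{s(\aeff)}(c-v)$. Then I would bound the greedy single-item solution from below: $\OPTvv{\menges{\aeff}}{c-v} = \unterks{\frac{c-v}{s(\aeff)}} p(\aeff) \geq \klammer{\frac{c-v}{s(\aeff)} - 1} p(\aeff) = \frac{p(\aeff)}{s(\aeff)}(c-v) - p(\aeff)$. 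Combining the two bounds and using $p(\aeff) < T$ (since $\aeff \in I_S$) yields the claim. The inequality $\unterks{x} \geq x-1$ is used in the trivial case where $s(\aeff) > c-v$ as well.

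For the second inequality, the natural plan is to take an optimal solution $V$ over $I_L$ for the volume $v$ and construct a solution over $\ILred$ by replacing each item copy $a \in V$ with its interval representative $\aege{k(a)}{\gamma(a)} \in \ILred$. By definition of $\akg$ as the smallest item in $\Lkg$, the replacement is no larger than the original, so the new multiset still fits into volume $v$. This reduces the problem to bounding the profit loss per item.

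The key estimate is pointwise: if $a$ and $a' := \aege{k(a)}{\gamma(a)}$ both lie in the sub-interval $\Lege{k(a)}{\gamma(a)}$ of length $2^{k(a)} K$, then $p(a) - p(a') \leq 2^{k(a)} K$. Dividing by the lower endpoint $p(a) \geq 2^{k(a)} T$ gives a relative loss of at most $K/T$. Summing over all item copies of $V$ (with multiplicities) yields a total loss of at most $(K/T) \cdot p(V) = (K/T) \OPTv{I_L}$, so the constructed solution has profit at least $(1 - K/T) \OPTv{I_L}$. Substituting $K/T = \frac{1}{4}\frac{1}{\kappa+1}\fraczk = \frac{\eps}{4}\fraclogeps$ from \eqref{eq:definition_K} (using $\kappa + 1 = \log_2(2/\eps) + 1$ and $\eps = \fraczk$) finishes the proof, since the constructed solution's profit is a lower bound on $\OPTv{\ILred}$.

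The main obstacle is minor bookkeeping: one needs to be careful with the multiset interpretation (the replacement must be done per copy, not per distinct item) and with the corner cases $k = \kappa+1$, where the interval degenerates to the singleton $\{2P_0\}$ and the bound $p(a) - p(a') \leq 2^k K$ holds trivially. No other subtlety is expected.
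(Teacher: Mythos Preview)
Your proposal is correct and follows essentially the same approach as the paper: for the first inequality you use the LP-relaxation bound via the efficiency of $\aeff$ (the paper phrases this as ``one more copy of $\aeff$ would exceed $\OPTvv{I_S}{c-v}$'', which is the same argument), and for the second inequality your replacement-per-copy argument with the pointwise loss bound $p(a)-p(\aege{k(a)}{\gamma(a)})\leq 2^{k(a)}K \leq \frac{K}{T}\,p(a)$ is exactly what the paper does.
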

\begin{proof}
For the first inequality, there are two possibilities: either copies of $\aeff$ can be taken such that the entire capacity $c-v$ is used. Then obviously $\OPTvv{\menges{\aeff}}{c-v} = \OPTvv{I_S}{c-v}$ holds. Otherwise, we have similar to the proof of Theorem \ref{thm:1-2_approximation_P_0} that $\OPTvvs{\menges{\aeff}}{c-v} + p(\aeff) = \unterks{\frac{c-v}{s(\aeff)}} \cdot p(\aeff) + p(\aeff) \geq \OPTvv{I_S}{c-v}$. Thus, $\OPTvvs{\menges{\aeff}}{c-v} \geq \OPTvvs{I_S}{c-v} - p(\aeff) \geq \OPTvvs{I_S}{c-v}-T$. The first inequality follows.

For the second inequality, take an optimal solution $(x_a)_{a \in I}$ such that $\OPTv{I_L} = \sum_{a \in I_L} p(a) x_a$. Replace now every item $a$ by its counterpart $\aege{k(a)}{\gamma(a)}$ in $\ILred$. Obviously, the solution stays feasible, i.e.\ the volume $v$ will not be exceeded, because an item may only be replaced by a smaller one. This solution has the total profit $\sum_{a \in I_L} p(\aege{k(a)}{\gamma(a)}) x_a$. Moreover, we have 
	\begin{IEEEeqnarray*}{rCl}
	p(\aege{k(a)}{\gamma(a)}) & \geq & p(a) - 2^{k(a)} K  \stackrel{\eqref{eq:definition_K}}{=} p(a) - \frac{1}{4} \frac{1}{\kappa+1} \fraczk 2^{k(a)} T \\
&	\stackrel{p(a) \geq 2^{k(a)} T}{\geq}& p(a) - \klammer{\frac{1}{4} \frac{1}{\kappa+1}\fraczk} p(a) 
= p(a) \cdot \klammer{1-\frac{1}{4} \frac{1}{\kappa+1}\fraczk} \IEEEeqnarraynumspace \IEEEyesnumber \label{eq:proof:distance_p-a_p-aege}
\end{IEEEeqnarray*}
by the definition of the $\Lkg$. We get
\begin{IEEEeqnarray*}{rCl}
\OPTv{\ILred} &\geq& \sum_{a \in I_L} p(\aege{k(a)}{\gamma(a)}) x_a  \stackrel{\eqref{eq:proof:distance_p-a_p-aege}}{\geq}  \sum_{a \in I_L} \klammer{1-\frac{1}{4} \frac{1}{\kappa+1}\fraczk} \cdot p(a) x_a \\
 & = & \klammer{1-\frac{1}{4} \frac{1}{\kappa+1}\fraczk} \OPTv{I_L} \\
 & = & \klammer{1-\frac{\eps}{4} \fraclogeps} \OPTv{I_L}\enspace. 
\end{IEEEeqnarray*}
(The reasoning is partially taken directly from or close to the one by Lawler in \cite{Lawler1979}.)
\end{proof}
\begin{theorem}\label{thm:number_items_I-L-red_time_space_constructing_I-L-red}
The set $\ILred$ has $\Ohs{\freps \log^2 \freps}$ items.   
Algorithm \ref{alg:find_a-k-g} needs time in $\Ohs{n + \freps \log^2 \freps}$ and space in $\Ohs{\freps \log^2 \freps}$ for the construction and for saving $\ILred$.
\end{theorem}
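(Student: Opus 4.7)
The plan is to prove the three claims of the theorem in sequence: the cardinality bound on $\ILred$, the running time of Algorithm \ref{alg:find_a-k-g}, and the space complexity. All three ultimately reduce to counting the pairs $(k,\gamma)$ that index the $\akg$, so I will start there.

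First, I would count the index pairs. The outer index $k$ ranges over $\{0,1,\ldots,\kappa,\kappa+1\}$, which is $\kappa + 2$ values. Since $\kappa = \log_2(2/\eps)$ by our standing assumption, this is $\Ohs{\log \freps}$. For each $k \leq \kappa$, the inner index $\gamma$ ranges over $\{0,\ldots, 2^{\kappa+1}(\kappa+1) - 1\}$, which contributes $2^{\kappa+1}(\kappa+1) = \Ohs{\freps \log \freps}$ choices; for $k = \kappa + 1$ there is only a single $\gamma = 0$. Multiplying gives a total of $(\kappa+1) \cdot 2^{\kappa+1}(\kappa+1) + 1 = \Ohs{\freps \log^2 \freps}$ pairs, and since $|\ILred| \leq \#\{(k,\gamma)\}$ (each $\akg$ contributes at most one item, with possibly empty entries contributing none), the cardinality bound follows.

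Next, I would analyze the running time of Algorithm \ref{alg:find_a-k-g}. The initialization phase sets $\akg := \emptyset$ for every pair $(k,\gamma)$, which costs $\Ohs{\freps \log^2 \freps}$. The main loop iterates over every $a \in I_L$, and each iteration does constant work apart from computing $(k(a),\gamma(a))$. Here I would invoke the paper's assumption that basic arithmetic operations and logarithm evaluations take $\Oh{1}$: then $k(a) = \unterks{\log_2(p(a)/T)}$ and $\gamma(a) = \unterks{(p(a) - 2^{k(a)} T)/(2^{k(a)} K)}$ are both $\Oh{1}$-computable, and indexing into the stored table is $\Oh{1}$ as well. Since $|I_L| \leq n$, the loop takes $\Oh{n}$ time, giving a total of $\Ohs{n + \freps \log^2 \freps}$.

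For the space bound, the algorithm stores a table indexed by the $(k,\gamma)$ pairs, with each entry holding at most one item (or a pointer to one); by the same count this is $\Ohs{\freps \log^2 \freps}$. The input $I_L$ itself is already stored (and accounted for separately in Theorem \ref{thm:construction_I-L_and_a-eff}), and no additional temporary storage beyond $\Oh{1}$ per iteration is required. The resulting set $\ILred$ fits in this same $\Ohs{\freps \log^2 \freps}$ space. The only minor subtlety I anticipate is making explicit that the initialization cost does not exceed the stated bound — but since the number of entries to initialize equals the bound itself, this is immediate. No genuine obstacle arises; the argument is a straightforward counting and bookkeeping exercise once the ranges of $k$ and $\gamma$ are expanded using $\kappa = \Ohs{\log \freps}$.
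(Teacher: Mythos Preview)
Your proposal is correct and follows essentially the same approach as the paper's proof: count the $(k,\gamma)$ pairs to get the $\Ohs{\freps \log^2 \freps}$ bound, then observe that initialization costs that much, the main loop costs $\Oh{n}$ since $k(a)$ and $\gamma(a)$ are $\Oh{1}$-computable under the logarithm assumption, and the space is dominated by the $(k,\gamma)$-indexed table. Your write-up is in fact more detailed than the paper's (which dispatches the argument in three sentences), but the substance is identical.
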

\begin{proof}
%Together with $\aege{\kappa+1}{0}$, we have 
The number of items $\akg$, including the item $\aege{\kappa+1}{0}$, is bounded by $\Ohs{(\kappa + 1) \cdot (\zke{\kappa+1} (\kappa+1) - 1 +1 ) } = \Ohs{\log \freps \cdot (\freps \log \freps)} = \Ohs{\freps \log^2 \freps}$.
The space needed is asymptotically bounded by the space required to save the $\akg$. Finally, the running time is obviously bounded by $\Ohs{n + \freps \log^2 \freps}$: the values $k(a)$ and $\gamma(a)$ can be found in $\Ohs{1}$ because we assume that the logarithm can be determined in $\Oh{1}$.
\end{proof}
\begin{remark} \label{remark:no_item_with_profit_2P0}
If there is one item $a$ with the profit $p(a) = 2 P_0$, i.e.\ whose profit attains the upper bound, one optimum solution obviously consists of this single item. During the partition of $I$ into $I_L$ and $I_S$, it can easily be checked whether such an item is contained in $I$. Since the algorithm can directly stop if this is the case, we will from now on assume without loss of generality that such an item does not exist and that $\aege{\kappa+1}{0} = \emptyset$.
\end{remark}
\section{A Simplified Solution Structure}
In this section, we will transform $\ILred$ into a new instance $\tI$ whose optimum $\OPTvs{\tI}$ is only slightly smaller than $\OPTvs{\ILred}$ and where the corresponding solution has a special structure. This new transformation will allow us later to faster construct the approximate solution.
First, we define
	\[\Ik := \menge{a \in \ILred \ \big| \ p(a) \in \Lk} = \menge{a \in \ILred \ \big| \ p(a) \in \intervLr{2^{k} T, 2^{k+1}T}}\enspace.
\]
Note that the items are already partitioned into the $\Ik$ because of the way $\ILred$ has been constructed.
\begin{definition}
Let $a_1, a_2$ be two knapsack items with $s(a_1) + s(a_2) \leq c$. The gluing operation $\oplus$ combines them into a new item $a_1 \oplus a_2$ with $p(a_1 \oplus a_2) = p(a_1) + p(a_2)$ and $s(a_1 \oplus a_2) = s(a_1) + s(a_2)$.
\end{definition}
Thus, the gluing operation is only defined on pairs of items whose combined size does not exceed $c$.

The basic idea for the new instance $\tI$ is as follows: we first set $\tIe{0} := \Ie{0}$. Then, we construct $a_1 \oplus a_2$ for all $a_1, a_2 \in \tIe{0}$ (which also includes the case $a_1 = a_2$), which yields the item set $\ttIe{1} := \menges{a_1 \oplus a_2 \ | \ a_1, a_2 \in \tIe{0}}$. Note that $p(a_1 \oplus a_2) \in [2T, 4T) = \Le{1}$. For every profit interval $\Lege{1}{\gamma}$, we keep only the item of smallest size in $\Ie{1} \cup \ttIe{1}$, which yields the item set $\tIe{1}$. This procedure is iterated for $k = 1, \ldots, \kappa-1$: the set $\tIe{k}$ contains the items with a profit in $[2^{k}T, 2^{k+1} T) = \Lk$ (see Fig.\@ \ref{fig:construction_tIk}\subref{fig:construction_tIk_a}). Gluing like above yields the item set $\ttIe{k+1}$ with profits in $[2^{k+1}T, 2^{k+2} T) = \Le{k+1}$ (see Fig.\@ \ref{fig:construction_tIk}\subref{fig:construction_tIk_b}). By taking again the smallest item in $\ttIe{k+1} \cup \Ie{k+1}$ for every $\Leg{k+1}$, the set $\tIe{k+1}$ is derived (see Fig.\@ \ref{fig:construction_tIk}\subref{fig:construction_tIk_c}). The item in $\tIe{k}$ with a profit in $\Leg{k}$ is denoted by $\taeg{k}$ for every $k$ and $\gamma$. 

We finish when $\tIe{\kappa}$ has been constructed. We are in the case where $\Ie{\kappa+1} = \emptyset$, i.e.\ $\aege{\kappa+1}{0} = \emptyset$, and it is explained at the beginning of Section \ref{sec:dynamic_programming} that it is not necessary to construct $\tIe{\kappa+1}$ from the items in $\tIe{\kappa}$. Hence, we also have $\taege{\kappa+1}{0} = \emptyset$.

Note that we may glue items together that already consist of glued items. For backtracking, we save for every $\takg$ which two items in $\tIe{k-1}$ have formed it or whether $\takg$ has already been an item in $\Ik$. Algorithm \ref{alg:construction_tIk} presents one way to construct the sets $\tIk$.
\begin{remark}\label{remark:items_takg_represent_original_items}
One item $\takg$ is in fact the combination of several items in $\ILred$. The profit and size of $\takg$ is equal to the total profit and size of these items. The $\takg$ represent feasible item combinations because an arbitrary number of item copies can be taken in UKP.
\end{remark}
\begin{algorithm}
\For{$k = 0, \ldots, \kappa$}{
	\For{$\gamma = 0, \ldots, \zke{\kappa+1} (\kappa + 1) - 1 $}{
		$\takg := \akg$\; $\backtrack(\takg) := \akg$\;}
	}

$\tIe{0} := \Ie{0}$\;
\For{$k = 0, \ldots, \kappa-1$}{
	\For{$\gamma = 0, \ldots, \zke{\kappa+1} (\kappa + 1) - 1 $}{
		\For{$\gamma' = \gamma, \ldots, \zke{\kappa+1} (\kappa + 1) - 1$}{
			\If{$s(\takg) + s(\taege{k}{\gamma'}) \leq c$}			
			{$\ta := \takg \oplus \taege{k}{\gamma'}$\;
			\If{$s(\ta) < s(\taege{k+1}{\gamma(\ta)})$ or $\taege{k+1}{\gamma(\ta)} = \emptyset$}{
				$\taege{k+1}{\gamma(\ta)} := \ta$\;
				$\backtrack(\taege{k+1}{\gamma(\ta)}) := (\takg, \taege{k}{\gamma'})$\;}
			}
		}
	}
	$\tIe{k+1} := \menge{\taege{k+1}{0}, \ldots, \taege{k+1}{\zke{\kappa+1} (\kappa + 1) - 1}}$\;
}
\caption{The construction of the item sets $\tIk$.}
\label{alg:construction_tIk}
\end{algorithm}
\begin{figure}%
\subfloat[The items in $\tIk$ and $\Ie{k+1}$\label{fig:construction_tIk_a}. The height of every item $a$ corresponds to its size $s(a)$ while its position on the axis corresponds to its profit $p(a)$. The axis is partitioned into the profit sub-intervals $\Lkg = [2^k T + \gamma 2^k K, 2^k T + (\gamma+1) 2^k K)$.]{\includegraphics[width=\columnwidth]{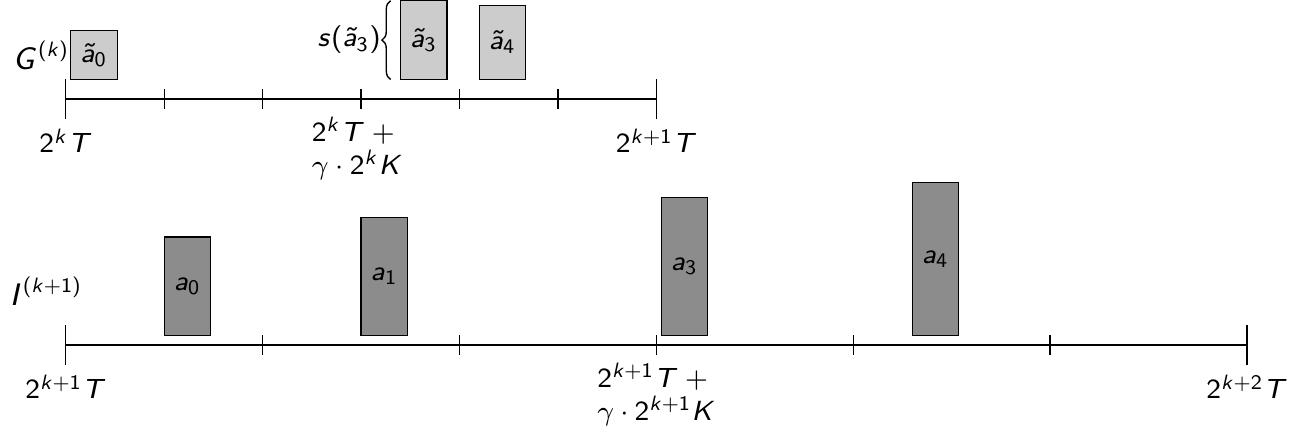}}

\subfloat[The set $\Ie{k+1}$ together with the newly constructed items in $\ttIe{k+1}$\label{fig:construction_tIk_b}]{\includegraphics[width=\columnwidth]{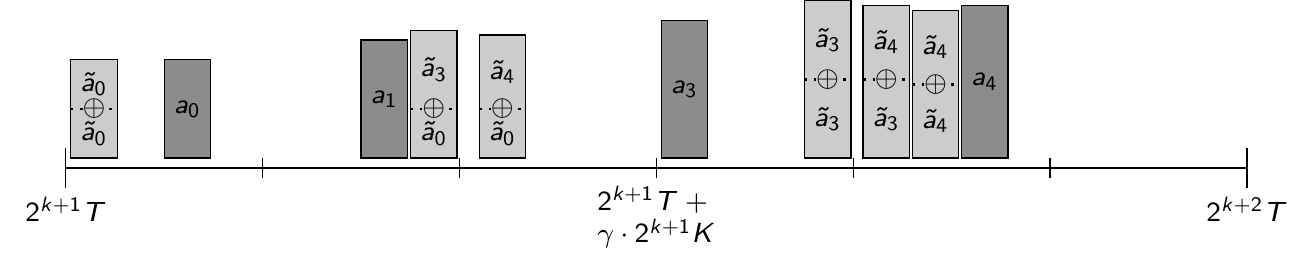}}

\subfloat[The new set $\tIe{k+1}$ after keeping only the smallest item with a profit in $\Leg{k+1}$. For instance, $\ta_4 \oplus \ta_4$ is kept because it is the smallest item in its profit sub-interval $\Leg{k+1}$.\label{fig:construction_tIk_c}]{\includegraphics[width=\columnwidth]{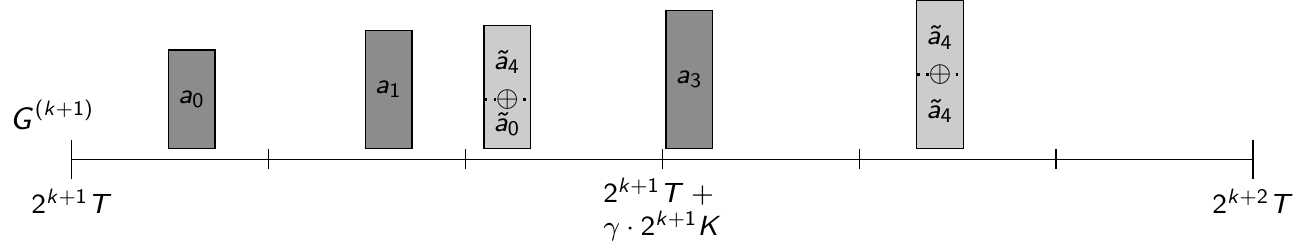}}
\caption{Principle of deriving $\tIe{k+1}$ from $\tIk$ and $\Ie{k+1}$}%
\label{fig:construction_tIk}%
\end{figure}

The item set 
	\[\tI := \bigcup_{k=0}^{\kappa} \tIk
\]
has for every $0 \leq v \leq c$ a solution near the original optimum $\OPTv{\ILred}$ as shown below in Theorem \ref{thm:solution_quality_tIk}. It is additionally proved that at most one item of every $\tIk$ for $k \in \menges{ 0, \ldots, \kappa - 1}$ is needed. First, we introduce a definition for the proof.
%Let $\ba_1, \ldots, \ba_\eta$ be the item copies for the solution $\OPTv{\ILred}$ with profit in $\Le{0}$, i.e.\ the items $\ba_l, \ldots, \ba_{l + x_a}$ are the $x_a$ copies of an item $a \in \Ie{0} = \tIe{0}$. Replace now every item pair $\ba_j, \ba_{j+1}$ by $\ba_j \oplus \ba_{j+1} \in \ttIe{1}$ and this item by the corresponding $\taege{1}{\gamma}$. We have now a solution that uses at most one item in $\tIe{0}$ (in fact, only $\ba_\eta$ is left if $\eta$ is odd). This procedure is iterated: in every iteration, the item copies $\ba'_1, \ldots, \ba'_\nu$ in $\tIk$ are pairwise replaced by their counterparts in $\tIe{k+1}$ such that only one item ($\ba'_\nu$) may be left. We end with a solution to $\bigcup_{k=0}^{\kappa} \tIk$ where at most one item is used from every $\tIk$ for $k = 0, \ldots, \kappa-1$ and whose optimum (as we will see below) can still be bounded.
\begin{definition} \label{def:structured_solution_OPTk}
Let $I'$ be a set of knapsack items with $p(a) \geq T$ for every $a \in I'$. For a knapsack volume $v \leq c$ and $k_0 \in \menges{0, \ldots, \kappa}$, a solution is \emph{structured for $k = k_0$} if it fits into $v$ and uses for every $k \in \menges{ 0, \ldots, k_0}$ at most one item copy with a profit in $\Lk$. We denote by $\OPTkkv{k_0}{I'}$ the corresponding optimum profit.
\end{definition}
For instance, the solution for
	\[\OPTkkv{k_0}{\tIe{0} \cup \ldots \cup \tIe{k_0} \cup \tIe{k_0+1} \cup \Ie{k_0+2} \cup \ldots \cup \Ie{\kappa}} 
\]
fits into the volume $v$, and it uses only one item from every $\tIe{k}$ for $k \in \menges{ 0, \ldots, k_0}$. It may however use an arbitrary number of item copies e.g.\ in $\tIe{k_0+1}$ or $\Ie{k_0+2}$.

\begin{theorem} \label{thm:solution_quality_tIk}
For $v \leq c$ and $k_0 \in \menges{ 0, \ldots, \kappa-1}$, we have 
	\[\OPTkkv{k_0}{\bigcup_{k=0}^{k_0+1} \tIe{k} \cup \bigcup_{k = k_0+2}^\kappa \Ie{k}} \geq \klammer{ 1- \frac{\eps}{4} \fraclogeps }^{k_0+1} \OPTv{\ILred}\enspace.
\]
\end{theorem}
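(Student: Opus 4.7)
The plan is to prove the statement by induction on $k_0$. For the base case $k_0 = 0$, I would start from an optimal solution witnessing $\OPTv{\ILred}$, represented as a multiset of items from $\Ie{0}, \ldots, \Ie{\kappa}$. Items in $\Ie{k}$ for $k \geq 2$ stay untouched. Each item $a \in \Ie{1}$ gets replaced by its representative $\taege{1}{\gamma(a)} \in \tIe{1}$; since $\tIe{1}$ keeps, for each subinterval $\Leg{1}{\gamma}$, the smallest item among $\Ie{1} \cup \ttIe{1}$, we have $s(\taege{1}{\gamma(a)}) \leq s(a)$ (feasibility preserved) and, by the same argument as in the proof of Lemma~\ref{lemma:solution_quality_I-L_a-eff} (inequality~\eqref{eq:proof:distance_p-a_p-aege}), $p(\taege{1}{\gamma(a)}) \geq (1 - \frac{\eps}{4} \fraclogeps) p(a)$. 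The item copies in $\Ie{0} = \tIe{0}$ are processed differently: group them into pairs (leaving at most one singleton), glue each pair $\ta_1 \oplus \ta_2$, and replace the glued item with $\taege{1}{\gamma(\ta_1 \oplus \ta_2)} \in \tIe{1}$. Since $s(\ta_1) + s(\ta_2) \leq c$ (the pair comes from a feasible solution), Algorithm~\ref{alg:construction_tIk} did consider this pair, so $\taege{1}{\gamma(\ta_1 \oplus \ta_2)}$ exists and has no larger size; the profit loss is again bounded by the factor $(1 - \frac{\eps}{4} \fraclogeps)$. The resulting multiset uses at most one item copy from $\tIe{0}$, hence is structured for $k_0 = 0$, and has total profit at least $(1 - \frac{\eps}{4} \fraclogeps) \OPTv{\ILred}$.

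For the inductive step, I would assume the claim for $k_0$ and fix a solution achieving $\OPTkkv{k_0}{\bigcup_{k=0}^{k_0+1} \tIe{k} \cup \bigcup_{k=k_0+2}^\kappa \Ie{k}}$. The items from $\tIe{0}, \ldots, \tIe{k_0}$ (at most one from each, by the induction hypothesis) and from $\Ie{k_0+3}, \ldots, \Ie{\kappa}$ remain unchanged. Items coming from $\tIe{k_0+1}$ are paired up and glued exactly as before, so the glued items land in $\Le{k_0+2}$ and get replaced by the corresponding $\taege{k_0+2}{\gamma}$; items from $\Ie{k_0+2}$ are individually replaced by $\taege{k_0+2}{\gamma(a)} \in \tIe{k_0+2}$. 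In both substitutions, the argument of~\eqref{eq:proof:distance_p-a_p-aege} applied with $k = k_0+2$ gives a multiplicative profit loss of at most $(1 - \frac{\eps}{4} \fraclogeps)$. The resulting solution uses at most one item copy from each of $\tIe{0}, \ldots, \tIe{k_0+1}$, so it is structured for $k_0+1$, and its total profit is at least $(1 - \frac{\eps}{4} \fraclogeps)^{k_0+2} \OPTv{\ILred}$.

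The main technical obstacle is showing that the gluing and subsequent substitution never violate the size constraint. The crucial observation is that when two item copies $\ta_1, \ta_2 \in \tIe{k}$ appear together in a feasible solution, one automatically has $s(\ta_1) + s(\ta_2) \leq c$, so Algorithm~\ref{alg:construction_tIk} actually considered the pair $(\ta_1, \ta_2)$ (its outer loop ranges over all ordered pairs $\gamma \leq \gamma'$ and $\oplus$ is symmetric), making $\taege{k+1}{\gamma(\ta_1 \oplus \ta_2)}$ well-defined with size no larger than $s(\ta_1) + s(\ta_2)$. A secondary care point is the treatment of an odd number of copies in $\tIe{k_0+1}$: one leftover singleton is kept, which is consistent with the ``at most one item copy'' condition of a structured solution. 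Finally, the profit bound on substitutions is the same uniform factor $(1 - \frac{\eps}{4} \fraclogeps)$ at every level $k \leq \kappa$, because inequality~\eqref{eq:proof:distance_p-a_p-aege} is scale-invariant in $k$ by construction of the sub-intervals $\Lkg$ of length $2^k K$; this is what makes the multiplicative losses chain cleanly into $(1 - \frac{\eps}{4} \fraclogeps)^{k_0+1}$.
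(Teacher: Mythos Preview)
Your proposal is correct and follows essentially the same route as the paper's proof: induction on $k_0$, starting from an optimal (structured) solution, pairing the copies in $\tIe{k_0+1}$, replacing both the glued pairs and the items from $\Ie{k_0+2}$ by their representatives $\taege{k_0+2}{\gamma}$, and invoking inequality~\eqref{eq:proof:distance_p-a_p-aege} for the uniform multiplicative loss. Your write-up is in fact slightly more explicit than the paper's about why the substitution preserves feasibility (the observation that $s(\ta_1)+s(\ta_2)\leq c$ is automatic because the pair occurs in a feasible packing, so Algorithm~\ref{alg:construction_tIk} did process it).
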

\begin{proof}
The proof idea is quite simple: we iteratively replace the items in $\Ie{k_0+1}$ by their counterpart in $\tIe{k_0+1}$ and also replace every pair of item copies in $\tIe{k_0}$ by the counterpart in $\tIe{k_0+1}$. This directly follows the way to construct the item sets $\tIe{k}$ presented in Algorithm \ref{alg:construction_tIk}.

Formally, the statement is proved by induction over $k_0$. Let $k_0 = 0$. 
Take an optimum solution to $\tIe{0} \cup \Ie{1} \cup \ldots \cup \Ie{\kappa} = \Ie{0} \cup \Ie{1} \cup \ldots \cup \Ie{\kappa} = \ILred$. For ease of notation, we directly write each item as often as it appears in the solution. We have three sub-sequences:
\begin{itemize}
	\item Let $\ba_1, \ldots, \ba_\eta$ ($\eta \in \natur$) be the items from $\tIe{0} = \Ie{0}$ in the optimal solution for $\OPTvs{\ILred}$. We assume that $\eta$ is odd (the case where $\eta$ is even is easier and handled below.)
	\item Let $\ba_{\eta + 1}, \ldots, \ba_{\eta + \xi}$ ($\xi \in \natur$) be the items from $\Ie{1}$ in the optimal solution for $\OPTvs{\ILred}$.
	\item Let $\bas_1, \ldots, \bas_\lambda$ ($\lambda \in \natur$) be the remaining items from $\Ie{2} \cup \ldots \cup \Ie{\kappa}$ in the optimal solution for $\OPTvs{\ILred}$. This set is denoted by $\Lambda$. As defined above, the total profit of these items is written as $p(\Lambda)$.
\end{itemize}
Figure \ref{fig:solution_quality_tIk}\subref{fig:solution_quality_tIk_a} illustrates the packing. (Figure \ref{fig:solution_quality_tIk} shows the case for general $k$.) We have
\begin{equation} \label{eq:proof:profit_before_gluing}
 \OPTv{\tIe{0} \cup \Ie{1} \cup \ldots \cup \Ie{\kappa}} = \sum_{i = 1}^{\eta} p(\ba_i) + \sum_{j = \eta + 1}^{\eta + \xi} p(\ba_j) + p(\Lambda)\enspace.
\end{equation}

In the first step, every pair of items $\ba_{2i-1}$ and $\ba_{2i}$ from $\tIe{0}$ for $i \in \menges{ 1, \ldots, \unterks{\frac{\eta}{2}}}$ is replaced by $\ba_{2i-1}\oplus \ba_{2i} \in \ttIe{1}$ (see Fig.\@ \ref{fig:solution_quality_tIk}\subref{fig:solution_quality_tIk_b}). In the second step, every item $\ba_{2i-1}\oplus \ba_{2i}$ is again replaced by the corresponding item $\taege{1}{\gamma(\ba_{2i-1}\oplus \ba_{2i})} =: \taege{1}{\rho(i)}$ in $\tIe{1}$ (for $i \in \menges{ 1, \ldots, \unterks{\frac{\eta}{2}} }$). Only item $\ba_\eta$ remains unchanged. Moreover, $\ba_j$ from $\Ie{1}$ is replaced by the corresponding $\taege{1}{\gamma(\ba_j)} =: \taege{1}{\rho(j)}$ for $j \in \menges{ \eta+1, \ldots, \eta + \xi }$ (see Fig.\@ \ref{fig:solution_quality_tIk}\subref{fig:solution_quality_tIk_c}). Note that this new solution is indeed feasible because the replacing items $\taege{1}{\gamma}$ are at most as large as the original ones. Moreover, the corresponding items $\taege{1}{\rho(i)}$ and $\taege{1}{\rho(j)}$ must exist by the construction of $\tIe{1}$. Thus, we have a (feasible) solution that consists of the item $\ba_\eta \in \tIe{0}$, the items $\taege{1}{\rho(i)}$ and $\taege{1}{\rho(j)}$ in $\tIe{1}$, and the remaining items $\bas_1, \ldots, \bas_\lambda$ in $\Ie{2}, \ldots, \Ie{\kappa}$: this solution respects the structure of $\OPTkkvs{k_0}{\cdot}$ for $k_0 = 0$. (If $\eta$ is even, no item in $\tIe{0}$ is used.)

Let now $\ba$ be an item $\ba_{2i-1}\oplus \ba_{2i}$ or $\ba_j$. It can be proved as for Inequality \eqref{eq:proof:distance_p-a_p-aege} that
\begin{IEEEeqnarray}{rCl} \label{eq:proof:distance_p-ba_p_baege}
 p(\taege{1}{\gamma(\ba)}) & \geq & \klammer{1- \frac{\eps}{4} \fraclogeps} p(\ba) \enspace.
\end{IEEEeqnarray}
Thus, we have
\begin{IEEEeqnarray*}{rCl}
\IEEEeqnarraymulticol{3}{l}{ \OPTkkv{0}{\tIe{0} \cup \tIe{1} \cup \Ie{2} \cup \ldots \cup \Ie{\kappa}}  \geq  p(\ba_\eta) + \sum_{i = 1}^{\unterks{\frac{\eta}{2}}} p(\taege{1}{\rho(i)}) + \sum_{j = \eta + 1}^{\eta + \xi} p(\taege{1}{\rho(j)}) + p(\Lambda)} \\ \qquad
& \stackrel{\eqref{eq:proof:distance_p-ba_p_baege}}{\geq} & p(\ba_\eta) + \klammer{1- \frac{\eps}{4} \fraclogeps} \sum_{i = 1}^{\unterks{\frac{\eta}{2}}} p(\ba_{2i-1}\oplus \ba_{2i})\\
& & + \klammer{1- \frac{\eps}{4} \fraclogeps} \sum_{j=\eta+1}^{\eta + \xi} p(\ba_j) + p(\Lambda)\\
& \geq & \klammer{1- \frac{\eps}{4} \fraclogeps} \klammer{\sum_{i = 1}^{\eta} p(\ba_i) + \sum_{j = \eta + 1}^{\eta + \xi} p(\ba_j) + p(\Lambda)}\\
& \stackrel{\eqref{eq:proof:profit_before_gluing}}{=} & \klammer{1- \frac{\eps}{4} \fraclogeps} \OPTv{\tIe{0} \cup \Ie{1} \cup \ldots \cup \Ie{\kappa}}\\
& = & \klammer{1- \frac{\eps}{4} \fraclogeps} \OPTv{\ILred}\enspace.
\end{IEEEeqnarray*}
The statement for $k_0 = 1, \ldots, \kappa-1$ now follows by induction. The proof is almost identical to the case $k_0 = 0$ above, the only difference is that there are additionally the items in $\tIe{0}, \ldots, \tIe{k_0-1}$ that remain unchanged like the items $\Ie{k_0 + 2}, \ldots, \Ie{\kappa}$. Only the items in $\tIe{k_0}$ and $\Ie{k_0}$ are replaced.
\end{proof}
\begin{figure}
\centering
\subfloat[The current solution to $\tIe{0} \cup \cdots \cup \tIe{k} \cup \Ie{k+1} \cup \cdots \cup \Ie{\kappa}$. The structure of $\OPTkkvs{k-1}{\cdot}$ is respected, i.e.\ at most one item from every $\tIe{0}, \ldots, \tIe{k-1}$ is used.\label{fig:solution_quality_tIk_a}]{\includegraphics[scale=1.4]{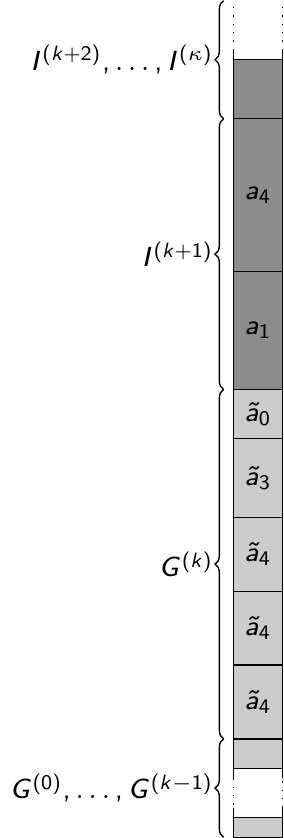}}
\hspace{1cm}
\subfloat[The items in $\tIk$ are pairwise glued together with the possible exception of one item.\label{fig:solution_quality_tIk_b}]{\includegraphics[scale=1.4]{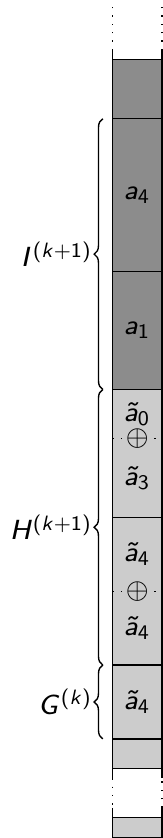}}
\hspace{1cm}
\subfloat[The items in $\ttIe{k+1} \cup \Ie{k+1}$ are replaced by their counterparts in $\tIe{k+1}$. Now, at most one item in $\tIk$ is part of the solution, and the structure for $\OPTkvs{\cdot}$ is respected.\label{fig:solution_quality_tIk_c}
]
{\includegraphics[scale=1.4]{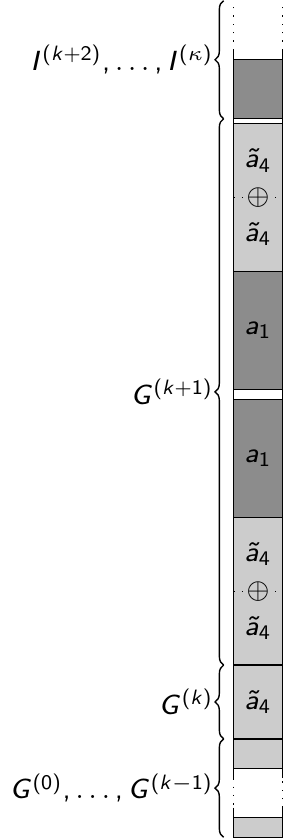}}
\caption{The principle of the proof for Theorem \ref{thm:solution_quality_tIk}}%
\label{fig:solution_quality_tIk}%
\end{figure}

\begin{lemma} \label{lemma:estimates_I-L_to_tI}
$\OPT(\tI \cup \menges{\aeff}) \leq \OPT(\ILred \cup I_S) \leq \OPT(I_L \cup I_S) = \OPT(I) \leq 2 P_0$ holds.
\end{lemma}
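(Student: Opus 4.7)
The chain decomposes into four claims that I would verify in sequence, from the rightmost inequality back to the leftmost.

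First, the bound $\OPT(I) \leq 2P_0$ is immediate from Theorem \ref{thm:1-2_approximation_P_0}, which states $P_0 \geq \frac{1}{2}\OPT(I)$. Next, the equality $\OPT(I_L \cup I_S) = \OPT(I)$ holds by construction: $I_L$ and $I_S$ were defined as a partition of $I$ according to whether $p(a) \geq T$ or $p(a) < T$, so $I_L \cup I_S = I$.

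For the inequality $\OPT(\ILred \cup I_S) \leq \OPT(I_L \cup I_S)$, I would observe that $\ILred \subseteq I_L$ by its construction in Algorithm \ref{alg:find_a-k-g} (each $\akg$ is chosen from $I_L$), so $\ILred \cup I_S \subseteq I_L \cup I_S$. Any feasible UKP solution over the smaller item set is also feasible over the larger one with the same profit, so the optimum can only grow.

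The only step requiring slightly more care is $\OPT(\tI \cup \menges{\aeff}) \leq \OPT(\ILred \cup I_S)$. Here I would invoke Remark \ref{remark:items_takg_represent_original_items}: every item $\takg \in \tI$ is obtained by iteratively gluing items of $\ILred$, and its profit and size equal the total profit and size of the underlying multiset of $\ILred$-items; moreover, since gluing is only applied when the combined size fits into $c$, the underlying multiset is a feasible UKP configuration. Given an optimal solution $(x_a)_{a \in \tI \cup \{\aeff\}}$, I would expand each used copy of a $\takg$ into its corresponding multiset of $\ILred$-items (adding multiplicities across different $\takg$'s), and keep the $\aeff$-copies unchanged (noting $\aeff \in I_S$). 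This produces a feasible solution to $\ILred \cup I_S$ of the same total profit and total size, hence $\OPT(\tI \cup \menges{\aeff}) \leq \OPT(\ILred \cup I_S)$. I don't anticipate any real obstacle; the whole statement is a bookkeeping chain, with the gluing-to-original-items expansion being the only nontrivial observation and already justified by Remark \ref{remark:items_takg_represent_original_items}.
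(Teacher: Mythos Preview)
Your proposal is correct and follows essentially the same approach as the paper's proof: the paper also argues the first inequality by noting that every item of $\tI$ can be unglued into items of $\ILred$ (your invocation of Remark~\ref{remark:items_takg_represent_original_items}) together with $\aeff \in I_S$, uses $\ILred \subseteq I_L$ for the second inequality, and appeals to Theorem~\ref{thm:1-2_approximation_P_0} for the final bound. Your write-up is simply more explicit about the partition $I_L \cup I_S = I$ and the monotonicity of $\OPT$ under item-set inclusion.
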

\begin{proof}
$\tI$ consists of items in $\ILred$ or of items that can be obtained by gluing several items in $\ILred$ together. Every combination of items in $\tI$ can therefore be represented by items in $\ILred$. Moreover, we have $\aeff \in I_S$. The first inequality follows. Since $\ILred \subseteq I_L$, the second inequality is obvious. The last inequality follows from Theorem \ref{thm:1-2_approximation_P_0}.
\end{proof}
Up to now, we have (only) reduced the original item set $I$ to $\tI \cup \menges{\aeff}$.
\begin{lemma} \label{lemma:two_cases}
Assume as mentioned in Remark \ref{remark:no_item_with_profit_2P0} that $\aege{\kappa+1}{0} = \emptyset$. Consider the optimum structured solutions to $\tI \cup \menges{\aeff}$ for $k_0 = \kappa-1$ (see Definition \ref{def:structured_solution_OPTk}). This means that at most one item is used from every $\tIk$ for $k \in \menges{ 0, \ldots, \kappa-1}$. (The item $\aeff$ has a profit $p(\aeff) < T$ such that it does not have to satisfy any structural conditions.) Then there are two possible cases:
\begin{itemize}
	\item One solution uses (at least) two items in $\tIe{\kappa}$. This is the case if and only if the optimum for $\tI \cup \menges{\aeff}$ is $2P_0$, and the solution consists of two item copies of the item $\taege{\kappa}{0}$ with $p(\taege{\kappa}{0}) = P_0$.
	\item Every solution uses at most one item in $\tIe{\kappa}$. Then, $\OPTkkvvs{\kappa-1}{\tI}{v'} = \OPTkkvvs{\kappa}{\tI}{v'}$ holds for all values $0 \leq v' \leq c$, and there is a value $0 \leq v \leq c$ such that
	\begin{IEEEeqnarray*}{rCl}
		\IEEEeqnarraymulticol{3}{l}{\OPTkkv{\kappa}{\tI} +  \OPTvv{\menge{\aeff}}{c-v} = \OPTkkv{\kappa-1}{\tI} + \OPTvv{\menge{\aeff}}{c-v}} \\ \quad
%			& \geq & \klammer{1-\frac{\eps}{4}} \klammer{1-\frac{\eps}{4}\fraclogeps}^{\kappa} \OPTv{I_L}\\
%			&& + \OPTvv{\menge{\aeff}}{c-v} \\
			&\geq & \klammer{1-\frac{\eps}{4}\fraclogeps}^{\kappa+1} \OPT(I) - T\enspace.
	\end{IEEEeqnarray*}
	Moreover, $\OPTkkvs{\kappa}{\tI}$ uses at least one item in $\tIe{\kappa-2} \cup \tIe{\kappa-1} \cup \tIe{\kappa}$, and/or we have $\OPTvv{\menge{\aeff}}{c-v} \geq \frac{1}{4} P_0$.
\end{itemize}
\end{lemma}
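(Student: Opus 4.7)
The plan splits the argument into the two cases, handles the equalities and inequalities of Case~2 separately, and then proves the final ``moreover'' claim by a direct numerical estimate.

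For Case~1, I would argue as follows. Any structured solution using two items of $\tIe{\kappa}$ picks items whose profits both lie in $\Lk = [P_0, 2P_0)$, so together they contribute at least $2P_0$; since Lemma~\ref{lemma:estimates_I-L_to_tI} bounds the whole structured solution by $\OPT(I) \le 2P_0$, the total profit must equal $2P_0$ exactly, which forces each chosen item to have profit exactly $P_0$ and no other (large or small) item to be used. Profit exactly $P_0$ lies in the first subinterval $\Lege{\kappa}{0}$, and by construction $\tIe{\kappa}$ contains only one representative of that subinterval, namely $\taege{\kappa}{0}$; hence both copies are $\taege{\kappa}{0}$. The converse is immediate.

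For Case~2, the identity $\OPTkkvvs{\kappa-1}{\tI}{v'}=\OPTkkvvs{\kappa}{\tI}{v'}$ for every $v'\le c$ should be obtained from exactly the same profit-ceiling argument: any two items of $\tIe{\kappa}$ in a feasible solution at volume $v'$ must be two copies of $\taege{\kappa}{0}$, and their combined size $2\,s(\taege{\kappa}{0})\le v'\le c$ would make them fit at the full volume $c$ as well, producing a Case~1-type optimum for $\tI\cup\{\aeff\}$, contradicting the Case~2 assumption. The inequality $\ge$ between the two quantities is trivial; combined with this, the equality follows. For the main chain of inequalities, I would choose $v=v^*$, the volume devoted to large items in an optimum solution of $I$, so that $\OPT(I)=\OPTvv{I_L}{v^*}+\OPTvv{I_S}{c-v^*}$, and then concatenate Theorem~\ref{thm:solution_quality_tIk} (with $k_0=\kappa-1$) and the two bounds of Lemma~\ref{lemma:solution_quality_I-L_a-eff} to get
\begin{IEEEeqnarray*}{rCl}
\OPTkkvv{\kappa-1}{\tI}{v^*}+\OPTvv{\{\aeff\}}{c-v^*}
&\ge& \klammer{1-\tfrac{\eps}{4}\fraclogeps}^{\kappa}\OPTvv{\ILred}{v^*}+\OPTvv{I_S}{c-v^*}-T\\
&\ge& \klammer{1-\tfrac{\eps}{4}\fraclogeps}^{\kappa+1}\OPT(I)-T.
\end{IEEEeqnarray*}

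For the ``moreover'' claim, the plan is to convert the lower bound into a crude numerical estimate. Bernoulli's inequality gives $(1-\tfrac{\eps}{4}\fraclogeps)^{\kappa+1}\ge 1-\tfrac{\eps}{4}$; together with $\OPT(I)\ge P_0$ and $T=\tfrac{\eps}{2}P_0$ this yields a lower bound of $P_0-\tfrac{3\eps}{4}P_0\ge\tfrac{13}{16}P_0$ for $\eps\le\tfrac14$. If no item of $\OPTkkvv{\kappa}{\tI}{v^*}$ lies in $\tIe{\kappa-2}\cup\tIe{\kappa-1}\cup\tIe{\kappa}$, then by the structural property (at most one item per $\tIe{k}$, $k\le\kappa-1$) the large-item profit is at most $\sum_{k=0}^{\kappa-3}2^{k+1}T=(2^{\kappa-1}-2)T<\tfrac{P_0}{2}$, so $\OPTvv{\{\aeff\}}{c-v^*}>\tfrac{13}{16}P_0-\tfrac12P_0=\tfrac{5}{16}P_0\ge\tfrac14P_0$.

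The main technical subtlety will be the equality $\OPTkkvvs{\kappa-1}{\tI}{v'}=\OPTkkvvs{\kappa}{\tI}{v'}$ \emph{for every} $v'$, since Case~2 only hypothesizes a property of optimum solutions at $v'=c$; the key observation is that a hypothetical 2-item-from-$\tIe{\kappa}$ solution is so rigid (two copies of the unique minimum-profit element $\taege{\kappa}{0}$) that its feasibility at any smaller volume would force feasibility at $c$, thereby reinstating Case~1.
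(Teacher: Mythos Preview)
Your proof is correct and follows essentially the same approach as the paper: Case~1 via the profit ceiling $2P_0$ from Lemma~\ref{lemma:estimates_I-L_to_tI}, the main inequality of Case~2 by choosing $v$ as the large-item volume of an optimum for $I$ and chaining Theorem~\ref{thm:solution_quality_tIk} with Lemma~\ref{lemma:solution_quality_I-L_a-eff}, and the ``moreover'' claim via Bernoulli together with the geometric bound $\sum_{k=0}^{\kappa-3}2^{k+1}T<\tfrac12 P_0$ (the paper gets $\tfrac34 P_0$ where you get $\tfrac{13}{16}P_0$, but both suffice). You are in fact more explicit than the paper about the equality $\OPTkkvvs{\kappa-1}{\tI}{v'}=\OPTkkvvs{\kappa}{\tI}{v'}$ for \emph{all} $v'\le c$: the paper simply asserts it, whereas your rigidity argument (any two items from $\tIe{\kappa}$ force profit exactly $2P_0$ and hence feasibility at $c$, reinstating Case~1) is precisely the missing justification.
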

\begin{proof}
Note that $\ILred$ does not contain any item with the profit $2 P_0$ (see Remark \ref{remark:no_item_with_profit_2P0}). By construction, this is still the case for $\tI$. Suppose now that one solution to $\tI \cup \menges{\aeff}$ uses more than one item in $\tIe{\kappa}$. Since items in $\tIe{\kappa}$ have profits in $[P_0, 2P_0)$, only two copies of the item $\taege{\kappa}{0}$ can be used, and we have $p(\taege{\kappa}{0}) = P_0$. In fact, $2 P_0$ is the maximum possible profit because $\OPT(\tI \cup \menges{\aeff}) \leq \OPT(I) \leq 2 P_0$ holds as we have seen in %Theorem \ref{thm:1-2_approximation_P_0} and 
Lemma \ref{lemma:estimates_I-L_to_tI}. Thus, the ``only if'' direction has been proved. The ``if''-direction is obvious.

Suppose now that every structured solution to $\tI \cup \menges{\aeff}$ for $k_0 = \kappa-1$ uses at most one item in $\tIe{\kappa}$. Thus, $\OPTkkvvs{\kappa-1}{\tI}{v'} = \OPTkkvvs{\kappa}{\tI}{v'}$ holds for all $0 \leq v' \leq c$. %Every solution for $\OPTkkvvs{\kappa-1}{\tI}{v'}$ can only use at most one item in $\tIe{\kappa}$ because 
%we would have $\OPTkkvvs{\kappa-1}{\tI}{v'} = 2 P_0$ similar to above, i.e.\ the optimum for $\tI \cup \menges{\aeff}$ would again be $2 P_0$ and use two items in $\tIe{\kappa}$, contradicting . 

Let $v \leq c$ now be the volume the large items $I_L$ occupy in an optimum solution to $I$. Then obviously $\OPT(I) = \OPTv{I_L} + \OPTvv{I_S}{c-v}$ holds. We have the following inequality:
\begin{IEEEeqnarray*}{rCl}
\IEEEeqnarraymulticol{3}{l}{  \OPTkkv{\kappa}{\tI}  +  \OPTvv{\menge{\aeff}}{c-v}  = \OPTkkv{\kappa-1}{\tI}  +  \ \OPTvv{\menge{\aeff}}{c-v}  } \\ \quad
&\stackrel{\text{Thm.~\ref{thm:solution_quality_tIk}}}{\geq}& \klammer{1-\frac{\eps}{4}\fraclogeps}^{\kappa} \OPTv{\ILred} + \OPTvv{\menge{\aeff}}{c-v}\\
&\stackrel{\text{Lem.~\ref{lemma:solution_quality_I-L_a-eff}}}{\geq}& \klammer{1-\frac{\eps}{4}\fraclogeps}^{\kappa+1} \OPTv{I_L} + \OPTvv{I_S}{c-v} - T\\
&\geq& \klammer{1-\frac{\eps}{4}\fraclogeps}^{\kappa+1} \klammer{\OPTv{I_L} + \OPTvv{I_S}{c-v}} - T\\
& = &  \klammer{1-\frac{\eps}{4}\fraclogeps}^{\kappa+1} \OPT(I) - T\enspace. \label{eq:main_estimate_combination} \IEEEyesnumber
\end{IEEEeqnarray*}

For the final property, suppose that no item in $\tIe{\kappa-2} \cup \tIe{\kappa-1} \cup \tIe{\kappa}$ is used in a solution for $\OPTkkvs{\kappa}{\tI}$. Then we have
	\[\OPTkkv{\kappa}{\tI} \leq \sum_{k=0}^{\kappa-3} \max\menge{p(a) \ | \ a \in \tIk} \leq \sum_{k=0}^{\kappa-3}2 \cdot 2^k T < 2^{\kappa-1} T \stackrel{\eqref{eq:definition_T}}{=} \frac{1}{2} P_0\enspace.
\]
On the other hand, Inequality \eqref{eq:main_estimate_combination} together with $(1-\delta)^k \geq (1-k \cdot \delta)$ for $\delta < 1$ yields
\begin{IEEEeqnarray*}{rCl}
\IEEEeqnarraymulticol{3}{l}{\OPTkkv{\kappa}{\tI} + \OPTvv{\menge{\aeff}}{c-v} \geq  \klammer{1-\frac{\eps}{4} \efraclogeps{\kappa+1}} \OPT(I) - T }\\ \quad
& \stackrel{\eqref{eq:definition_T}}{=}& \klammer{1-\frac{\eps}{4}}\OPT(I) - \frac{1}{2} \eps P_0 
 \geq \klammer{1-\frac{\eps}{4}} \OPT(I) - \frac{1}{2} \eps \OPT(I) \\
&\stackrel{\eps\leq \nicefrac{1}{4}}{\geq}& \frac{3}{4} \OPT(I) \geq \frac{3}{4} P_0\enspace.
\end{IEEEeqnarray*}
Hence, $\OPTvvs{\menge{\aeff}}{c-v} \geq \frac{1}{4} P_0$ holds. The final property of the second case follows.
\end{proof}
\begin{definition} \label{def:aeffc}
Take $\oberks{\frac{\nicefrac{P_0}{4}}{p(\aeff)}}$ items $\aeff$. If their total size is at most $c$, they are glued together to $\aeffc$. 
\end{definition}
Obviously, $\aeffc$ consists of the smallest number of items $\aeff$ whose total profit is at least $\frac{P_0}{4}$. Moreover, $\aeffc$ is a large item.
\begin{definition} \label{def:structured_solution_OPT-S}
Take a knapsack volume $v \leq c$. Consider the following solutions to $\tI\cup\menges{\aeffc}$ of size at most $v$:
\begin{itemize}
	\item They are structured for $k = \kappa$, i.e.\ they use for every $k \in \menges{0, \ldots, \kappa}$ at most one item in $\tIk$.
	\item They additionally use the item $\aeffc$ at most once and at least one item $\ta \in \tIe{\kappa-2} \cup \tIe{\kappa-1} \cup \tIe{\kappa} \cup \menges{\aeffc}$.
\end{itemize}
  Hence, these solutions have a profit of at least $p(\ta) \geq \zke{\kappa-2} T = \frac{1}{4} P_0$. These special solutions are called \emph{structured solutions with a lower bound} (on the profit). 
	
	The optimal profit for such solutions of total size at most $v$ is denoted by $\OPTSvs{\tI\cup\menges{\aeffc}}$. If $v$ is too small such that such a solution does not exist, we set $\OPTSvs{\tI\cup\menges{\aeffc}} = 0$.
%Let $\OPTSvs{\tI\cup\menges{\aeffc}}$ be the optimal solution for $v \leq c$ that additionally satisfies the following conditions:
%\begin{enumerate}
	%\item Similar to $\OPTkkv{\kappa}{\cdot}$, it uses for every $k \in \menges{ 0, \ldots, \kappa}$ at most one item in $\tIk$ and also the item $\aeffc$ at most once.
	%\item It uses at least one item $\ta \in \tIe{\kappa-2} \cup \tIe{\kappa-1} \cup \tIe{\kappa} \cup \menges{\aeffc}$. 
%\end{enumerate} 
%Such solutions therefore have a profit of at least $p(\ta) \geq \zke{\kappa-2} T = \frac{1}{4} P_0$. 
%If $v$ is too small such that such a solution does not exist, we set $\OPTSvs{\tI\cup\menges{\aeffc}} = 0$.
\end{definition}
\begin{theorem} \label{thm:structured_solution_exists}
In the second case of Lemma \ref{lemma:two_cases}, there is a value $0 \leq v \leq c$ such that
\begin{IEEEeqnarray*}{c}
	\OPTSv{\tI \cup \menge{\aeffc}} +  \OPTvv{\menge{\aeff}}{c-v} 
	\geq  \klammer{1-\frac{\eps}{4}\fraclogeps}^{\kappa+1} \OPT(I) - T\enspace.
\end{IEEEeqnarray*}
\end{theorem}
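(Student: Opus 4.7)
The plan is to start from the inequality in the second case of Lemma \ref{lemma:two_cases} at its volume (call it $\bar v$) and to lift it to an inequality involving $\OPTSvs{\tI \cup \menges{\aeffc}}$ for a suitably chosen $v$. Passing from $\OPT_{\leq \kappa}$ to $\OPT_{\mathrm{St}}$ adds two structural requirements---at most one copy of $\aeffc$, and the presence of some item in $\tIe{\kappa-2} \cup \tIe{\kappa-1} \cup \tIe{\kappa} \cup \menges{\aeffc}$. These will be certified using the ``and/or'' clause at the end of Lemma \ref{lemma:two_cases}, which naturally splits the argument into two sub-cases according to whether a solution $S$ attaining $\OPTkkvvs{\kappa}{\tI}{\bar v}$ already contains an item in $\tIe{\kappa-2} \cup \tIe{\kappa-1} \cup \tIe{\kappa}$ or not.

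In the first sub-case, $S$ itself already qualifies as a structured solution with lower bound for $\tI \cup \menges{\aeffc}$ (with $\aeffc$ unused), so $\OPTSvvs{\tI \cup \menges{\aeffc}}{\bar v} \geq \OPTkkvvs{\kappa}{\tI}{\bar v}$; choosing $v := \bar v$ and invoking Lemma \ref{lemma:two_cases} yields the theorem immediately.

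In the second sub-case, since $S$ uses no item in $\tIe{\kappa-2} \cup \tIe{\kappa-1} \cup \tIe{\kappa}$, the ``and/or'' clause forces $\OPTvvs{\menges{\aeff}}{c - \bar v} \geq \frac{1}{4} P_0$. The key idea is to shift precisely one $\aeffc$-sized chunk of $\aeff$ copies from the small-item side into the large-item side. From $\unterks{\frac{c - \bar v}{s(\aeff)}} p(\aeff) \geq \frac{P_0}{4}$ and the integrality of the floor, $\unterks{\frac{c - \bar v}{s(\aeff)}} \geq \oberks{\frac{P_0/4}{p(\aeff)}}$; hence $\aeffc$ is well-defined by Definition \ref{def:aeffc} and $s(\aeffc) \leq c - \bar v$. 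Set $v := \bar v + s(\aeffc) \leq c$ and attach one copy of $\aeffc$ to $S$: the enlarged solution fits in $v$, remains structured for $k = \kappa$, uses $\aeffc$ exactly once, and satisfies the lower bound requirement via $\aeffc$ itself, giving $\OPTSvvs{\tI \cup \menges{\aeffc}}{v} \geq \OPTkkvvs{\kappa}{\tI}{\bar v} + p(\aeffc)$. The delicate step is the compensating bookkeeping on the $\aeff$ side: because $\oberks{\frac{P_0/4}{p(\aeff)}}$ is an integer, the identity $\unterks{x - m} = \unterks{x} - m$ yields the \emph{exact} equality $\OPTvvs{\menges{\aeff}}{c - v} = \OPTvvs{\menges{\aeff}}{c - \bar v} - p(\aeffc)$, so adding the two bounds makes $p(\aeffc)$ cancel. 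The right-hand side reduces to $\OPTkkvvs{\kappa}{\tI}{\bar v} + \OPTvvs{\menges{\aeff}}{c - \bar v}$, which Lemma \ref{lemma:two_cases} bounds below by $\klammer{1 - \frac{\eps}{4} \fraclogeps}^{\kappa + 1} \OPT(I) - T$. The main obstacle is precisely ensuring this cancellation is exact rather than approximate; any off-by-one loss in the floor identity would not be absorbed elsewhere in the final inequality.
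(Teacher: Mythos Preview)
Your proposal is correct and follows essentially the same approach as the paper: both split into the two sub-cases according to whether an optimum solution for $\OPTkkvvs{\kappa}{\tI}{\bar v}$ already contains an item in $\tIe{\kappa-2}\cup\tIe{\kappa-1}\cup\tIe{\kappa}$, and in the second sub-case both shift one $\aeffc$ into the large-item side using the exact identity $p(\aeffc)+\OPTvvs{\menges{\aeff}}{c-\bar v-s(\aeffc)}=\OPTvvs{\menges{\aeff}}{c-\bar v}$. The only cosmetic differences are that the paper re-expands the chain of inequalities through Theorem~\ref{thm:solution_quality_tIk} and Lemma~\ref{lemma:solution_quality_I-L_a-eff} rather than citing Lemma~\ref{lemma:two_cases} as a black box, and that you make the floor/ceiling arithmetic behind the cancellation explicit.
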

\begin{proof}
Like in the proof of Lemma \ref{lemma:two_cases}, let $v'$ be the volume the large items $I_L$ occupy in an optimum solution to $I$ so that $\OPTvvs{I_L}{v'} + \OPTvvs{I_S}{c-v'} = \OPT(I)$. Consider an optimum solution for $\OPTkkvvs{\kappa}{\tI}{v'}$ and suppose that it does not use any item in $\tIe{\kappa-2} \cup \tIe{\kappa-1} \cup \tIe{\kappa}$. Lemma \ref{lemma:two_cases} states that $\OPTvvs{\menges{\aeff}}{c-v'}$ has a profit of at least $\frac{1}{4} P_0$. Thus, a subset of the item copies of $\aeff$ can be replaced by $\aeffc$, and $c - v' \geq s(\aeffc)$. We set $v:= v' + s(\aeffc)$. Note that $\OPTSvs{\tI \cup \menges{\aeffc}} \geq \OPTkkvvs{\kappa}{\tI}{v'} + p(\aeffc)$. Moreover, $\OPTkkvvs{\kappa}{\tI}{v'} = \OPTkkvvs{\kappa-1}{\tI}{v'}$ holds because we are in the second case of Lemma \ref{lemma:two_cases}. We get the following inequalities:
\begin{IEEEeqnarray*}{rCl}
\IEEEeqnarraymulticol{3}{l}{\OPTSv{\tI \cup \menge{\aeffc}} + \OPTvv{\menge{\aeff}}{c-v}} \\ \quad
& \geq & \OPTkkvv{\kappa}{\tI}{v'} + p(\aeffc) + \OPTvv{\menge{\aeff}}{c-v'-s(\aeffc)}\\
& = & \OPTkkvv{\kappa-1}{\tI}{v'} + \OPTvv{\menge{\aeff}}{c-v'}\\
%& = & \OPTkkvv{\kappa-1}{\tI}{v'} + \OPTvv{\menge{\aeff}}{c-v'}\\
&\stackrel{\text{Thm.~\ref{thm:solution_quality_tIk}}}{\geq}& \klammer{1-\frac{\eps}{4}\fraclogeps}^{\kappa} \OPTvv{\ILred}{v'} + \OPTvv{\menge{\aeff}}{c-v'}\\
&\stackrel{\text{Lem.~\ref{lemma:solution_quality_I-L_a-eff}}}{\geq}& \klammer{1-\frac{\eps}{4}\fraclogeps}^{\kappa+1} \OPTvv{I_L}{v'} + \OPTvv{I_S}{c-v'} - T\\
&\geq& \klammer{1-\frac{\eps}{4}\fraclogeps}^{\kappa+1} \klammer{\OPTvv{I_L}{v'} + \OPTvv{I_S}{c-v'}} - T\\
& = & \klammer{1-\frac{\eps}{4}\fraclogeps}^{\kappa+1} \OPT(I) - T\enspace.
\end{IEEEeqnarray*}
Note that $\OPTvvs{\menges{\aeff}}{c-v}$ is well-defined---and therefore the entire chain of inequalities feasible---because $c-v = c-v'-s(\aeffc) \geq 0$.

Suppose now that the optimal solution uses at least one item in $\tIe{\kappa-2} \cup \tIe{\kappa-1} \cup \tIe{\kappa}$. We can then directly set $v := v'$, and the proof is similar to the first case above.

Roughly speaking, a solution in the first case of this proof satisfies the lower bound of the theorem and uses at most one item in every $\tIe{k}$, but no item in $\tIe{\kappa-2}, \tIe{\kappa-1}$ or $\tIe{\kappa}$. This implies that enough items $\aeff$ are part of the solution such that a subset of them can be replaced by $\aeffc$.
\end{proof}
So far, we have not constructed an actual solution. We have only shown in Theorem \ref{thm:structured_solution_exists} that there is a solution to $\tI \cup \menges{\aeffc} \cup \menges{\aeff}$ that is close to $\OPT(I)$ and that is a structured solution with a lower bound.

\begin{theorem}\label{thm:construction_tI_aeffc_running-time}
The cardinality of $\tIk$ is in $\Ohs{\freps \log\freps}$, i.e.\ $\tI$ has $\Ohs{\freps \log^2\freps}$ items. 
Algorithm \ref{alg:construction_tIk} constructs $\tI$ in time $\Ohs{\frepse{2} \log^3(\freps)}$ and space $\Ohs{\freps \log^2\freps}$, which also includes the space to store $\tI$ and the backtracking information. The item $\aeffc$ can be constructed in time $\Oh{1}$.
\end{theorem}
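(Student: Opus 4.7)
The plan is to prove the four sub-claims in the order they appear: bound $|\tIk|$, sum across $k$ to bound $|\tI|$, analyze the nested loops of Algorithm~\ref{alg:construction_tIk} to obtain the running time and space bound, and finally handle $\aeffc$.

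First I would bound $|\tIk|$ by counting the sub-intervals $\Lkg$. By \eqref{eq:def:Lkg}, $\gamma$ ranges over $\{0, \ldots, 2^{\kappa+1}(\kappa+1) - 1\}$, and $\tIk$ has at most one representative per sub-interval. Since $\kappa = \log_2(\tfrac{2}{\eps})$, we have $2^{\kappa+1}(\kappa+1) = \tfrac{4}{\eps}(\log_2(\tfrac{2}{\eps}) + 1) \in \Ohs{\freps \log \freps}$, giving the first claim. Then $\tI = \bigcup_{k=0}^\kappa \tIk$ is a union of $\kappa+1 \in \Ohs{\log \freps}$ such sets, so $|\tI| \in \Ohs{\freps \log^2 \freps}$.

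For the running time of Algorithm~\ref{alg:construction_tIk}, the outer initialization loop is dominated by the gluing phase and can be ignored. In the main phase, for each $k \in \{0,\ldots,\kappa-1\}$ the algorithm iterates over pairs $(\gamma, \gamma')$ with $\gamma \leq \gamma'$ in a range of size $2^{\kappa+1}(\kappa+1) \in \Ohs{\freps \log \freps}$, yielding $\Ohs{\frepse{2} \log^2 \freps}$ pairs per value of $k$. For each pair the body performs a constant number of arithmetic operations: it computes $s(\ta)$ and $p(\ta)$, determines $(k+1, \gamma(\ta))$ in $\Oh{1}$ using the assumption on the logarithm, compares against the currently stored $\taege{k+1}{\gamma(\ta)}$, and updates the backtracking pointer if needed. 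Summing over the $\kappa = \Ohs{\log \freps}$ values of $k$ gives the desired $\Ohs{\frepse{2} \log^3 \freps}$ running time.

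The space bound is essentially the first statement applied to storage: each $\takg$ requires $\Oh{1}$ memory (profit, size, and a pair of pointers into $\tIe{k-1}$ for $\backtrack$), so storing all sets $\tIk$ together with their backtracking pointers takes $\Ohs{\freps \log^2 \freps}$. The scratch variables used by the loops are $\Oh{1}$ and therefore absorbed. Finally, $\aeffc$ is built from $\oberks{\tfrac{P_0/4}{p(\aeff)}}$ copies of a single item, which takes $\Oh{1}$ time if we record only the multiplicity rather than enumerating copies (and we verify the size $\oberks{\tfrac{P_0/4}{p(\aeff)}} \cdot s(\aeff) \leq c$ in constant time as well). The only step requiring care is the constant-time bookkeeping inside the inner loop --- in particular, that $\gamma(\ta)$ can indeed be computed without a search --- but this follows directly from the assumption on logarithms stated in the preliminaries.
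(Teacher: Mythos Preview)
Your proposal is correct and follows essentially the same approach as the paper: count the $\Lkg$ sub-intervals to bound $|\tIk|$ and $|\tI|$, bound the gluing phase by $\kappa$ times the number of pairs $(\gamma,\gamma')$, and observe that storage is proportional to the number of $\takg$ with constant backtracking data per item. The paper's proof is terser (it defers the counting to Theorem~\ref{thm:number_items_I-L-red_time_space_constructing_I-L-red} and writes the loop bound as a single expression), but your added remarks about the initialization loop being dominated and $\gamma(\ta)$ being computable in $\Oh{1}$ are consistent with it.
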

\begin{proof}
The statement for $\aeffc$ is trivial: the number of items $\aeffc$ to glue together can be determined by division.

The number of items in $\tIk$ and $\tI$ can be derived like the number of items in $\ILred$ in Theorem \ref{thm:number_items_I-L-red_time_space_constructing_I-L-red}. 
The running time of Algorithm \ref{alg:construction_tIk} is obviously dominated by the second for-loop. It is in
	\[\Oh{\kappa \cdot \klammer{2^{\kappa+1}(\kappa+1)}^2} = \Oh{\log \klammer{\freps} \cdot \klammer{\frac{1}{\eps} \log \freps}^2 } = \Oh{\frepse{2} \log^3\klammer{\freps}}\enspace.
\]
The space complexity is dominated by the space to save the $\takg$ and the backtracking information, which is again asymptotically equal to the number of items in $\tI$.
\end{proof}
\section{Finding an Approximate Structured Solution by Dynamic Programming} \label{sec:dynamic_programming}
The previous section has presented three cases:
\begin{enumerate}
	\item The instance $I$ has one item of profit $2 P_0$: return this item for an optimum solution, and $\OPT(I) = 2 P_0$ (see Remark \ref{remark:no_item_with_profit_2P0}).
	\item If this is not the case, and $\tI$ has one item of profit $P_0$ and size at most $\frac{c}{2}$, two copies of this item are an optimum solution to $\tI \cup \menges{\aeff}$ (see Lemma \ref{lemma:two_cases}). Undoing the gluing returns an optimum solution with $\OPT(I) = 2 P_0$.
	\item Otherwise, there is an approximate structured solution to $\tI \cup \menges{\aeffc} \cup \menges{\aeff}$ with a lower bound (see Theorem \ref{thm:structured_solution_exists}).% where the large items respect the structure of : the solution uses at most one item from every $\tIk$ for $k \in \menges{ 0, \ldots, \kappa}$ as well as $\aeffc$ at most once. At the same time, at least one item $\ta \in \tIe{\kappa-2} \cup \tIe{\kappa-1} \cup \tIe{\kappa} \cup \menges{\aeffc}$ is chosen.
\end{enumerate}
The first two cases can be easily checked, which is the reason why it has not been necessary to construct the set $\tIe{\kappa+1}$. We will from now on assume that we are in the third case: a solution uses at most one item from every $\tIk$ for $k \in \menges{ 0, \ldots, \kappa}$ as well as $\aeffc$ at most once. At the same time, at least one item $\ta \in \tIe{\kappa-2} \cup \tIe{\kappa-1} \cup \tIe{\kappa} \cup \menges{\aeffc}$ is chosen. (See Definition \ref{def:structured_solution_OPT-S}.)

%every structured solution to $\tI \cup \menges{\aeffc} \cup \menges{\aeff}$ satisfies the properties of Definition \ref{def:structured_solution_OPT-S}. 

We use dynamic programming to find for all $0 \leq v \leq c$ the corresponding set of large items $V \subseteq \tI \cup \menges{\aeffc}$ with $s(V) \leq v$. For convenience, let $\tIe{\kappa+1} := \menges{\aeffc}$. We introduce tuples $(p,s,k)$ similar to Lawler \cite{Lawler1979}. For profit $p$ with $0 \leq p \leq 2 P_0$ and size $0 \leq s \leq c$, the tuple $(p,s,k)$ states that there is an item set of size $s$ whose total profit is $p$. Moreover, the set has only items in $\tIk \cup \cdots \cup \tIe{\kappa+1}$ and respects the structure above.

The dynamic program is quite simple: start with the dummy tuple set $\Fke{\kappa+2}:=\menges{(0,0,\kappa+2)}$. For $k = \kappa + 1, \ldots, \kappa-2$, the tuples in $\Fk$ are recursively constructed by 
\begin{align*}
	\Fk := & \menge{(p,s,k) \ | \ (p,s,k+1) \in \Fke{k+1}} \\
	&\cup \menge{(p+p(\ta),s + s(\ta),k) \ | \ (p,s,k+1) \in \Fke{k+1}, \ta \in \tIe{k}, s+s(\ta) \leq c}\enspace.
\end{align*}
Note that $(0,0,k+1) \in \Fke{k+1}$, which guarantees that $\Fk$ also contains the tuples $(p(\ta), s(\ta), k)$ for $\ta \in \tIk$ if $k \in \menges{ \kappa+1, \ldots,\kappa-2 } $. For $k \in \menges{ \kappa-3, \ldots, 0}$, this tuple $(0,0,k+1)$ is no longer considered to form the new tuples, which guarantees that tuples of the form $(p + p(\ta), s+s(\ta), k)$ for $\ta \in \tIk$ have $p,s \neq 0$. The recursion becomes
\begin{align*}
	\Fk &:=  \menge{(p,s,k) \ | \ (p,s,k+1) \in \Fke{k+1}} \\
	&\cup \menge{(p+p(\ta),s + s(\ta),k) \ | \ (p,s,k+1) \in \Fke{k+1} \setminus\menge{(0,0,k+1)}, \ta \in \tIe{k}, s+s(\ta) \leq c}\enspace.
\end{align*}
The actual item set corresponding to $(p,s,k)$ can be reconstructed by saving backtracking information.
\begin{definition}
A tuple $(p_2,s_2,k)$ is dominated by $(p_1,s_1,k)$ if $p_2 \leq p_1$ and $s_2 \geq s_1$.
\end{definition}
As in \cite{Lawler1979}, dominated tuples $(p,s,k+1)$ are now removed from $\Fke{k+1}$ before $\Fke{k}$ is constructed. This does not affect the outcome: dominated tuples only stand for sets of items with a profit not larger and a size not smaller than non-dominated tuples. A non-dominated tuple $(p,s,k)$ is therefore optimal, i.e.\ the profit $p$ can only be obtained with items of size at least $s$ if items in $\tIk, \ldots, \tIe{\kappa+1}$ are considered. 
\begin{lemma} \label{lemma:tuple_property}
A tuple $(p,s,k) \in \Fk$ stands for a structured solution with a lower bound (see Definition \ref{def:structured_solution_OPT-S}). Therefore, we have $p \geq \zke{\kappa-2} T$ if $p > 0$. For every $v \leq c$, there is a tuple $(p,s,0) \in \Fke{0}$ with $p = \OPTSvs{\tI\cup\menges{\aeffc}}$ and $s \leq v$.
\end{lemma}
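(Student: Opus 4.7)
The plan is to establish the three parts of the lemma—feasibility of each tuple as a structured solution with a lower bound, the profit bound $p \geq 2^{\kappa-2}T$, and existence of an optimal tuple at level $0$—by reverse induction on $k$ running from $k = \kappa+2$ down to $k = 0$, directly following the recursive definition of $\Fk$.

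First I would show by decreasing induction on $k$ that every tuple $(p,s,k) \in \Fk$ corresponds to some multiset $V \subseteq \tIk \cup \cdots \cup \tIe{\kappa+1}$ of total profit $p$ and total size $s \leq c$ that uses at most one item from each $\tIe{j}$ with $j \in \menges{k,\ldots,\kappa+1}$. The base case $\Fke{\kappa+2} = \menges{(0,0,\kappa+2)}$ represents the empty multiset. In the inductive step, a tuple in $\Fk$ is either inherited from $\Fke{k+1}$ (same multiset) or obtained by adding exactly one item of $\tIk$, so the "at most one per level" invariant is preserved and $s\leq c$ is ensured by the explicit check $s + s(\ta) \leq c$ in the recursion. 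Backtracking information is preserved along the way, so the multiset can be recovered from the tuple.

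Next I would prove the profit bound. The decisive feature of the recursion is that for $k \in \menges{\kappa+1,\ldots,\kappa-2}$ the zero tuple $(0,0,k+1)$ is allowed as a base, whereas for $k \leq \kappa-3$ it is explicitly excluded from generating new tuples. Hence any non-zero tuple in $\Fk$ for $k \leq \kappa-3$ must trace back through the recursion to a non-zero tuple already present in $\Fke{\kappa-2}$, which in turn must contain at least one item from $\bigcup_{j=\kappa-2}^{\kappa+1}\tIe{j}$. Each such item has profit at least $2^{\kappa-2}T$: items in $\tIe{j}$ for $j\in\menges{\kappa-2,\kappa-1,\kappa}$ because $\tIe{j}\subseteq\Le{j} = \intervLr{2^jT,2^{j+1}T}$, and $\aeffc \in \tIe{\kappa+1}$ by Definition \ref{def:aeffc}, which forces $p(\aeffc)\geq \tfrac{1}{4}P_0 = 2^{\kappa-2}T$ via \eqref{eq:definition_T}. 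Combined with the structural invariant and the at-most-one use of $\aeffc$ (since $|\tIe{\kappa+1}|=1$), this is exactly Definition \ref{def:structured_solution_OPT-S}.

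Finally, for the optimality claim I would fix $v \leq c$, take any structured solution with a lower bound $V^*$ of size at most $v$ attaining $\OPTSvs{\tI\cup\menges{\aeffc}}$, and decompose $V^* = V_0\cup\cdots\cup V_{\kappa+1}$ with $|V_j|\leq 1$. I would prove by induction on decreasing $k$ that there exists $(p_k,s_k,k)\in\Fk$ with $p_k\geq p(V_k\cup\cdots\cup V_{\kappa+1})$ and $s_k\leq s(V_k\cup\cdots\cup V_{\kappa+1})$. Ignoring dominance pruning, the recursion literally produces the canonical tuple of the $V^*$-suffix; the main obstacle is handling dominance removal. If the canonical tuple at stage $k+1$ has been discarded in favour of a dominator $(p',s',k+1)$ with $p'\geq$ canonical profit and $s'\leq$ canonical size, then extending the dominator by $V_k$ still respects the size bound $c$ (so the recursion admits the extension) and preserves both invariants, since profit only grows and size only shrinks relative to the canonical tuple. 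The invariant therefore passes cleanly through every pruning event. At $k=0$ we obtain $(p,s,0)\in\Fke{0}$ with $s\leq s(V^*)\leq v$ and $p\geq p(V^*)=\OPTSvs{\tI\cup\menges{\aeffc}}$; by the first two parts of the lemma this tuple itself represents a structured solution with a lower bound of size at most $v$, so $p\leq\OPTSvs{\tI\cup\menges{\aeffc}}$, and equality holds.
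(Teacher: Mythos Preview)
Your proposal is correct and follows essentially the same approach as the paper's proof, which is a brief sketch noting that the recursion enforces the at-most-one-per-level structure, that nonzero tuples for $k\leq\kappa-3$ trace back to an item in $\tIe{\kappa-2}\cup\cdots\cup\tIe{\kappa+1}$, and that dominance removal cannot destroy an optimum tuple. Your treatment of the dominance step is more careful than the paper's, and your closing observation that the resulting tuple is itself a feasible structured solution (forcing equality rather than just $p\geq\OPTSvs{\tI\cup\menges{\aeffc}}$) makes explicit a point the paper leaves implicit.
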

\begin{proof}
This lemma directly follows from the dynamic program: tuples use at most one item from every $\tIk$. For $k \in \menges{ \kappa-2, \ldots, \kappa+1}$, a tuple with $p > 0$ represents an item set that uses at least one item in $\tIe{k}, \ldots, \tIe{\kappa+1}$, and such an item has a profit of at least $\zke{\kappa-2} T$. Tuples for $k \leq \kappa-3$ with $p > 0$ are only derived from tuples that use at least one item in $\tIe{\kappa-2}, \ldots, \tIe{\kappa+1}$. If dominated tuples are not removed, the dynamic program obviously constructs tuples for all possible structured solutions with a lower bound, especially the optimum combinations for every $0 \leq v \leq c$. Removing dominated tuples does not affect the tuples that stand for the optimum item combinations so that the second property still holds.
\end{proof}
While the dynamic program above constructs the desired tuples, their number may increase dramatically until $\Fke{0}$ is obtained. We therefore use approximate dynamic programming for the tuples with profits in $[\frac{1}{4}P_0, 2 P_0]$. This method is inspired by the dynamic programming used in \cite{Kellerer2003} (see also \cite[pp.~97--112]{Kellerer2004}).

Definition \ref{def:structured_solution_OPT-S} and Lemma \ref{lemma:tuple_property} state that a tuple $(p,s,k)$ with $p > 0$ satisfies $p \geq \zke{\kappa-2} T$. Apart from $(0,0,k)$, all tuples have therefore profits in the interval $[2^{\kappa-2}T, 2P_0] \stackrel{\eqref{eq:definition_T}}{=} [\frac{1}{4}P_0,2 P_0] = [\zke{\kappa-2} T, \ldots, \zke{\kappa+1}T]$. We partition this interval into sub-intervals of length $\zke{\kappa-2} K$. We get
\begin{align*}
	[2^{\kappa-2}T, 2P_0] &= \bigcup_{\xi = 0}^{\xi_0} \intervLr{\zke{\kappa-2} T + \xi \cdot \zke{\kappa-2} K, \ \zke{\kappa-2} T + (\xi+1) \zke{\kappa-2} K} \cup \menge{2 P_0}\\
	& =:  \bigcup_{\xi = 0}^{\xi_0} \tLkx \cup \tLkxe{\xi_0 + 1}
\end{align*}
for $\xi_0 := 7 (\kappa+1) 2^{\kappa+1} - 1$. (A short calculation shows that $\zke{\kappa-2}T + (\xi_0 + 1)\zke{\kappa-2} K = 2 P_0$.) The approximate dynamic program keeps for every $\xi \in \menges{ 0, \ldots, \xi_0 + 1}$ only the tuple $(p,s,k)$ with $p \in \tLkx$ that has the smallest size $s$. The dominated tuples are removed when all tuples for $k$ have been constructed. The modified dynamic program is presented in Algorithm \ref{alg:approx_dyn_prog} and shown in Figure \ref{fig:approx_dyn_prog}. The sets of these non-dominated tuples are denoted by $\Dk$. 
%For convenience, we also introduce the interval $\tLkxe{-1} := \menges{0}$ for the tuples $(0,0,k)$, and 
For convenience, $(p(\xi), s(\xi),k) \in \Dk$ denotes the smallest tuple with a profit in $\tLkx$. We again save the backtracking information during the execution of the algorithm.
\begin{algorithm}
$\Dke{\kappa+2} := \menge{(0,0,\kappa+2)}$\;
$\backtrack{(0,0,\kappa+2)} := \emptyset$\;
\For{$k = \kappa+1, \ldots, 0$}{
	$\Dk := \emptyset$\;
	%\For{$\xi = 0, \ldots, \xi_0+1$}{
		%$(p(\xi),s(\xi),k) := (\infty,\infty,k)$\;
		%$\Dk := \Dk \cup \menge{(p(\xi),s(\xi),k)}$\;
	%}
	\For{$(p(\xi),s(\xi),k+1) \in \Dke{k+1}$}{
		$\Dk := \Dk \cup \menge{(p(\xi),s(\xi),k)}$\;
		%$(p(\xi),s(\xi),k) := (p(\xi),s(\xi),k+1)$\;
		$\backtrack(p(\xi),s(\xi),k) := \backtrack(p(\xi),s(\xi),k+1)$\;
	}
	\For{$\ta \in \tIk$}{
		\For(\tcp*[f]{Construction of new tuples}){$(p,s,k+1) \in \Dke{k+1} \setminus \menges{(0,0,k+1)}$}{
			$(p',s',k) := (p + p(\ta), s + s(\ta),k)$\;
			Determine $\xi'$ for $(p',s',k)$ such that $p' \in \tLkxe{\xi'}$\;
			\If{$s' < s(\xi')$ or $(p(\xi'),s(\xi'),k) = \emptyset$}{
			\tcp{Only new tuples of smaller size are kept}
				$\Dk := \Dk \setminus \menges{(p(\xi'),s(\xi'),k)}$\;
				$(p(\xi'), s(\xi'), k) := (p',s',k)$\;
				$\backtrack{(p(\xi'), s(\xi'), k)} := \klammer{(p,s,k+1), \ta}$\;
				$\Dk := \Dk \cup \menges{(p(\xi'),s(\xi'),k)}$\;
			}
		}
		\If{$k \geq \kappa-2$}{
		\tcp{Construction of (possible) tuples $(p(\ta), s(\ta), k)$ for $k \geq \kappa - 2$}
			Determine $\xi'$ for $p(\ta)$ such that $p(\ta) \in \tLkxe{\xi'}$\;
			\If{$s(\ta) < s(\xi')$ or $(p(\xi'),s(\xi'),k) = \emptyset$}{
				$\Dk := \Dk \setminus \menges{(p(\xi'),s(\xi'),k)}$\;
				$(p(\xi'), s(\xi'), k) := (p(\ta),s(\ta),k)$\;
				$\backtrack{(p(\xi'), s(\xi'), k)} := \klammer{\ta}$\;
				$\Dk := \Dk \cup \menges{(p(\xi'),s(\xi'),k)}$\;
			}
		}
	}
Remove dominated tuples from $\Dk$\;
}
\caption{The approximate dynamic programming}
\label{alg:approx_dyn_prog}
\end{algorithm}
\begin{figure}
\centering
\subfloat[The tuples in $D^{(k+1)}$\label{fig:approx_dyn_prog_a}]{\includegraphics[scale=0.6]{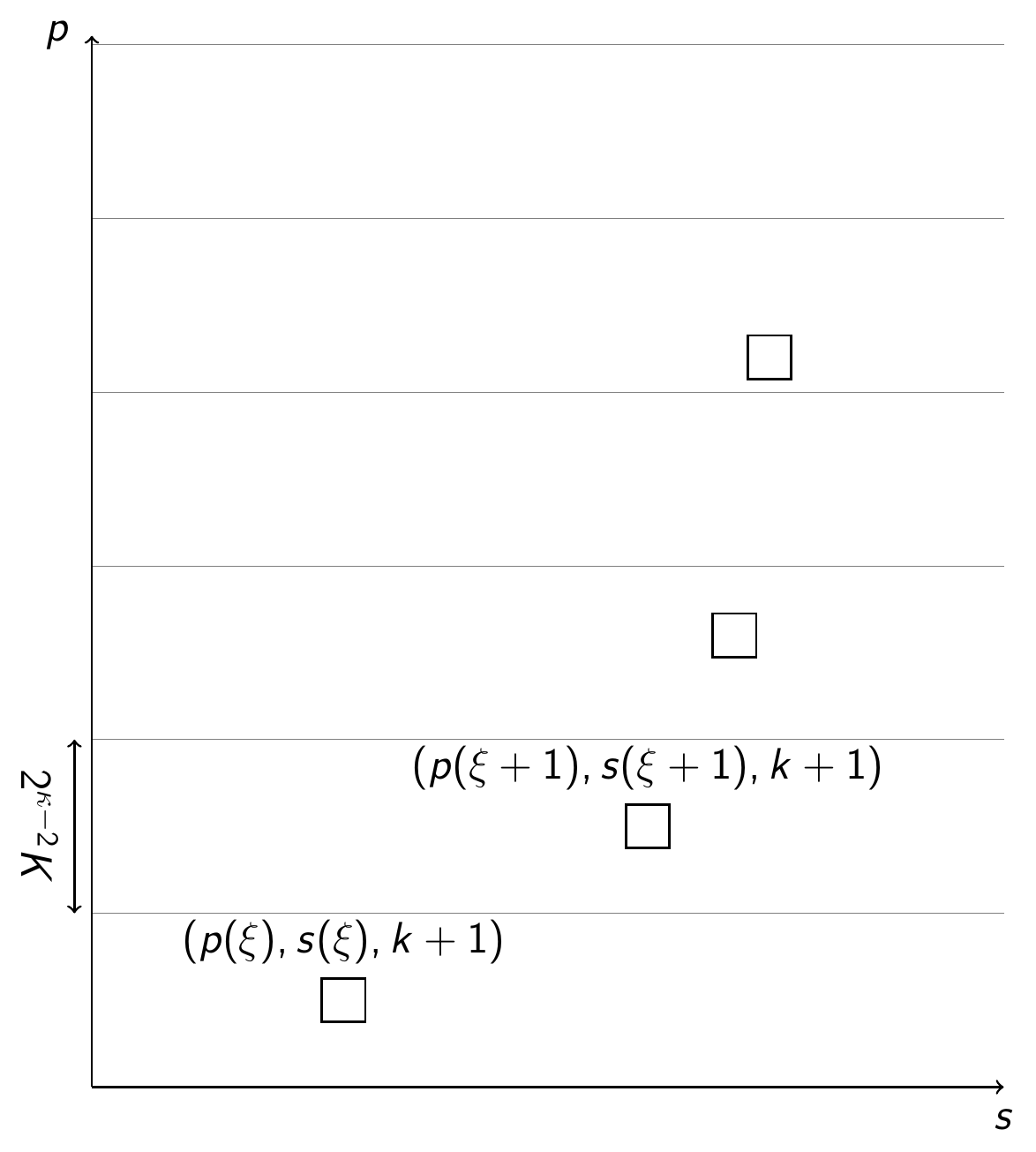}}
\hspace{3mm}
\subfloat[The new tuples are constructed with the items in $\tIk$.\label{fig:approx_dyn_prog_b}]{\includegraphics[scale=0.6]{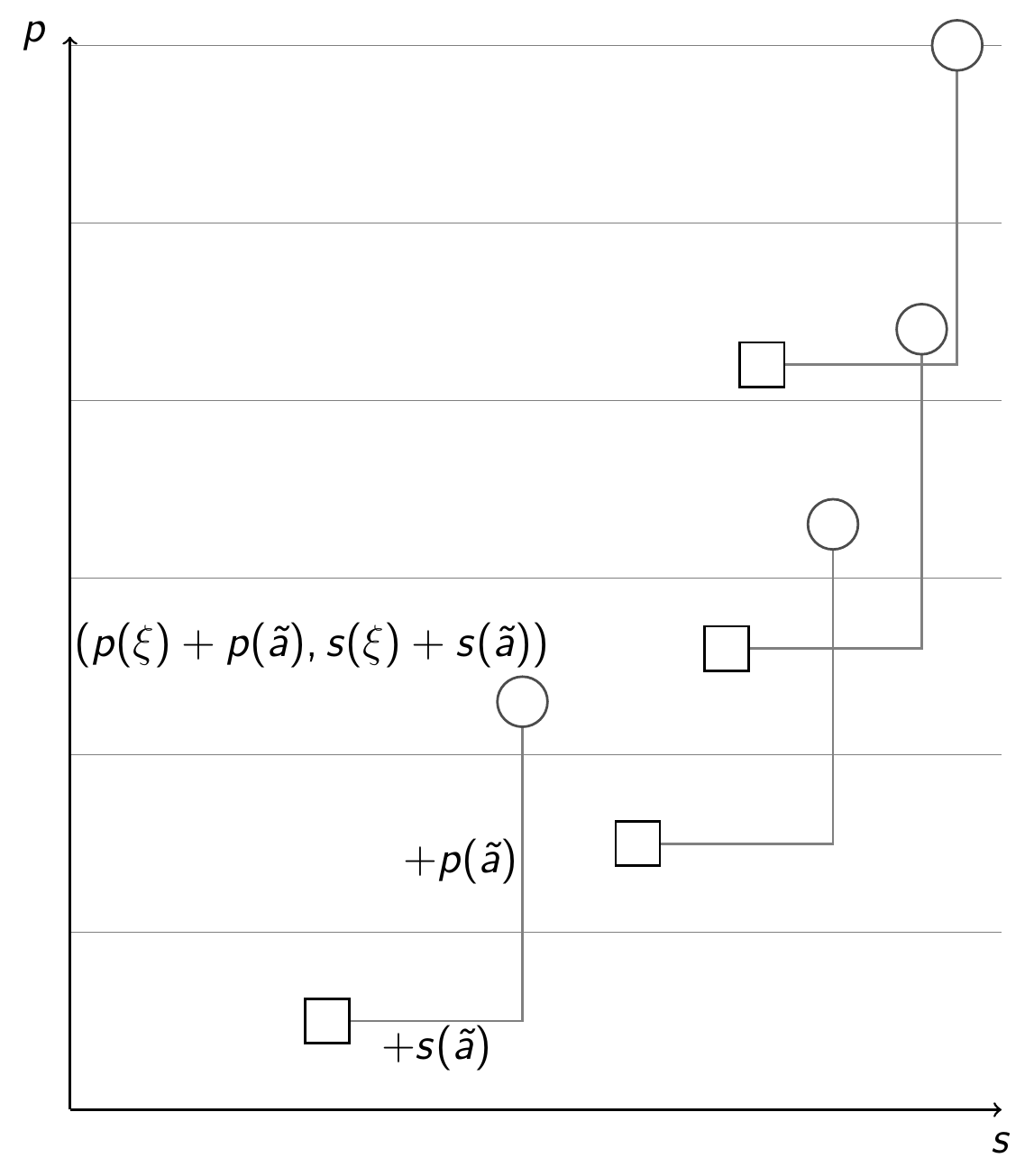}}

\subfloat[Only the tuple of smallest size is kept for every $\tLkx$, which yields $\tDk$,\ldots\label{fig:approx_dyn_prog_c}]{\includegraphics[scale=0.6]{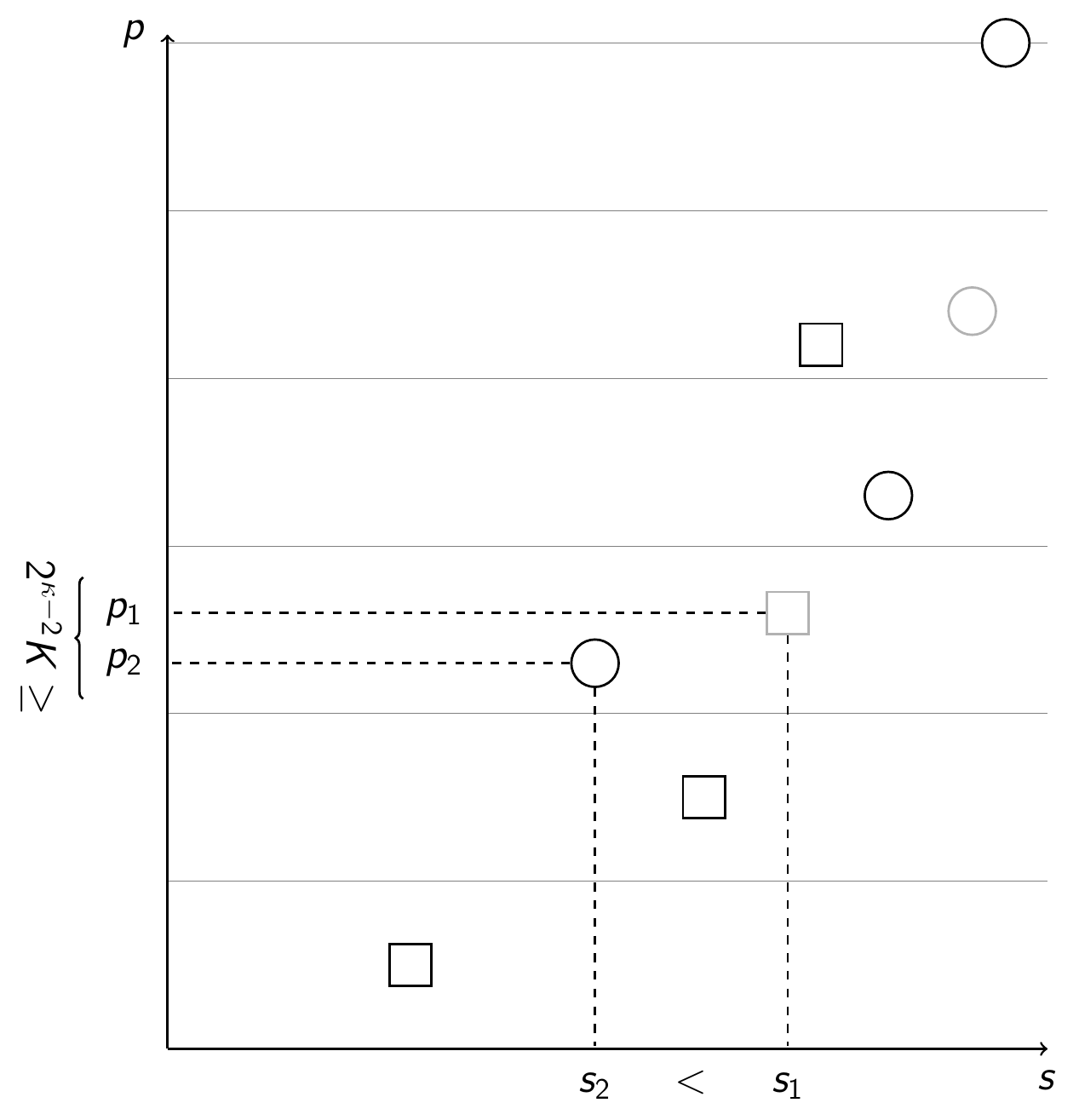}}
\hspace{3mm}
\subfloat[\ldots and removing the dominated tuples yields $\Dk$.\label{fig:approx_dyn_prog_d}]{\includegraphics[scale=0.6]{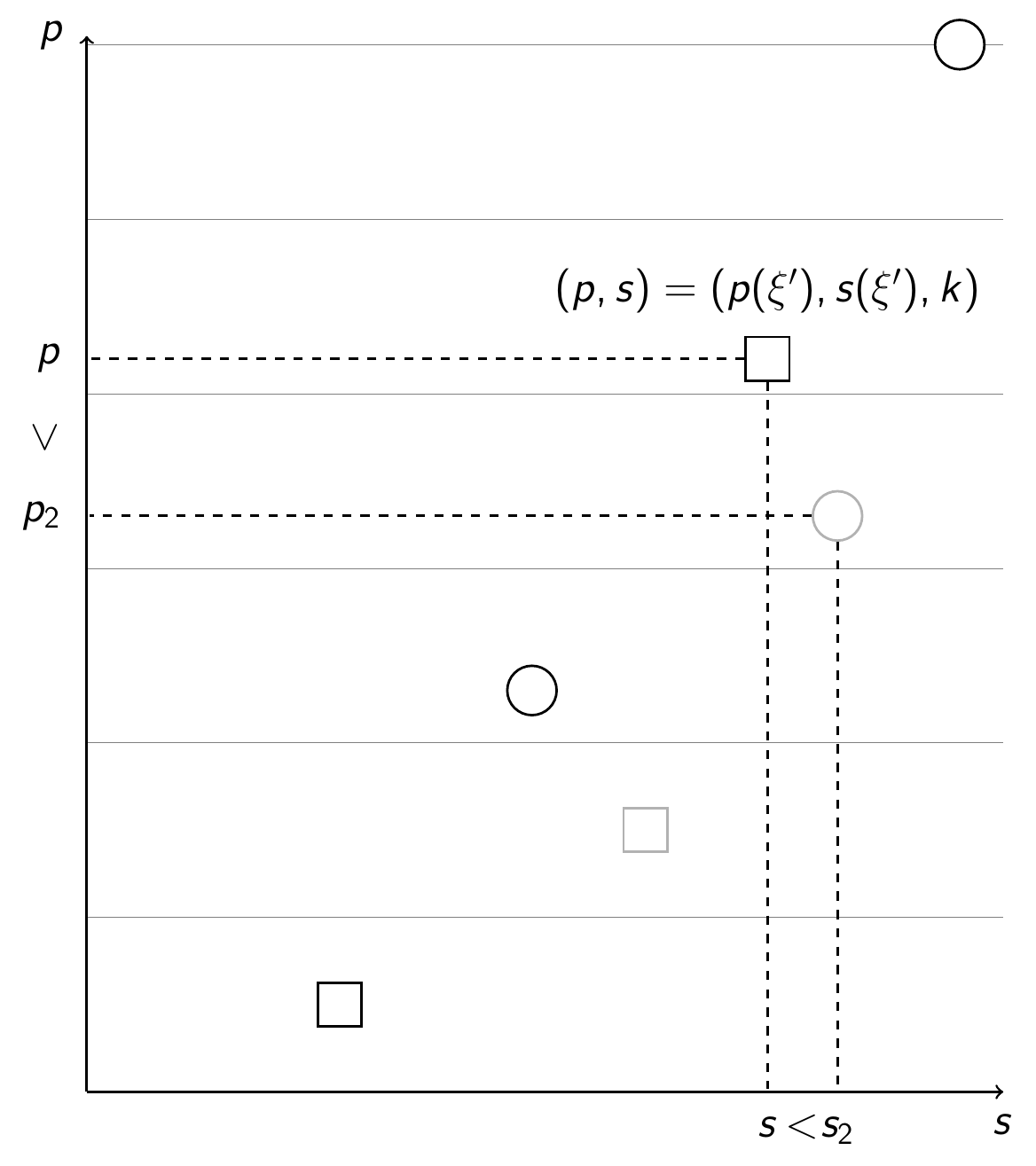}}
\caption{The principle of the approximate dynamic programming}%
\label{fig:approx_dyn_prog}%
\end{figure}

\begin{lemma} \label{lemma:tuple_property_Dk}
Let $\tDk$ be the set $\Dk$ from Algorithm \ref{alg:approx_dyn_prog} before the dominated tuples are removed. A tuple $(p,s,k) \in \tDk$ for $k = \kappa+1, \ldots, 0$ stands for a structured solution with a lower bound. Therefore, we have $p \geq \zke{\kappa-2} T$ if $p > 0$. This is also true for $(p,s,k) \in \Dk$.
\end{lemma}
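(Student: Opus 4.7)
The plan is to prove both assertions by downward induction on $k$, starting from the base case $\Dke{\kappa+2} = \menges{(0, 0, \kappa+2)}$, where there is only the dummy tuple and the implication ``$p \geq 2^{\kappa-2} T$ if $p > 0$'' holds vacuously. For the inductive step I would assume that every $(p, s, k+1) \in \Dke{k+1}$ with $p > 0$ represents a structured solution with a lower bound using items only in $\tIe{k+1} \cup \cdots \cup \tIe{\kappa+1}$ (in particular uses at most one item from each $\tIe{k'}$, $k' \geq k+1$, and at least one item from $\tIe{\kappa-2} \cup \tIe{\kappa-1} \cup \tIe{\kappa} \cup \tIe{\kappa+1}$), and then trace through Algorithm~\ref{alg:approx_dyn_prog} to see how a tuple in $\tDk$ can be produced.

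Inspection of the algorithm shows that a new tuple in $\tDk$ arises in exactly one of three ways: (i) as a direct copy $(p(\xi), s(\xi), k+1) \mapsto (p(\xi), s(\xi), k)$ from $\Dke{k+1}$; (ii) as $(p + p(\ta), s + s(\ta), k)$ for some $\ta \in \tIk$ and some parent $(p, s, k+1) \in \Dke{k+1} \setminus \menges{(0,0,k+1)}$; or (iii) for $k \geq \kappa - 2$, as a fresh singleton $(p(\ta), s(\ta), k)$ obtained from a single $\ta \in \tIk$. In case (i) the underlying multiset of items is unchanged and contains no item from $\tIk$, so the induction hypothesis transfers verbatim. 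In case (ii) the explicit exclusion of $(0, 0, k+1)$ guarantees that the parent already satisfies the lower bound and already uses at least one item from $\tIe{\kappa-2} \cup \cdots \cup \tIe{\kappa+1}$; adjoining $\ta$ introduces exactly one item from $\tIk$, so the structural property (at most one item per $\tIe{k'}$) is preserved for all $k' \geq k$, and the profit can only grow. In case (iii) the restriction $k \geq \kappa - 2$ forces $\ta \in \tIe{\kappa-2} \cup \tIe{\kappa-1} \cup \tIe{\kappa} \cup \tIe{\kappa+1}$ and $p(\ta) \geq 2^k T \geq 2^{\kappa-2} T$, so the one-item solution $\menges{\ta}$ is itself a structured solution with a lower bound.

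The assertion for $\Dk$ is then immediate because $\Dk \subseteq \tDk$: removing dominated tuples clearly preserves both the lower bound on $p$ and the ``stands for a structured solution with a lower bound'' interpretation. The one subtlety I expect to highlight is the necessity of the exclusion $\Dke{k+1} \setminus \menges{(0,0,k+1)}$ in case (ii). Without it, at levels $k < \kappa - 2$ (where case (iii) no longer fires) one could combine a single $\ta \in \tIk$ with the dummy tuple and obtain a tuple of positive profit $p(\ta) < 2^{\kappa-2} T$, violating both the lower bound and the requirement on the distinguished item from Definition~\ref{def:structured_solution_OPT-S}. The exclusion forces every positive-profit tuple to be seeded by case (iii) at some level $k' \geq \kappa - 2$, which delivers exactly the two properties claimed by the lemma; apart from this observation the argument is a straightforward case analysis, so I do not anticipate any further obstacle.
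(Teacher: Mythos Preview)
Your proposal is correct and follows essentially the same approach as the paper, which simply remarks that the argument is ``almost identical'' to that of Lemma~\ref{lemma:tuple_property} and is unaffected by keeping only the smallest-size tuple per profit sub-interval; you have merely spelled out the downward induction and the three cases explicitly. One tiny slip: in case~(iii) for $k=\kappa+1$ the item $\ta=\aeffc$ need not satisfy $p(\ta)\geq 2^{\kappa+1}T=2P_0$, so the intermediate bound $p(\ta)\geq 2^kT$ fails there, but by construction $p(\aeffc)\geq \frac{1}{4}P_0=2^{\kappa-2}T$, so your conclusion is unaffected.
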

\begin{proof}
The proof is almost identical to the one of Lemma \ref{lemma:tuple_property}. In fact, the proof is not influenced by keeping only the tuple of smallest size in every profit interval $\tLkx$.
\end{proof}
\begin{theorem} \label{thm:tupel-Fk_compared_to_tupel-Dk}
Let $k \in \menges{0, \ldots, \kappa+1}$. For every (non-dominated) tuple $(\bp,\bs,k) \in \Fk$, there is a tuple $(p,s,k) \in \Dk$ such that 
	\[p \geq \klammer{1 - \frac{\eps}{4}\fraclogeps}^{\kappa - k +1} \bp \quad \text{ and } \quad s \leq \bs\enspace.
\]
\end{theorem}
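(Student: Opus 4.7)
The plan is to prove the theorem by backward induction on $k$, from $k=\kappa+1$ down to $k=0$, showing that passing from layer $k+1$ to layer $k$ of the approximate dynamic program costs at most one extra factor of $\klammer{1-\frac{\eps}{4}\fraclogeps}$ relative to the exact program in $\Fk$.

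For the base case $k=\kappa+1$, I would observe that $\tIe{\kappa+1}=\menge{\aeffc}$ contains only a single item and $\Fke{\kappa+2}=\Dke{\kappa+2}=\menge{(0,0,\kappa+2)}$, so $\Fke{\kappa+1}$ and $\Dke{\kappa+1}$ contain exactly the same tuples; the required factor $\klammer{1-\frac{\eps}{4}\fraclogeps}^{0}=1$ then holds trivially.

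For the inductive step, fix a non-dominated $(\bp,\bs,k)\in\Fk$. By the recursion defining $\Fk$, there exists a parent $(\bp',\bs',k+1)\in\Fke{k+1}$ such that either $(\bp,\bs,k)$ is inherited directly, or $(\bp,\bs,k)=(\bp'+p(\ta),\bs'+s(\ta),k)$ for some $\ta\in\tIk$ satisfying the capacity bound. Applying the induction hypothesis to the parent, I obtain $(p',s',k+1)\in\Dke{k+1}$ with $p'\geq\klammer{1-\frac{\eps}{4}\fraclogeps}^{\kappa-k}\bp'$ and $s'\leq\bs'$. Algorithm~\ref{alg:approx_dyn_prog} then considers the candidate $(p'',s'',k)$ obtained either by passing $(p',s',k+1)$ through or by combining it with $\ta$; this candidate satisfies
\[
p''\geq\klammer{1-\frac{\eps}{4}\fraclogeps}^{\kappa-k}\bp\qquad\text{and}\qquad s''\leq\bs.
\]

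The essential estimate is the rounding loss. If $\bp=0$, then $(0,0,k)$ is carried through and the claim is trivial. Otherwise Lemma~\ref{lemma:tuple_property_Dk} forces $p''\geq\zke{\kappa-2}T$, so $p''$ lies in some interval $\tLkxe{\xi'}$ of width $\zke{\kappa-2}K$, and the algorithm retains the tuple $(p(\xi'),s(\xi'),k)\in\tDk$ of smallest size in that interval, with $s(\xi')\leq s''$ and both $p(\xi'),p''\in\tLkxe{\xi'}$. This gives the key inequality
\[
\frac{p''-p(\xi')}{p''}\leq\frac{\zke{\kappa-2}K}{\zke{\kappa-2}T}=\frac{K}{T}=\frac{\eps}{4}\fraclogeps,
\]
hence $p(\xi')\geq\klammer{1-\frac{\eps}{4}\fraclogeps}p''\geq\klammer{1-\frac{\eps}{4}\fraclogeps}^{\kappa-k+1}\bp$. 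Finally, the subsequent removal of dominated tuples can only replace $(p(\xi'),s(\xi'),k)$ by some $(p,s,k)\in\Dk$ with $p\geq p(\xi')$ and $s\leq s(\xi')$, so both inequalities of the theorem survive.

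The main obstacle is bookkeeping around the two special branches of Algorithm~\ref{alg:approx_dyn_prog}: for $k\leq\kappa-3$ the seed $(0,0,k+1)$ is excluded from being combined with an item $\ta$, and for $k\geq\kappa-2$ single items are inserted without a nonzero parent. Both must be reconciled with the inductive step. The first is handled because whenever the combination case applies in $\Fk$ with $k\leq\kappa-3$ we have $\bp'>0$, so $p'\geq\klammer{1-\frac{\eps}{4}\fraclogeps}^{\kappa-k}\bp'>0$ and the parent in $\Dke{k+1}$ is indeed not $(0,0,k+1)$; the second is exactly what the dedicated single-item branch of the algorithm produces, matching the tuples $(p(\ta),s(\ta),k)$ that $\Fk$ creates from the admissible seed for $k\geq\kappa-2$.
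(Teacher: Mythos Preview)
Your proof is correct and follows essentially the same approach as the paper: backward induction on $k$, splitting into inheritance versus combination with an item $\ta\in\tIk$, the rounding estimate $K/T=\frac{\eps}{4}\fraclogeps$ exploited via the lower bound $p''\geq 2^{\kappa-2}T$, and the same bookkeeping for the excluded seed $(0,0,k+1)$ and the single-item branch at $k\geq\kappa-2$. The only cosmetic difference is that the paper writes the pass-through and combination cases out separately while you unify them; your appeal to Lemma~\ref{lemma:tuple_property_Dk} for the intermediate candidate $p''$ is slightly loose (that lemma is literally stated for tuples already in $\tDk$), but your final paragraph supplies exactly the justification the paper gives, namely $p'\geq 2^{\kappa-2}T$ from the lemma applied to $\Dke{k+1}$, or $p(\ta)\geq 2^{\kappa-2}T$ when $k\geq\kappa-2$.
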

\begin{proof}
This statement is trivial for $(\bp,\bs,k) = (0,0,k)$ because $(0,0,k) \in \Dk$ (this tuple is never removed in the construction of $\Fk$ and $\Dk$).

Suppose now that $(\bp,\bs,k) \neq (0,0,k)$. The theorem is proved by induction for $k = \kappa + 1, \ldots, 0$.

The statement is evident for $k = \kappa + 1$. If $\aeffc$ exists (i.e.\ enough copies of $\aeff$ can be glued together without exceeding the capacity $c$), then we have 
	\[\Fke{\kappa+1} = \Dke{\kappa+1} = \menges{(0,0,\kappa+1), (p(\aeffc),s(\aeffc),\kappa+1)}\enspace.
\]
 If $\aeffc$ does not exist, then we have $\Fke{\kappa+1} = \Dke{\kappa+1} = \menges{(0,0,\kappa+1)}$.

Suppose that the statement is true for $k+1, \ldots, \kappa+1$. As defined in Lemma \ref{lemma:tuple_property_Dk}, $\tDk$ is the set $\Dk$ before the dominated tuples are removed. Let $(\bp,\bs,k) \in \Fk$.

There are two cases. In the first case, we have $(\bp,\bs,k+1) \in \Fke{k+1}$. By the induction hypothesis, there is a tuple $(p_1,s_1,k+1) \in \Dke{k+1}$ such that the inequalities $p_1 \geq \bp (1 - \frac{\eps}{4} \fraclogeps)^{\kappa-(k+1)+1} $ and $s_1 \leq \bs$ hold (see Fig.\ \ref{fig:approxdynA}\subref{fig:approxdynA_a}). Note that this implies $(p_1,s_1,k+1) \neq (0,0,k+1)$ and therefore $p_1 \geq 2^{\kappa-2} T$ by Lemma \ref{lemma:tuple_property_Dk}. Let $\xi_1$ be the index such that $p_1 \in \tLkxe{\xi_1}$. During the execution of Algorithm \ref{alg:approx_dyn_prog}, $(p_1,s_1,k+1)$ yields the tuple $(p_1,s_1,k)$, which may only be replaced in $\tDk$ by a tuple of a smaller size, but with a profit still in $\tLkxe{\xi_1}$. Thus, there must be a tuple $(p_2, s_2, k) \in \tDk$ with $s_2 \leq s_1$ and $p_2 \in \tLkxe{\xi_1}$ (see Fig.\ \ref{fig:approxdynA}\subref{fig:approxdynA_b}). 
Let now $(p,s,k) \in \Dk$ be the tuple that dominates $(p_2,s_2,k)$ (which can of course be $(p_2,s_2,k)$ itself), i.e.\ $p \geq p_2$ and $s \leq s_2$ (see Fig.\ \ref{fig:approxdynA}\subref{fig:approxdynA_c}). For the profit, we have
\begin{IEEEeqnarray*}{rCl}
p & \geq & p_2 \geq p_1 - \zke{\kappa-2} K \stackrel{p_1 \neq 0}{=} p_1 \cdot \klammer{1 - \frac{\zke{\kappa-2} K}{p_1}} \stackrel{\text{Lem.\@ }\ref{lemma:tuple_property_Dk}}{\geq} p_1 \cdot \klammer{1 - \frac{\zke{\kappa-2} K}{\zke{\kappa-2} T}}\\
& \stackrel{\eqref{eq:definition_T}, \eqref{eq:definition_K}}{=} & p_1 \cdot \klammer{1 - \frac{\eps}{4} \fraclogeps} \geq \bp \cdot \klammer{1 - \frac{\eps}{4} \fraclogeps}^{\kappa - k +1}\enspace.
\end{IEEEeqnarray*}
The lower bound on the profit is therefore true for $(p,s,k)$. We have $s \leq s_2 \leq s_1 \leq \bs$ for the bound on the size (see also Fig.\@ \ref{fig:approxdynA}\subref{fig:approxdynA_c}).
\begin{figure}%
\centering
\subfloat[Since $(\bp,\bs,k+1) \in \Fke{k+1}$, there must be a corresponding tuple $(p_1,s_1,k+1) \in \Dke{k+1}$ by the induction hypothesis whose profit can be bounded from below.\label{fig:approxdynA_a}]{\includegraphics[scale=0.6]{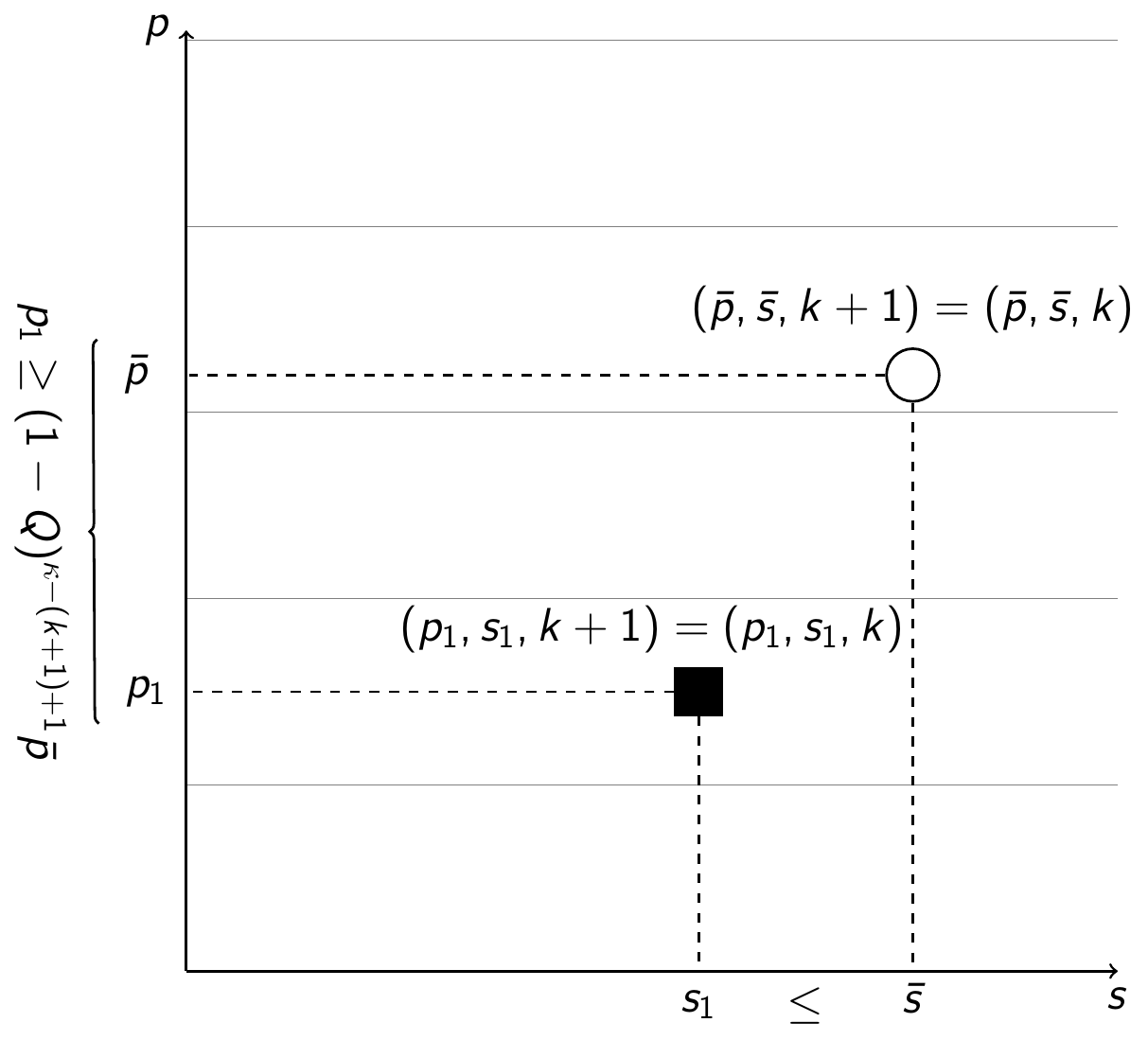}}%
\hspace{1mm}
\subfloat[By construction, there must be a tuple $(p_2, s_2,k) \in \tDk$ with a profit in the same interval $\tLkxe{\xi_1}$ as $(p_1,s_1,k)$. This makes it possible to bound $p_2$ from below.\label{fig:approxdynA_b}]{\includegraphics[scale=0.6]{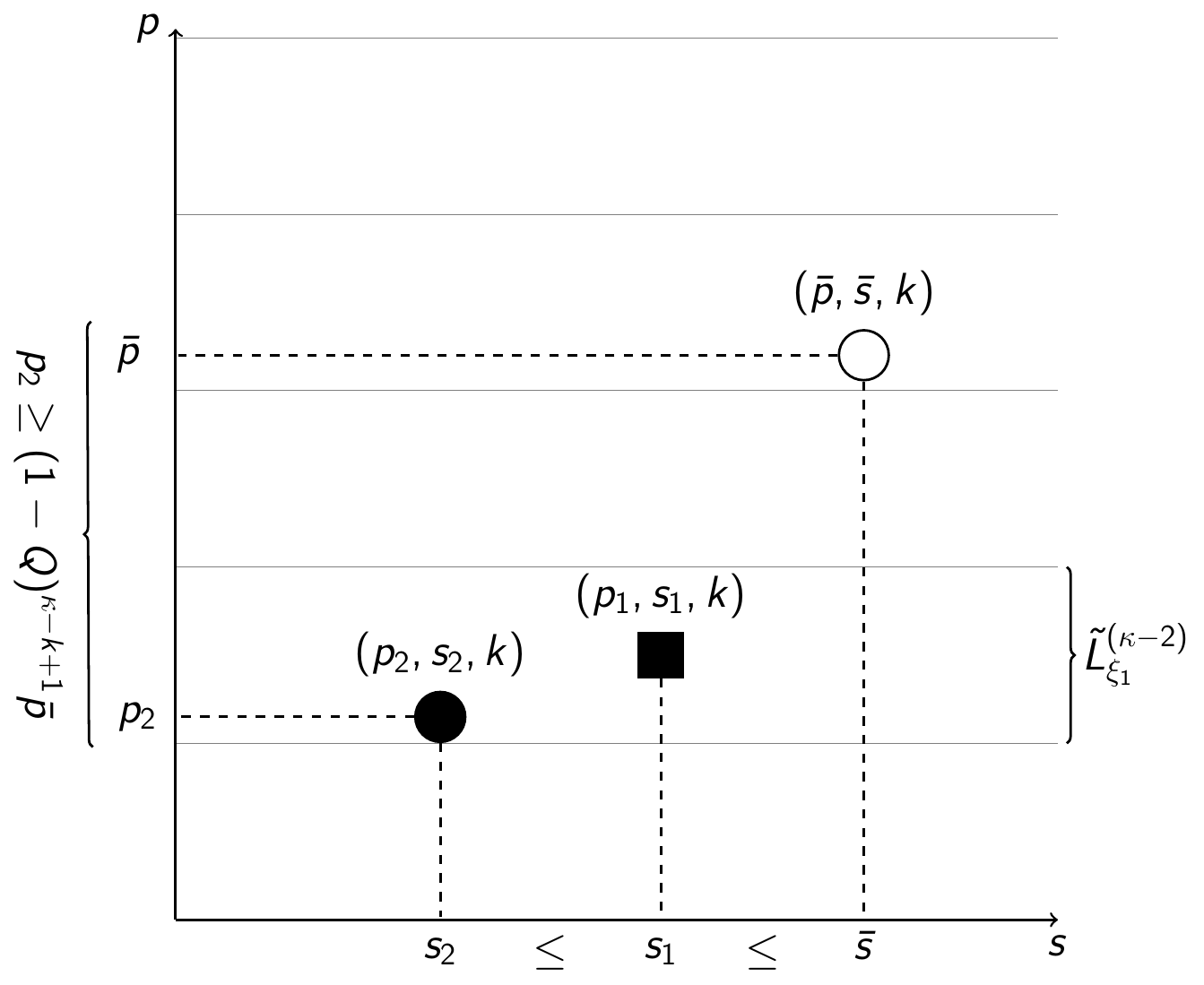}}%

\subfloat[There may be a tuple $(p,s,k) \in \Dk$ that dominates $(p_2,s_2,k)$. Since $p \geq p_2$, the bound still holds.\label{fig:approxdynA_c}]{\includegraphics[scale=0.6]{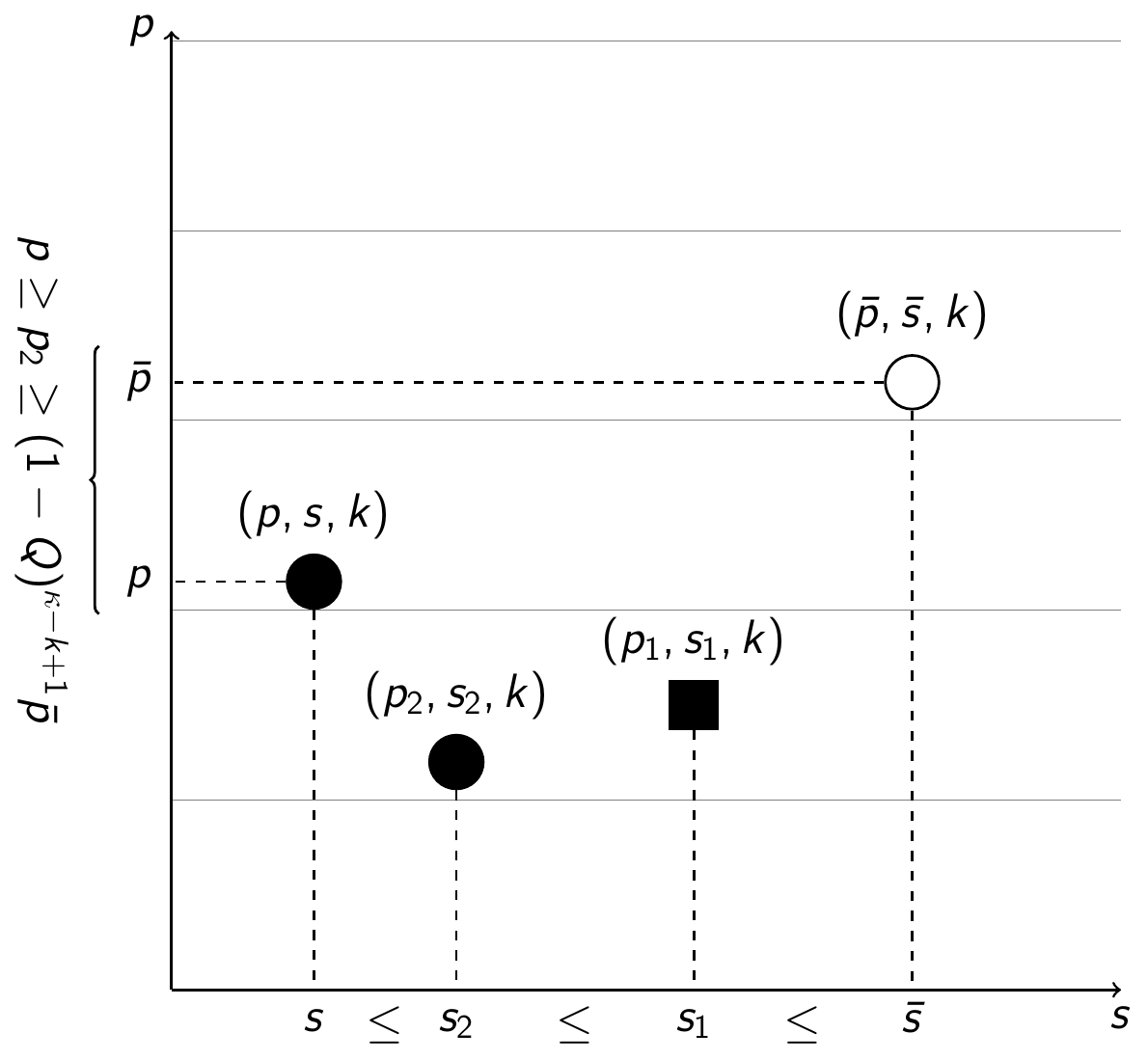}}%
\caption{The first case of the proof for Theorem \ref{thm:tupel-Fk_compared_to_tupel-Dk}: we have $(\bp,\bs,k) \in \Fk$ and also $(\bp,\bs,k+1)\in \Fke{k+1}$. We set $Q := (1-\frac{\eps}{4} \fraclogeps)$.}%
\label{fig:approxdynA}%
\end{figure}

Consider now the second case where $(\bp,\bs,k) \in \Fk$, but $(\bp,\bs,k+1) \notin \Fke{k+1}$. Therefore, $(\bp,\bs,k)$ is a new (non-dominated) tuple with $(\bp,\bs,k) = (\tp + p(\ta), \ts + s(\ta),k)$ for the right item $\ta \in \tIk$ and tuple $(\tp,\ts,k+1) \in \Fke{k+1}$. By the induction hypothesis, there must be a tuple $(p_1,s_1,k+1) \in \Dke{k+1}$ such that $p_1 \geq \tp (1 - \frac{\eps}{4} \fraclogeps)^{\kappa-(k+1)+1} $ and $s_1 \leq \ts$ (see Fig.\@ \ref{fig:approxdynB}\subref{fig:approxdynB_a}).
%(Note that $p_1 = \tp = 0$ and $s_1 = \ts = 0$ may be possible in this case.) 
Thus, the following inequality holds:
\begin{IEEEeqnarray*}{rCl}
p_1 + p(\ta) &\geq& p(\ta) + \tp \cdot \klammer{1 - \frac{\eps}{4}\fraclogeps}^{\kappa-(k+1)+1} \geq  \klammer{p(\ta) + \tp} \cdot \klammer{1 - \frac{\eps}{4}\fraclogeps}^{\kappa-(k+1)+1}\\
& = & \bp \cdot \klammer{1 - \frac{\eps}{4}\fraclogeps}^{\kappa-(k+1)+1}\enspace.
%\label{eq:}
\end{IEEEeqnarray*}
There are two possibilities: either $k \geq \kappa-2$, i.e.\ $p(\ta) \geq \zke{\kappa-2} T$ holds, and $p_1 + p(\ta) \geq \zke{\kappa-2} T$ directly follows. Otherwise, we have $k \leq \kappa-3$. Then, the identity $(\bp,\bs,k) = (\tp + p(\ta), \ts + s(\ta),k) \neq (0,0,k)$ implies that $(\tp,\ts,k+1) \neq (0,0,k+1)$ holds because the tuple $(0,0,k+1)$ is not used to form any new tuple in $\tDk$ and therefore in $\Dk$. This again implies that $p_1 \neq 0$ and therefore $p_1 + p(\ta) \geq p_1 \geq \zke{\kappa-2} T$ as seen in Lemma \ref{lemma:tuple_property_Dk}.

Thus, there is an index $\xi_1$ such that $p_1 + p(\ta) \in \tLkxe{\xi_1}$. Similar to above, the tuple $(p_1 + p(\ta), s_1 + s(\ta) , k)$ is formed during the construction of $\tDk$ (see Fig.\@ \ref{fig:approxdynB}\subref{fig:approxdynB_b}). It may only be replaced by a tuple of smaller size. Hence, 
there must be $(p_2,s_2,k) \in \tDk$ with $p_2 \in \tLkxe{\xi_1}$. Let $(p,s,k) \in \Dk$ be the tuple that dominates $(p_2,s_2,k)$ (see Fig.\@ \ref{fig:approxdynB}\subref{fig:approxdynB_c}). We get
\begin{IEEEeqnarray*}{rCl}
p & \geq & p_2 \geq p_1 + p(\ta) - \zke{\kappa-2} K \stackrel{p_1 + p(\ta) \neq 0}{=} ( p_1 + p(\ta) ) \cdot \klammer{1 - \frac{\zke{\kappa-2} K}{p_1 + p(\ta)}} \\
&\stackrel{p_1 + p(\ta) \geq \zke{\kappa-2} T}{\geq}& (p_1 + p(\ta)) \cdot \klammer{1 - \frac{\zke{\kappa-2} K}{\zke{\kappa-2} T}}
 \stackrel{\eqref{eq:definition_T}, \eqref{eq:definition_K}}{=} (p_1 + p(\ta)) \cdot \klammer{1 - \frac{\eps}{4} \fraclogeps}\\ 
&\geq &\bp \cdot \klammer{1 - \frac{\eps}{4} \fraclogeps}^{\kappa - k +1}\enspace.
\end{IEEEeqnarray*}
We have similar to above $s \leq s_2 \leq s_1 + s(\ta) \leq \ts + s(\ta) = \bs$ for the bound on the size (see also Fig.\@ \ref{fig:approxdynB}\subref{fig:approxdynB_c}).
\begin{figure}%
\centering
\subfloat[Since $(\bp,\bs,k+1) \notin \Fke{k+1}$, there must be an item $\ta$ such that $(\bp,\bs,k) = (\tp + p(\ta),\ts + s(\ta), k)$ for a tuple $(\tp,\ts,k+1) \in \Fke{k+1}$. By the induction hypothesis, there must be a tuple $(p_1,s_1,k+1) \in \Dke{k+1}$ whose profit can be bounded from below.\label{fig:approxdynB_a}]{\includegraphics[scale=0.6]{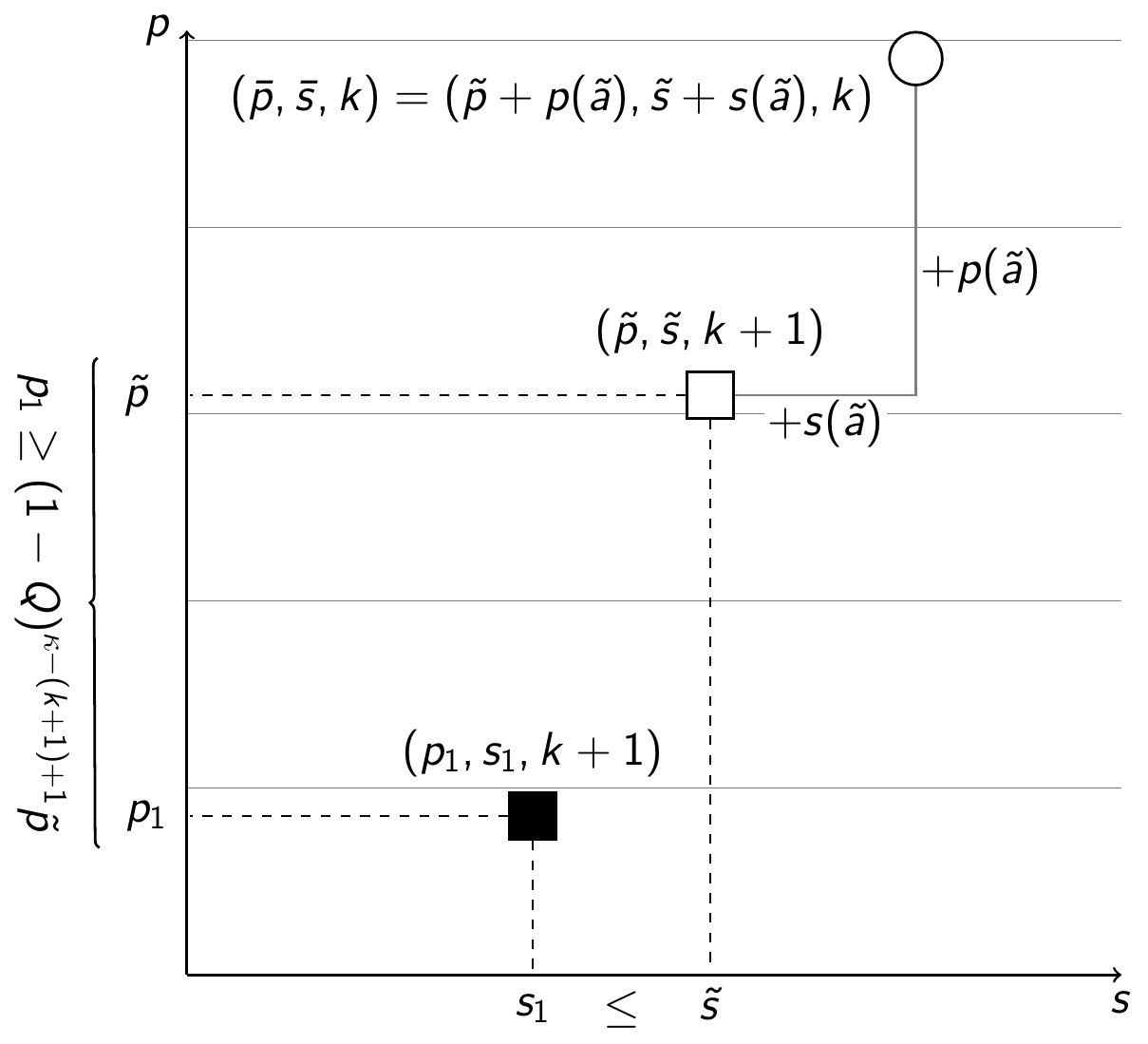}}%
\hspace{1mm}
\subfloat[The tuple $(p_1+p(\ta),s_1+s(\ta),k)$ is constructed during the execution of the dynamic program.\label{fig:approxdynB_b}]{\includegraphics[scale=0.6]{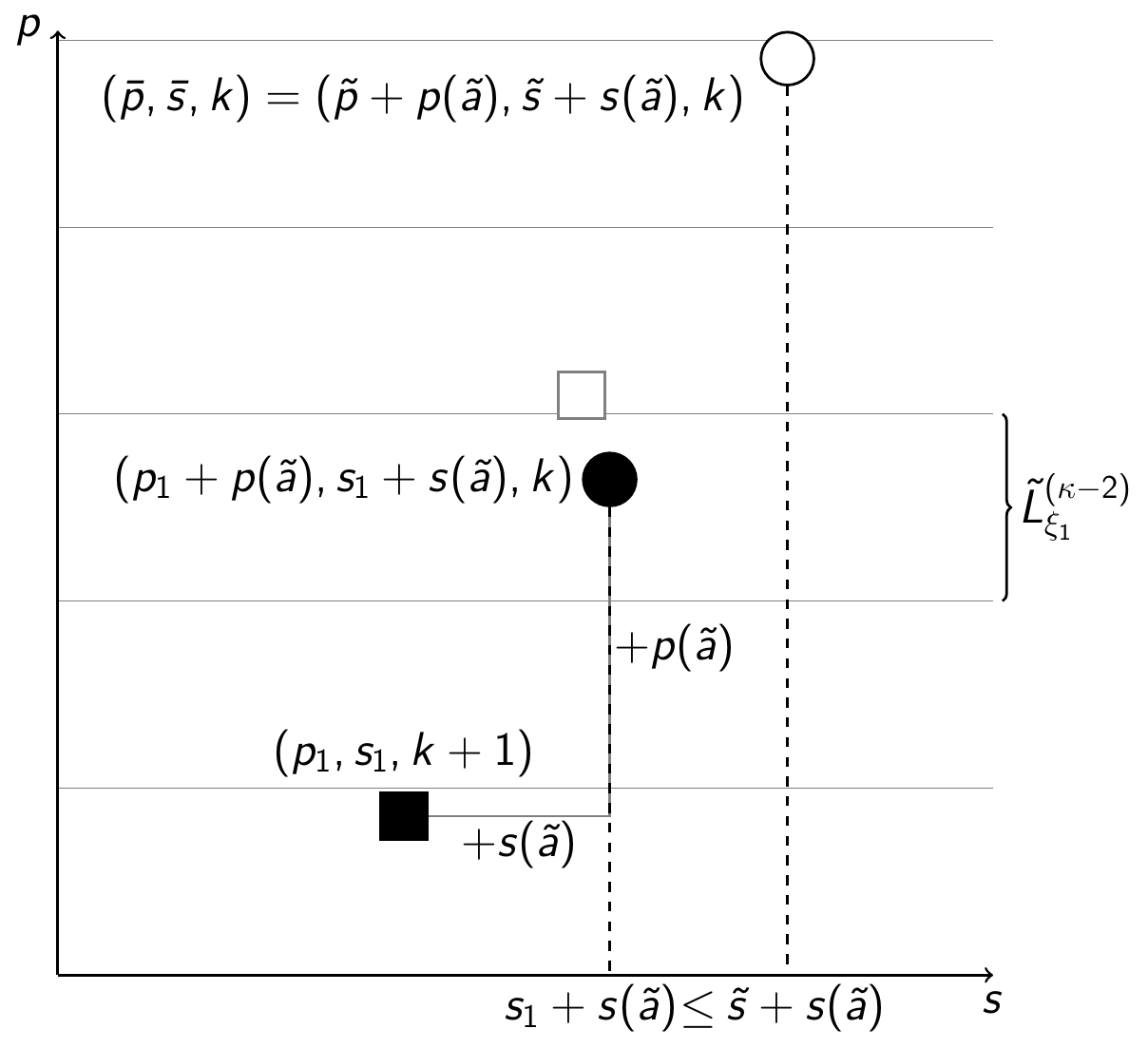}}%

\subfloat[As in the first case, there must be a tuple $(p,s,k) \in \Dk$ whose profit can be bounded as desired. Here, $(p_2,s_2,k)$ is not dominated, i.e.\ $(p,s,k) = (p_2,s_2,k)$. \label{fig:approxdynB_c}]{\includegraphics[scale=0.6]{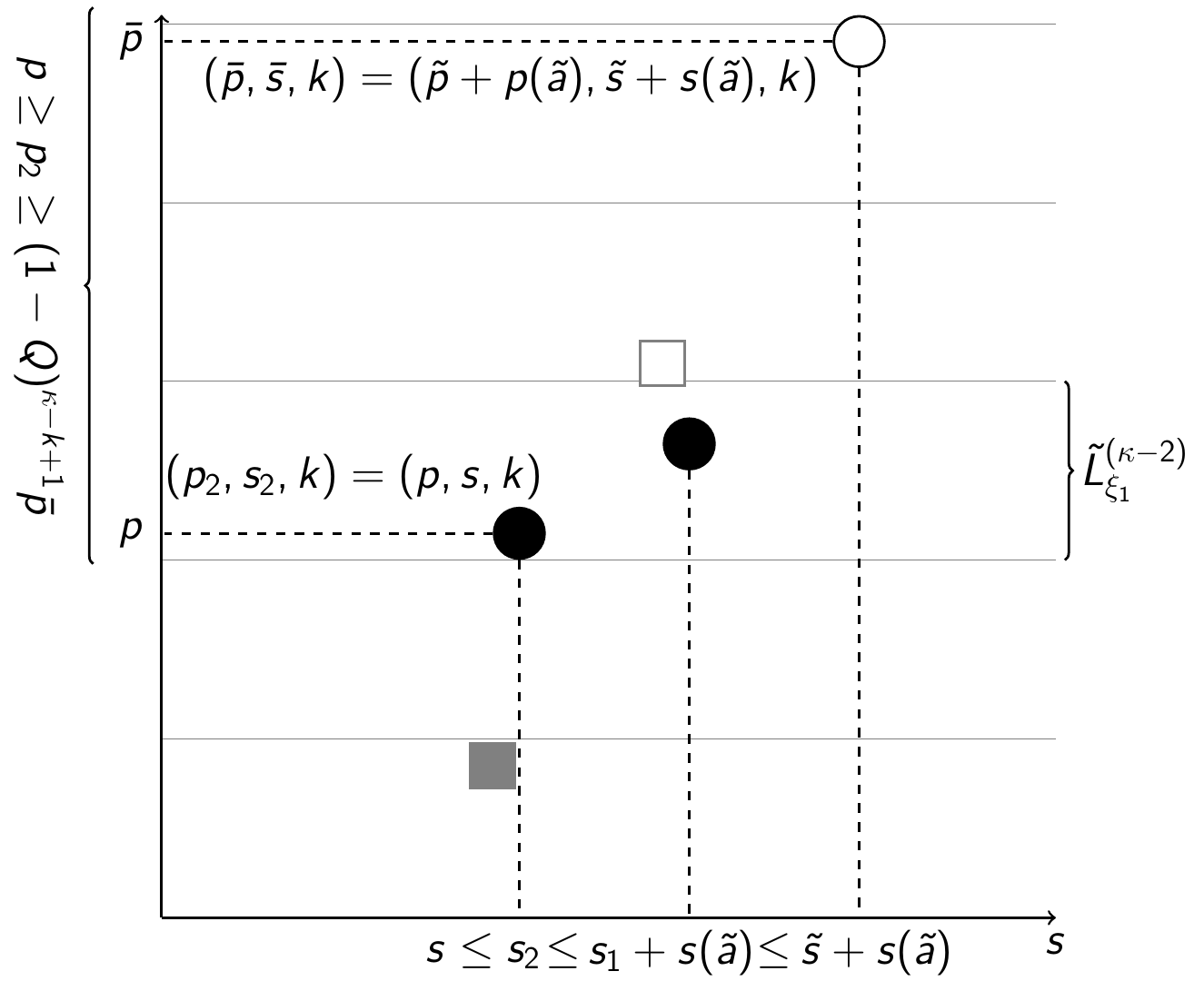}}%
\caption{The second case of the proof for Theorem \protect\ref{thm:tupel-Fk_compared_to_tupel-Dk}: we have $(\bp,\bs,k) \in \Fk$, but $(\bp,\bs,k+1)\notin \Fke{k+1}$. We set $Q := (1-\frac{\eps}{4} \fraclogeps)$.}%
\label{fig:approxdynB}%
\end{figure}
\end{proof}
\begin{remark} \label{remark:dynamic_programming_error_because_of_structure}
As can be seen, the proof of Theorem \ref{thm:tupel-Fk_compared_to_tupel-Dk} is only possible because it is guaranteed that $p_1$ or $p_1 + p(\ta)$ is at least $\zke{\kappa-2} T$. In fact, this is achieved by the construction of the glued item set $\tI$ with its structured solution (Theorem \ref{thm:solution_quality_tIk}). Hence, we can prove Lemma \ref{lemma:two_cases}, and with the introduction of $\aeffc$, we have the structure property of Definition \ref{def:structured_solution_OPT-S} with a corresponding solution (Theorem \ref{thm:structured_solution_exists}). This shows that $p_1 \geq 2^{\kappa-2}T$ or $p_1 + p(\ta) \geq 2^{\kappa-2} T$ (see also Lemma \ref{lemma:tuple_property} and \ref{lemma:tuple_property_Dk}). Without the structure, a dynamic program like Algorithm \ref{alg:approx_dyn_prog} would also have to generate tuples $(p,s,k)$ with $p < \zke{\kappa-2} T$ for $k \leq \kappa-3$. %The solution quality could then only be bounded by having a smaller minimum profit than the interval $[\frac{1}{4} P_0, 2 P_0]$ and a finer partitioning of it into sub-intervals like $\tLkx$. Both would increase the asymptotic running time as can be seen below.
Hence, we would need for the same approximation ratio profit sub-intervals like $\tLkx$ with a smaller length than $2^{\kappa-2} K$, and we would have to save more tuples. Both would increase the asymptotic running time and space complexity as can be seen in the proof of Theorem \ref{thm:construction_Dk_running_time}.
\end{remark}
\begin{corollary} \label{corollary:existence_tuple_Dke0}
For every $v \leq c$, there is a tuple $(p,s,0) \in \Dke{0}$ such that $s \leq v$ and
	\[p \geq \klammer{1 - \frac{\eps}{4}\fraclogeps}^{\kappa + 1} \OPTSv{\tI \cup \menge{\aeffc}}\enspace.
\]
\end{corollary}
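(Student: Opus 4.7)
The plan is to combine Lemma~\ref{lemma:tuple_property} with Theorem~\ref{thm:tupel-Fk_compared_to_tupel-Dk} applied at $k=0$. The exact dynamic program produces a tuple in $\Fke{0}$ whose profit equals $\OPTSvs{\tI \cup \menges{\aeffc}}$, and the approximate dynamic program is guaranteed by Theorem~\ref{thm:tupel-Fk_compared_to_tupel-Dk} to have, for every such $(\bp,\bs,0) \in \Fke{0}$, a companion tuple in $\Dke{0}$ that loses only a controlled multiplicative factor in profit and no size.

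First, I would fix an arbitrary $v \le c$. By Lemma~\ref{lemma:tuple_property} applied to the exact tuple set $\Fke{0}$, there exists a tuple $(\bp,\bs,0) \in \Fke{0}$ with
\[
\bp = \OPTSv{\tI \cup \menge{\aeffc}} \quad \text{and} \quad \bs \leq v.
\]
(If the right-hand side is $0$, the claim is trivial since $(0,0,0) \in \Dke{0}$ by construction, so we may assume $\bp > 0$.)

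Next, I would invoke Theorem~\ref{thm:tupel-Fk_compared_to_tupel-Dk} with $k = 0$ on this tuple. The theorem yields a tuple $(p,s,0) \in \Dke{0}$ satisfying
\[
p \geq \klammer{1 - \frac{\eps}{4}\fraclogeps}^{\kappa - 0 + 1} \bp = \klammer{1 - \frac{\eps}{4}\fraclogeps}^{\kappa + 1} \OPTSv{\tI \cup \menge{\aeffc}}
\]
and $s \leq \bs \leq v$. This is exactly the tuple claimed in the statement.

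Since both ingredients are already proved, there is no serious obstacle: the only care needed is to check that the edge case where $\OPTSvs{\tI \cup \menges{\aeffc}} = 0$ (i.e.\ no structured solution with a lower bound fits into $v$) is covered by the trivial tuple $(0,0,0)$, which is preserved throughout Algorithm~\ref{alg:approx_dyn_prog} and satisfies the two required inequalities vacuously.
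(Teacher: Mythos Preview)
Your proof is correct and follows essentially the same approach as the paper: invoke Lemma~\ref{lemma:tuple_property} to obtain a tuple $(\bp,\bs,0)\in\Fke{0}$ with $\bp=\OPTSvs{\tI\cup\menges{\aeffc}}$ and $\bs\le v$, then apply Theorem~\ref{thm:tupel-Fk_compared_to_tupel-Dk} at $k=0$. Your explicit handling of the trivial case $\OPTSvs{\tI\cup\menges{\aeffc}}=0$ via the tuple $(0,0,0)$ is a minor addition the paper omits, but it is harmless and correct.
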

\begin{proof}
Lemma \ref{lemma:tuple_property} states that there is a tuple $(\bp,\bs,0) \in \Fke{0}$ with $\bp = \OPTSvs{\tI \cup \menge{\aeffc}}$ and $\bs \leq v$. Theorem \ref{thm:tupel-Fk_compared_to_tupel-Dk} implies that there is a tuple $(p,s,0) \in \Dke{0}$ with the desired property.
\end{proof}
\begin{theorem}\label{thm:construction_Dk_running_time}
Algorithm \ref{alg:approx_dyn_prog} constructs all tuple sets $\Dk$ for $k = \kappa+1, \ldots, 0$ in time $\Ohs{\frepse{2} \log^3 \freps}$. The space needed for the algorithm and to save the $\Dk$ as well as the backtracking information is in $\Ohs{\freps \log^2\freps}$.
\end{theorem}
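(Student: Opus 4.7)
My plan is to bound both time and space by carefully counting the tuples in each $\Dk$ and the work per outer iteration $k$. The key observation is that $\Dk$ contains at most one tuple per profit sub-interval $\tLkx$, so $|\Dk| \leq \xi_0 + 2 = O\klammer{\freps \log \freps}$. Combined with Theorem \ref{thm:construction_tI_aeffc_running-time}, which gives $|\tIk| = O\klammer{\freps \log \freps}$, this bounds the double for-loop over $\ta \in \tIk$ and $(p,s,k+1) \in \Dke{k+1}$ that generates the new candidate tuples.

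For the running time, I would analyse each iteration $k$ of the outer for-loop separately. Copying $\Dke{k+1}$ into $\Dk$ costs $O(|\Dke{k+1}|)$. The main work comes from the double loop: since the tuples in $\Dke{k+1}$ can be accessed via their $\xi$-index in $O(1)$ and the index $\xi'$ of any new tuple is computable in $O(1)$ (using the assumption that logarithms can be evaluated in constant time), each inner iteration is $O(1)$. This gives $O(|\tIk| \cdot |\Dke{k+1}|) = O\klammer{\frepse{2} \log^2 \freps}$ per value of $k$. Removing dominated tuples can be performed in a single scan from the highest to the lowest $\xi$-index while maintaining a running minimum of sizes, since the tuples are implicitly sorted by profit via their $\xi$-indexing; this contributes only $O(|\Dk|)$. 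Summing over the $\kappa + 2 = O(\log \freps)$ values of $k$ yields the claimed total time $O\klammer{\frepse{2} \log^3 \freps}$.

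For the space complexity, each tuple is paired with backtracking information of constant size (a pointer to the parent tuple in $\Dke{k+1}$ together with the added item $\ta$, or alternatively just a copy of the parent). All $\kappa + 2 = O(\log \freps)$ sets $\Dk$ must be retained so that backtracking remains possible once $\Dke{0}$ has been reached, contributing $O\klammer{\freps \log^2 \freps}$ space. Together with the storage of $\tI$ from Theorem \ref{thm:construction_tI_aeffc_running-time}, this stays within the claimed $O\klammer{\freps \log^2 \freps}$ bound. The main obstacle I expect is justifying that dominated-tuple removal can indeed be done in linear time per $\Dk$; the $\xi$-indexing is precisely what enables this, so I would make that argument explicit rather than defaulting to a naive quadratic scan. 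Otherwise the proof reduces to bookkeeping using the cardinality bounds established in previous sections.
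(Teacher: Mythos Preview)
Your proposal is correct and follows essentially the same approach as the paper: both bound $|\Dk|$ by the number $O(\xi_0) = O(\freps\log\freps)$ of profit sub-intervals, multiply by $|\tIk| = O(\freps\log\freps)$ from Theorem~\ref{thm:construction_tI_aeffc_running-time}, and sum over the $O(\kappa)$ outer iterations; the paper likewise observes that dominated-tuple removal is linear because the tuples are stored sorted by their $\xi$-index. Your explicit description of the scan with a running minimum is a bit more concrete than the paper's citation of prior work, but the argument is the same.
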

\begin{proof}
Let us first bound the space complexity. The profit interval $[\frac{1}{4} P_0, 2 P_0]$ is partitioned into $\Ohs{\xi_0}$ intervals $\tLkx$. The set $\Dk$ saves at most one tuple with the corresponding backtracking information for every $\tLkx$ or the information that a tuple does not exist. Thus, the space needed for all $\Dk$ and the corresponding backtracking data is in $\Ohs{\kappa \cdot \xi_0} = \Ohs{\kappa \cdot \klammers{\kappa \zke{\kappa}}} =\Ohs{\log(\freps) \cdot (\log(\freps) \freps )} = \Ohs{\freps \log^2\freps}$. All other information of the algorithm is only temporarily saved and needs $\Oh{1}$.

The loops dominate the running time. Apart from removing the dominated tuples, they need in total 
\begin{IEEEeqnarray*}{rCl}
	\Oh{\kappa \cdot \klammer{\xi_0 + \xi_0 \cdot \betrs{\tIk} + \betrs{\tIk}}} &\stackrel{\text{Thm.~\ref{thm:construction_tI_aeffc_running-time}}}{=}& \Oh{\log\klammer{\freps} \klammer{ \freps \log\klammer{\freps} \cdot \freps \log\klammer{\freps} } } \\
	&=& \Oh{\frepse{2} \log^3 \freps} \enspace.
\end{IEEEeqnarray*}
As stated in \cite{Lawler1979} and \cite[Lemma 5]{Jansen2013}, non-dominated tuples $(p,s,k)$ can be removed in linear time in the number of tuples if the tuples are different and sorted by profit. This is the case because every tuple in $\Dk$ is stored in an array sorted according to the corresponding $\xi$. The total time  to remove the dominated tuples from all $\Dk$ is therefore in $\Ohs{\kappa \cdot \xi_0} = \Ohs{\freps \log^2\freps}$, which is dominated by the overall running time.
\end{proof}
\section{The Algorithm}
We can now put together the entire approximation algorithm.
\begin{algorithm}
\KwIn{Item set $I$}
\KwOut{Profit $P$, solution set $J$}
Determine $P_0$ and define $T, K$\;
Partition the items into $I_L$ and $I_S$ and find $\aeff$\;
\If{Item $a$ with $p(a) = 2 P_0$ found during the partitioning}
{\Return{$2 P_0$, $\menges{a}$}\;}
Reduce $I_L$ to $\ILred$ with Algorithm \ref{alg:find_a-k-g}\;
Construct $\tI$ with Algorithm \ref{alg:construction_tIk} and the item $\aeffc$\;
\If{$p(\taege{\kappa}{0}) = P_0$ and $s(\taege{\kappa}{0}) \leq \frac{c}{2}$}
{ Recursively undo the gluing of $\taege{\kappa}{0}$ to get the item set $J'$. Let $J$ be the set consisting of two copies of every item in $J'$\;
  \Return{$2 P_0$, $J$} \;}
Construct with Algorithm \ref{alg:approx_dyn_prog} the tuple sets $\Dke{\kappa+1}, \ldots, \Dke{0}$\;
Find $(p,s,0) \in \Dke{0}$ such that $P := p + \OPTvv{\menges{\aeff}}{c-s} = \max_{(p',s',0) \in \Dke{0}} p' + \OPTvv{\menges{\aeff}}{c-s'}$\;
Backtrack the tuple $(p,s,0)$ to find the corresponding structured solution with a lower bound $J'\subset \tI \cup \menges{\aeffc}$\;
Recursively undo the gluing of all $\ta \in J'$ and add these items to the solution set $J$\;
Add the items of $\OPTvvs{\menges{\aeff}}{c-s}$ to $J$\;
\Return{$P$, $J$ \;}
\caption{The complete algorithm}\label{alg:main}
\end{algorithm}
\begin{theorem}
Algorithm \ref{alg:main} finds a solution of value at least $(1-\eps) \OPT(I)$.
\end{theorem}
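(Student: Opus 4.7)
The plan is to verify the approximation guarantee separately for each of the three return points of Algorithm \ref{alg:main}. The first return handles the case in which some $a \in I$ has $p(a) = 2 P_0$; by Theorem \ref{thm:1-2_approximation_P_0} we have $\OPT(I) \le 2 P_0$, so $\{a\}$ is optimal. The second return happens when $\taege{\kappa}{0}$ has profit $P_0$ and size at most $c/2$; taking two copies is feasible and yields profit $2 P_0 \ge \OPT(I)$, and by Remark \ref{remark:items_takg_represent_original_items} undoing the gluing produces a UKP-feasible multiset over the original items.

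The third return is the interesting case. The plan is to chain together Corollary \ref{corollary:existence_tuple_Dke0}, which says the dynamic program preserves the structured optimum up to a factor $(1-\tfrac{\eps}{4}\fraclogeps)^{\kappa+1}$, with Theorem \ref{thm:structured_solution_exists}, which bounds the structured optimum plus the greedy completion by copies of $\aeff$ against $\OPT(I)$. Concretely, I would pick the volume $v$ guaranteed by Theorem \ref{thm:structured_solution_exists}, apply Corollary \ref{corollary:existence_tuple_Dke0} to obtain a tuple $(p,s,0) \in \Dke{0}$ with $s \le v$ and $p \ge (1-\tfrac{\eps}{4}\fraclogeps)^{\kappa+1} \OPTSvs{\tI \cup \menges{\aeffc}}$, and observe that $s \le v$ implies $\OPTvvs{\menges{\aeff}}{c-s} \ge \OPTvvs{\menges{\aeff}}{c-v}$. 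Since Algorithm \ref{alg:main} selects the tuple maximizing $p' + \OPTvvs{\menges{\aeff}}{c-s'}$ over $\Dke{0}$, this specific tuple already lower-bounds the returned value $P$.

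The remaining step is purely arithmetic. The key point is to factor out $(1-\tfrac{\eps}{4}\fraclogeps)^{\kappa+1}$ from the combined bound (which is valid since this factor is at most $1$), then invoke Theorem \ref{thm:structured_solution_exists} to replace the bracket by $(1-\tfrac{\eps}{4}\fraclogeps)^{\kappa+1}\OPT(I) - T$. Using $\fraclogeps = \frac{1}{\kappa+1}$ together with Bernoulli's inequality $(1-\delta)^m \ge 1 - m \delta$ for $m = 2(\kappa+1)$ and $\delta = \tfrac{\eps}{4(\kappa+1)}$ gives $(1-\tfrac{\eps}{4}\fraclogeps)^{2(\kappa+1)} \ge 1 - \tfrac{\eps}{2}$. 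Combining with $T = \tfrac{\eps}{2} P_0 \le \tfrac{\eps}{2} \OPT(I)$ (by Theorem \ref{thm:1-2_approximation_P_0}) yields $P \ge (1-\eps)\OPT(I)$.

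The main obstacle, to the extent there is one, is bookkeeping the factors correctly: the approximation error of the dynamic program must be applied only to the large-item contribution $\OPTSvs{\tI\cup\menges{\aeffc}}$, while the $\OPTvvs{\menges{\aeff}}{c-v}$ term is merely shrunk so that both contributions share the same factor, which is what makes Theorem \ref{thm:structured_solution_exists} applicable. A secondary but easy verification is that the reconstructed set $J$ is UKP-feasible and realizes the value $P$, which follows because gluing is performed only when the combined size does not exceed $c$ and because undoing gluing preserves total profit and size.
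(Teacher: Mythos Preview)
Your proposal is correct and follows essentially the same approach as the paper's proof: both handle the two special return points by noting that a profit of $2P_0$ is optimal, and for the main case both pick the volume $v$ from Theorem~\ref{thm:structured_solution_exists}, invoke Corollary~\ref{corollary:existence_tuple_Dke0} to obtain a tuple $(p,s,0)$ with $s\le v$, factor the $(1-\tfrac{\eps}{4}\fraclogeps)^{\kappa+1}$ through the sum, apply Theorem~\ref{thm:structured_solution_exists}, and finish with Bernoulli's inequality together with $T=\tfrac{\eps}{2}P_0\le\tfrac{\eps}{2}\OPT(I)$. The arithmetic and the feasibility remark about ungluing match the paper's argument.
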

\begin{proof}
The algorithm returns a feasible solution: $(p,s,0)$ represents an item set of size $s$. If items $\ta \in \tI$ derived from gluing are part of the solution, their ungluing does not change the total size nor the total profit (see Remark \ref{remark:items_takg_represent_original_items}).

We prove the solution quality. First, the algorithm considers the two special cases listed at the beginning of Section \ref{sec:dynamic_programming}. Each of them returns a solution of profit $2 P_0$ so that $\OPT(I) = 2 P_0$ (see Theorem \ref{thm:1-2_approximation_P_0} and Lemma \ref{lemma:estimates_I-L_to_tI}). If the special cases do not yield a solution, we are in the third case. Let $v$ be the volume from Theorem \ref{thm:structured_solution_exists}. % states that there is one $0 \leq v \leq c$ such that
%\begin{IEEEeqnarray*}{rl}
	%\OPTSv{\tI \cup \menge{\aeffc}} + & \OPTvv{\menge{\aeff}}{c-v} \\
	%&\geq  \klammer{1-\frac{\eps}{4}\fraclogeps}^{\kappa+1} OPT(I) - T\enspace.
%\end{IEEEeqnarray*}
Corollary \ref{corollary:existence_tuple_Dke0} guarantees the existence of one $(p,s,0) \in \Dke{0}$ with $s \leq v$ such that
	\[p \geq \klammer{1 - \frac{\eps}{4}\fraclogeps}^{\kappa + 1} \OPTSv{\tI \cup \menge{\aeffc}}\enspace.
\]
Moreover, we have $\OPTvvs{\menges{\aeff}}{c-s} \geq \OPTvvs{\menges{\aeff}}{c-v}$ because $c-s \geq c-v$. 
Thus, the following inequality holds for this $(p,s,0)$:
\begin{IEEEeqnarray*}{rCl}
\IEEEeqnarraymulticol{3}{l}{p + \OPTvv{\menges{\aeff}}{c-s}}\\ \quad
 &\geq& \klammer{1 - \frac{\eps}{4}\fraclogeps}^{\kappa + 1} \OPTSv{\tI \cup \menge{\aeffc}} +  \OPTvv{\menges{\aeff}}{c-v}\\
& \geq &  \klammer{1 - \frac{\eps}{4}\fraclogeps}^{\kappa + 1} \klammer{\OPTSv{\tI \cup \menge{\aeffc}} + \OPTvv{\menges{\aeff}}{c-v}}\\
&\stackrel{\text{Thm.~\ref{thm:structured_solution_exists}}}{\geq}& \klammer{1 - \frac{\eps}{4}\fraclogeps}^{2\kappa + 2} \OPT(I) - \klammer{1 - \frac{\eps}{4}\fraclogeps}^{\kappa + 1}T\\
&\geq& \klammer{1 - \frac{\eps}{4}\fraclogeps}^{2\kappa + 2} \OPT(I) - T\\
&\stackrel{\eqref{eq:definition_T}}{\geq}& \klammer{1 - \frac{\eps}{4} \efraclogeps{2\kappa+2}} \OPT(I) - \frac{1}{2} \eps P_0\\
& \geq & \klammer{1 - \frac{\eps}{4} \cdot \efraclogeps{2 \cdot \klammer{\log_2(\frac{2}{\eps}) + 1}}} \OPT(I) - \frac{1}{2} \eps \OPT(I)\\
&=& \klammer{1 - \eps} \OPT(I)\enspace.
\end{IEEEeqnarray*}
Taking the maximum over all $(p,s,0) \in \Dke{0}$ therefore yields the desired solution. Note that we have used $(1-\delta)^k \geq (1- k \cdot \delta)$ for $\delta < 1$.
\end{proof}
\begin{remark}
The total bound on the approximation ratio is mainly due to the exponent $2 \kappa + 2$, i.e.\ that we make the multiplicative error of $(1-\frac{\eps}{4}\fraclogeps)$ only $2 \kappa + 2$ times. Such an error occurs when $I$ is replaced by $\ILred$ at the beginning (Lemma \ref{lemma:solution_quality_I-L_a-eff}), in each of the $\kappa$ iterations in which $\tI$ is constructed (Theorem \ref{thm:solution_quality_tIk}), and in $\kappa+1$ of the $\kappa+2$ iterations of the dynamic program (Theorem \ref{thm:tupel-Fk_compared_to_tupel-Dk} and Corollary \ref{corollary:existence_tuple_Dke0}). The error of the dynamic program can be bounded because the structured solution with a lower bound has at least one item of profit at least $2^{\kappa-2}T$ (see the second property of Definition \ref{def:structured_solution_OPT-S} and Remark \ref{remark:dynamic_programming_error_because_of_structure}). 
\end{remark}
\begin{theorem}
The algorithm has a running time in $\Ohs{n + \frepse{2} \log^3 \freps}$ and needs space in $\Ohs{n + \freps \log^2 \freps}$.
\end{theorem}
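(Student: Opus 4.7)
The plan is to walk through Algorithm \ref{alg:main} line by line and bound the cost of each block, citing the running-time/space bounds that have been established earlier in the paper. Since all hard analysis has already been done in the previous theorems, this final theorem is essentially an aggregation argument, and the main obstacle is just being careful about the two post-processing steps (finding the best tuple and performing the backtracking with ungluing) that were not covered by any single earlier theorem.

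First I would list the cost of the preprocessing: computing $P_0$ together with $T$ and $K$ needs time $\Oh{n}$ and space $\Oh{1}$ by Theorem \ref{thm:1-2_approximation_P_0}; partitioning into $I_L, I_S$ and picking $\aeff$ costs $\Oh{n}$ time and $\Oh{n}$ space by Theorem \ref{thm:construction_I-L_and_a-eff}; testing the ``$p(a) = 2P_0$'' shortcut is done during that partition in $\Oh{1}$ extra time; reducing $I_L$ to $\ILred$ with Algorithm \ref{alg:find_a-k-g} takes $\Ohs{n + \freps \log^2 \freps}$ time and $\Ohs{\freps \log^2 \freps}$ space by Theorem \ref{thm:number_items_I-L-red_time_space_constructing_I-L-red}. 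Then building $\tI$ together with $\aeffc$ via Algorithm \ref{alg:construction_tIk} takes $\Ohs{\frepse{2} \log^3 \freps}$ time and $\Ohs{\freps \log^2 \freps}$ space by Theorem \ref{thm:construction_tI_aeffc_running-time}, and the test for the second shortcut (the item $\taege{\kappa}{0}$ of size $\leq c/2$ and profit $P_0$) is $\Oh{1}$. The dynamic-programming stage, Algorithm \ref{alg:approx_dyn_prog}, runs in $\Ohs{\frepse{2} \log^3 \freps}$ time using $\Ohs{\freps \log^2 \freps}$ space by Theorem \ref{thm:construction_Dk_running_time}.

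The remaining work is the output phase. Scanning $\Dke{0}$ to maximise $p + \OPTvvs{\menge{\aeff}}{c-s}$ requires one $\Oh{1}$ evaluation per tuple (one floor and a multiplication using the precomputed $p(\aeff)$ and $s(\aeff)$), and $|\Dke{0}| = \Ohs{\xi_0} = \Ohs{\freps \log \freps}$, so this costs $\Ohs{\freps \log \freps}$. Backtracking the selected tuple $(p,s,0)$ traverses at most one entry per level $k = 0, \ldots, \kappa + 1$, i.e.\ $\Oh{\log \freps}$ items of $\tI \cup \menge{\aeffc}$, and recursively ungluing each $\takg$ via the stored backtracking pointers produces all underlying items in $\ILred$ in time linear in the size of the emitted solution; since one $\takg$ corresponds to at most $2^k \leq 2/\eps$ items from $\Ie{0}$, the whole ungluing contributes $\Ohs{\freps \log \freps}$ additional time and space. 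Adding the $\aeff$-copies is either stored as a multiplicity in $\Oh{1}$ or, if written out, bounded by the output size, so it does not worsen the asymptotics.

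Summing all blocks, the preprocessing contributes $\Oh{n}$, and every other block is dominated by $\Ohs{\frepse{2} \log^3 \freps}$ in time and by $\Ohs{\freps \log^2 \freps}$ in space; adding the one-off $\Oh{n}$ for storing $I$ and $I_L$ gives the claimed bounds $\Ohs{n + \frepse{2} \log^3 \freps}$ and $\Ohs{n + \freps \log^2 \freps}$. The only point that deserves care is confirming that the output phase (tuple selection and ungluing) does not secretly blow up: this is handled by combining Theorem \ref{thm:construction_tI_aeffc_running-time} (at most $\Ohs{\freps \log^2 \freps}$ backtracking entries are stored, and each glued item has $\leq 2/\eps$ descendants) with the observation that the structured solution contains at most one item per level $k$, so the total ungluing effort is $\Ohs{\freps \log \freps}$, well within the target bound.
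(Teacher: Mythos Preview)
Your proposal is correct and follows essentially the same line-by-line aggregation as the paper's own proof, citing the same earlier theorems and arriving at the same $\Ohs{\freps \log \freps}$ bound for the output phase. One small imprecision: the leaves of the ungluing recursion are items in $\ILred$ (any level), not specifically in $\Ie{0}$; but your bound $2^k \leq 2/\eps$ on the number of leaves still holds, and the paper makes the same argument via a binary-tree-of-height-$\Ohs{\kappa}$ count. You should also mention that the ungluing inside the second \texttt{if}-branch (for $\taege{\kappa}{0}$) is covered by the same $\Ohs{\freps}$ estimate, as the paper does.
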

\begin{proof}
Determining $P_0$, constructing $I_L$ and $I_S$ as well as finding $\aeff$ can all be done in time and space $\Oh{n}$ as stated in Theorems \ref{thm:1-2_approximation_P_0} and \ref{thm:construction_I-L_and_a-eff}. The definition of $T$ and $K$ in time and space $\Oh{1}$ is obvious. It is also clear that an item $p(a) = 2 P_0$ can directly be found during the construction of $I_L$ such that the first if-condition does not influence the asymptotic running time.

Algorithm \ref{alg:find_a-k-g} returns the set $\ILred$ in time $\Ohs{n + \freps \log^2 \freps}$ and space $\Ohs{\freps \log^2 \freps}$ (see Theorem \ref{thm:number_items_I-L-red_time_space_constructing_I-L-red}).

Algorithm \ref{alg:construction_tIk} constructs the $\tIk$ and $\tI$ in time $\Ohs{\frepse{2} \log^3 \freps}$ and space $\Ohs{\freps \log^2\freps}$ as explained in Theorem \ref{thm:construction_tI_aeffc_running-time}, which clearly dominates the construction of $\aeffc$ in $\Oh{1}$.

The second if-condition can be checked in $\Oh{1}$. The running time for undoing the gluing will be determined at the end of the proof.

Algorithm \ref{alg:approx_dyn_prog} constructs the sets $\Dk$ in time $\Ohs{\frepse{2} \log^3 \freps}$ and space $\Ohs{\freps \log^2 \freps}$ (see Theorem \ref{thm:construction_Dk_running_time}). For one tuple $(p',s',0)$, the corresponding $\OPTvv{\menges{\aeff}}{c-s'}$ can be found in $\Oh{1}$ by computing $\unterks{\frac{c-s'}{s(\aeff)}} \cdot p(\aeff)$. Thus, finding the best tuple $(p,s,0)$ can be done in $\Ohs{|\Dke{0}|} = \Ohs{\xi_0} = \Ohs{\freps \log \freps}$. Since only the currently best tuple $(p,s,0)$ has to be saved, the space needed is in $\Oh{1}$.

The backtracking for the tuple $(p,s,0)$ needs time in $\Ohs{\kappa} = \Ohs{\log \freps}$: the backtracking information $\backtrack{(p',s',k)}$ for $k = 0,\ldots,\kappa+1$ states whether the tuple was formed by adding an item $\ta \in \tIk$ and with which tuple $(p'',s'',k+1)$ to continue. Hence, the item set $J'$ also has at most $\Ohs{\log \freps}$ items in $\tI \cup \menges{\aeffc}$, which bounds the storage space needed.

To conclude, the time and space for the ungluing still have to be bounded. Consider one item $\ta \in \tI$. The backtracking information $\backtrack{(\ta)}$ returns two items $(\ba_1,\ba_2)$ (with $\ba_1, \ba_2 \in \ILred \cup \tI$) on which the backtracking can be recursively applied. The recursive ungluing of the items can be represented as a binary tree where the root is the original item $\ta$ and the (two) children of each node are the items $(\ba',\ba'')$ returned by the backtracking information. The leaves of the tree are the original items in $\ILred$. This binary tree obviously has a height in $\Ohs{\kappa}$ because the children $(\ba',\ba'')$ for one $\ba \in \tIk$ are in $\tIe{k-1} \cup \ILred$. A binary tree of height $\Ohs{\kappa} = \Ohs{\log \freps}$ has at most $\Ohs{\freps}$ nodes. The backtracking or ungluing of $\ta$ can therefore be done in time and space $\Ohs{\freps}$, which also includes saving the items $\ba \in \ILred$ of which $\ta$ is composed. Since $J'$ has $\Ohs{\log \freps}$ items, the original items $\ILred$ of the approximate solution can be found in time and space $\Ohs{\freps \log \freps}$. This also dominates the time to undo the gluing of $\taege{\kappa}{0}$ should the body of the second if-condition be executed.

Similar to above, the number of items $\aeff$ for $\OPTvv{\menges{\aeff}}{c-s}$ can be 
found in $\Oh{1}$. To sum up, Algorithm \ref{alg:main} has the stated running time 
and space complexity.
\end{proof}
\section{Concluding Remarks}
The most important steps in this algorithm are the creation of the item set $
\tI$ by gluing and the introduction of $\aeffc$. This guarantees the 
existence of an approximate structured solution with a lower bound (see Definition \ref{def:structured_solution_OPT-S}). Therefore, the approximate dynamic 
program has to store less tuples $(p,s,k)$ than in the case without the structure.

We \cite{Kraft2015} have extended our algorithm to the Unbounded Knapsack Profit with 
Inversely Proportional Profits (UKPIP) introduced in \cite{Jansen2013}. Here, 
several knapsack sizes $0 < c_1 < \ldots < c_M = 1$ are given, and the profit 
of an item counts as $\nicefrac{p_j}{c_l}$ if packed in $c_l$. The goal is to 
find the best knapsack size and corresponding solution of maximum profit. 
UKPIP is used for column generation in our AFPTAS for Variable-Sized Bin 
Packing \cite{Jansen2012} where several bin sizes are given and the goal is 
to minimize the total volume of the bins used. The faster FPTAS for UKPIP 
yields a faster AFPTAS for Variable-Sized Bin Packing \cite{Kraft2015}. 

There are interesting open questions. As stated in Subsection \ref{subsec:our_result}, the space complexity is a more serious bottleneck than the running time. Recently, Lokshtanov and Nederlof \cite{Lokshtanov2010} showed that the 0-1 Knapsack Problem and the Subset Sum Problem have a pseudo-polynomial time and only polynomial space algorithm. Subset Sum is a special case of the Knapsack Problem where the profit of an item is equal to its size, i.e.\ $p_j = s_j$. Moreover, it was shown that Unary Subset Sum is in Logspace \cite{Kane2010,Elberfeld2010}. G{\'a}l et al.\@ \cite{Gal2014} described an FPTAS for Subset Sum whose space complexity is in $\Ohs{\freps}$, i.e.\ which does not depend on the actual input size, and whose running time is in $\Ohs{\freps n (n + \log n + \log \freps)}$. Can any of these results be further extended to improve the space complexity of an UKP FPTAS? 

Finally, it is open whether the ideas presented in this paper can be extended to the normal 0-1 KP or other KP variants as well as used for column generation of other optimization problems. The currently 
fastest known algorithm for 0-1 KP is due to Kellerer and Pferschy \cite{Kellerer1999,Kellerer2004a,Kellerer2004}. We mention in closing that by using the same approach similar improved approximation algorithms can be expected for various Packing and Scheduling Problems, e.g.\ for Bin Covering, Bin Packing with Cardinality Constraints, Scheduling Multiprocessor Tasks and Resource-constrained Scheduling.

\printbibliography
This bibliography contains information from the DBLP database (\url{www.dblp.org}), which is made available under the ODC Attribution License.
\end{document}